\newcounter{ALC@tempcntr}
\theoremstyle{plain}
\newtheorem{theorem}{Theorem}
\newtheorem{proposition}{Proposition}
\newtheorem{lemma}{Lemma}
\newtheorem{definition}{Definition}
\theoremstyle{definition}
\newtheorem{example}{Example}
\theoremstyle{remark}
\newtheorem{remark}{Remark}
\newcommand{\beq}{\begin{eqnarray}}
\newcommand{\eeq}{\end{eqnarray}}
\newcommand{\field}[1]{\mathbb{#1}}
\newcommand{\F}{\field{F}}
\newfont{\bbb}{msbm10 scaled 500}
\newfont{\bb}{msbm10 scaled 1100}
\newcommand{\ZZ}{\mbox{\bb Z}}
\newcommand{\FF}{\mbox{\bb F}}
\newcommand{\cv}{{\bf c}}
\newcommand{\dv}{{\bf d}}
\newcommand{\ev}{{\bf e}}
\newcommand{\fv}{{\bf f}}
\newcommand{\mv}{{\bf m}}
\newcommand{\rv}{{\bf r}}
\newcommand{\sv}{{\bf s}}
\newcommand{\xv}{{\bf x}}
\newcommand{\Am}{{\bf A}}
\newcommand{\Cc}{{\cal C}}
\newcommand{\Dc}{{\cal D}}
\newcommand{\Hc}{{\cal H}}
\newcommand{\Ic}{{\cal I}}
\newcommand{\Kc}{{\cal K}}
\newcommand{\Mc}{{\cal M}}
\newcommand{\Pc}{{\cal P}}
\newcommand{\Rc}{{\cal R}}
\newcommand{\Sc}{{\cal S}}
\newcommand{\Uc}{{\cal U}}
\newcommand{\Xc}{{\cal X}}
\newcommand{\Zc}{{\cal Z}}
\newcommand{\remove}[1]{}
\newcommand\ff{{\mathbb F}}
\renewcommand{\arraystretch}{1.3}
\definecolor{OXO-emph}{RGB}{153,0,0}
\newcommand\floorb[1]{\left\lfloor #1 \right\rfloor}
\DeclareMathAlphabet{\mathpzc}{OT1}{pzc}{m}{it}
\theoremstyle{definition}
\theoremstyle{remark}
\newcommand{\latexe}{{\LaTeX\kern.125em2%
                      \lower.5ex\hbox{$\varepsilon$}}}
\chardef\bslash=`\\	
\def\square{\RIfM@\bgroup\else$\bgroup\aftergroup$\fi
\vcenter{\hrule\hbox{\vrule\@height.6em\kern.6em\vrule}
\hrule}\egroup}\makeatother\makeindex
\begin{document}
\sloppy

\title{Centralized Repair of Multiple Node Failures with Applications to Communication Efficient Secret Sharing}


\author{Ankit~Singh~Rawat, O.~Ozan~Koyluoglu, and~Sriram~Vishwanath
\thanks{This paper was presented in parts at IEEE Information Theory and Applications Workshop, San Diego, CA, February 2016.}
\thanks{A.~S.~Rawat is with the Computer Science Department, Carnegie Mellon University, Pittsburgh, PA 15213 USA (e-mail: {asrawat@andrew.cmu.edu}).}
\thanks{O.~O.~Koyluoglu is with the Department of Electrical and Computer Engineering, The University of Arizona, Tucson, AZ 85721 USA (e-mail: ozan@email.arizona.edu).}
\thanks{S.~Vishwanath is with the Laboratory of Informatics, Networks and Communications, Department of Electrical and Computer Engineering, The University of Texas at Austin, Austin, TX 78751 USA (e-mail: sriram@austin.utexas.edu).}
}

\allowdisplaybreaks
\maketitle



\begin{abstract}
This paper considers a distributed storage system, where multiple storage nodes can be reconstructed simultaneously at a centralized location. This centralized multi-node repair (CMR) model is a generalization of regenerating codes that allow for bandwidth-efficient repair of a single failed node. This work focuses on the trade-off between the amount of data stored and repair bandwidth in this CMR model. In particular, repair bandwidth bounds are derived for the minimum storage multi-node repair (MSMR) and the minimum bandwidth multi-node repair (MBMR) operating points. The tightness of these bounds are analyzed via code constructions. The MSMR point is characterized through codes achieving this point under functional repair for general set of CMR parameters, as well as with codes enabling exact repair for certain CMR parameters. The MBMR point, on the other hand, is characterized with exact repair codes for all CMR parameters for systems that satisfy a certain entropy accumulation property. Finally, the model proposed here is utilized for the secret sharing problem, where the codes for the multi-node repair problem is used to construct communication efficient secret sharing schemes with the property of bandwidth efficient share repair.
\end{abstract}

\begin{keywords}
Codes for distributed storage, regenerating codes, cooperative regenerating codes, centralized multi-node regeneration, communication efficient sercret sharing.
\end{keywords}



\section{Introduction}

The ability to preserve the stored information and maintain the seamless operation in the event of permanent failures and (or) transient unavailability of the storage nodes is one of the most important issues that need to be addressed while designing distributed storage systems. This gives rise to the so called `code repair' or `node repair' problem which requires a storage system to enable mechanism to regenerate (repair) the content stored on some (failed/unavailable) storage nodes with the help of the content stored on the remaining (live/available) nodes in the system. A simple replication scheme where one stores multiple copies of each data block on different nodes clearly enables the node repair as one can regenerate the data blocks stored on a node by obtaining  one of their copies from the other nodes in the system. However, replication suffers from the decreasing rate as one increases the replication factor in order to enhance the resilience of the system. This motives the use of erasure codes as they efficiently trade-off the storage space for the ability to tolerate failure/unavailability of storage nodes. However, the better utilization of the storage space should also be accompanied by a resource-efficient node repair process and efficiency of the node repair becomes a yardstick for implementing one erasure code over another. 

Towards this, Dimakis et al. propose repair bandwidth, the amount of data downloaded from the contacted nodes during the repair of a single node, as a measure of the efficiency of the repair process in \cite{dimakis}. Considering $n$ storage nodes where any set of $k$ nodes are sufficient to reconstruct the entire information, Dimakis et al. further characterize an information-theoretic trade-off among the storage space vs. the repair bandwidth for such codes. The codes which attain any point on this trade-off are referred to as regenerating codes. Over the past few years, the problem of designing regenerating codes has fueled numerous research efforts which have resulted into the constructions presented in \cite{dimakis, RSK11, zigzag13, PapDimCad_hadamard, SAK15} and the references therein.

In this paper, we explore the problem of enabling bandwidth efficient repair of multiple nodes in a centralized manner. In particular, we consider a setting where one requires the content of any $k$ out of $n$ nodes in the system to be sufficient to reconstruct the entire information (as a parameter for the worst case fault-tolerance of the system). As for the centralized repair process, we consider a framework where the repair of $t \geq 1$ node failures is performed by contacting any $d$ out of the $n - t$ remaining storage nodes. We also assume that $\beta$ amount of data from each of the $d$ contacted nodes are downloaded. We aim to characterize the storage vs. repair-bandwidth trade-off under this centralized multi-node repair (CMR) framework.

We believe that this framework is more suitable for the setting of large scale storage systems where there is a need to perform repairs at a central location. Our CMR model is perhaps useful for the following scenarios: \noindent \textbf{a) Architectural and implementation related issues:} Architectural constraints could make it more efficient to regenerate the content in a centralized manner. For instance, in a rack-based node placement architecture, a top-of-the-rack (TOR) switch failure would imply failure of nodes in the corresponding rack to be unaccessible, and regenerating entire content of the failed rack on a per-node basis, i.e., independently one by one, would be less efficient as compared to regenerating the content at a central location, e.g., at a leader node in that rack.
\noindent \textbf{b) Threshold-based data maintenance:} These schemes regenerate servers after a threshold number of them fail. After regenerating the content stored on the failed nodes, the administrator can recruit $t$ newcomers as replacements of the failed nodes and re-distribute the data to the newcomers in order to restore the state of the system prior to the failures.
\noindent \textbf{c) Availability:} In the event of transient unavailability of the $t$ storage nodes, the centralized repair process allows the user to get access the content stored on the unavailable nodes in a bandwidth efficient manner.

\subsection{Related work} 
We note that the repair of multiple nodes in a bandwidth-efficient manner has previously been  considered under the cooperative repair model introduced in \cite{ShumHu, Kermarrec:Repairing11}. There are two major differences between the cooperative and centralized repair frameworks: a) Under cooperative repair framework~\cite{ShumHu, Kermarrec:Repairing11}, all $t$ newcomer nodes are not constrained to contact the same set of $d$ out of $n-t$ surviving nodes. The framework allows each newcomer to contact any $d$ surviving nodes independent of the nodes contacted by other $t-1$ newcomers. b) Under cooperative repair framework, after downloading data from the surviving nodes, the newcomers exchange certain amount of data among themselves. On the other hand, since a centralized entity (e.g., the administrator or a master server node) has access to all the downloaded information, such information exchange is not required in the centralized repair model. Our hope is that removing the additional restriction imposed by the cooperative repair framework will enable designing codes for a broader range of system parameters. 

The problem of centralized bandwidth-efficient repair of multiple node failures in a DSS employing has previously been considered by Cadambe et al.~\cite{CadambeT13}. However, they restrict themselves to only MDS codes and they show existence of such codes only in the asymptotic regime where node size (amount of data stored on a node) tends to infinity.

In addition, locality, the number of nodes contacted during repair of a single node, is another measure of node repair efficiency which have been extensively studied in the literature~\cite{Gopalan12}. Various minimum distance bounds and constructions achieving trade-offs are presented in \cite{Gopalan12, RKSV12, KPLK12, TamoBarg} and the references therein. In particular, recent works \cite{RMV15, KumarTwo,Song2015} have studied locality problem with multiple node repairs, which is a model relevant to the framework studied in this paper.

Finally, in a recent work \cite{HLKB15}, Huang et al. proposed a model for {\em communication efficient secret sharing}, where the system stores a secret over $n$ nodes (shares) with the property that accessing to any $z$ shares does not reveal any information about the secret, and accessing to any $d$ shares does reveal the secret. The framework \cite{HLKB15} is similar to that of \cite{dimakis} in the sense that one contacts to more than enough number of nodes (and download a partial  data from each) in order to reduce the total amount of bits downloaded (to reveal secret in the former, and to repair a node in the latter). Given this setup, \cite{HLKB15} provides a bound on required amount of communication to reconstruct the secret, constructs explicit coding schemes for certain parameter regimes achieving the stated bound, and shows an existence result for general set of parameters. More recently, \cite{BR15} focuses on the same model and proposes codes that can achieve the bound provided in \cite{HLKB15} for general set of parameters. A separate body of work \cite{PRR11, SRK_globecom11, RKSV12, HPX15a, KRV12, HPX15b} considers secure regenerating codes, where eavesdropper accessing to a subset of nodes in the system does not get any information about the stored data in the system. These works essentially focus on characterizing the maximum amount of secret bits that can be stored within a system that employs a given regenerating code (e.g., MSR/MBR). In this sense, these works consider a storage of data that is composed of both public (without security constraints) and private (with security constraints) information, and a data collector connects to a predefined number of nodes to recover both types of information. Whereas, in \cite{HLKB15}, only the reconstruction of the private information is the concern. In addition to this key difference, the eavesdropper models in secure regenerating code papers also include eavesdroppers that can observe the data transferred during node repairs\footnote{This eavesdropping model is important for non-MBR codes, as for MBR codes, the amount of downloaded content for a node repair is same as the data stored in the node.}, whereas the framework in \cite{HLKB15} does not consider repair problem. We note that regenerating coding schemes which are secure against such eavesdroppers are presented in \cite{PRR11, SRK_globecom11, RKSV12, HPX15a} and references therein. And, the problem of designing secure cooperative regenerating codes is explored in \cite{KRV12, HPX15b}.

\subsection{Contributions} 
The results of this work can be summarized as follows.
\begin{itemize}
\item We develop general repair bandwidth bounds for the CMR model at minimum per-node storage multi-node repair and minimum bandwidth multi-node repair regimes, referred to as MSMR and MBMR operating points respectively. 
\item We investigate tightness of the derived bounds with appropriate code constructions, and characterize the fundamental limits of the CMR model. In particular, for the MSMR scenario, the fundamental limit is characterized utilizing functional repair for all parameters. For special cases, explicit constructions that achieve the stated bound are also provided. These constructions are based on cooperative regenerating codes with minimum per-node storage (MSCR) codes as well as Zigzag codes. For the former set of codes, we show a result that any MSCR code can be utilized as MSMR code achieving the stated bounds. For the latter case, we show that multiple nodes can be repaired in Zigzag codes, and this proposed repair process is bandwidth-wise optimal, achieving the derived bound in this paper.
\item For the MBMR scenario, we define minimum repair bandwidth as the property of having amount of downloaded data matching to the entropy of $t$ nodes. For this setup, the fundamental limit is characterized for systems having a certain entropy accumulation property.  In addition, we obtain a general mapping from minimum bandwidth cooperative regenerating (MBCR) codes  to MBMR codes, and, utilizing MBCR with a certain entropy accumulation property, we show achievability of the stated bounds, characterizing the MBMR operating point in this special entropy accumulation case.
\item Finally, we focus on the secret sharing problem, and show that the codes for the multi-node repair problem can be transformed into communication efficient secret sharing schemes that posses not only the reliability (for multi-node repairs) but also the security properties. We propose a secret sharing mechanism with repairable shares that have the highest possible repair bandwidth-efficiency in the multi-node failure setup. Adversarial attack setup is considered to provide secrecy. 
\end{itemize}

\section{Centralized multi-node repair model}

We introduce a new model for simultaneous repair of multiple node failures in a distributed storage system (DSS), namely {\em centralized multi-node repair (CMR) model}. Consider an $(n, k)$-DSS, {\em i.e.}, the system comprises $n$ storage nodes and the content stored on any $k$ nodes is sufficient to reconstruct the information stored on the system. For an $(n, k)$-DSS, under $(d, t)$-CMR model, any set of $t$ failed nodes in the system can be repaired by downloading data from any set of $d$ out of $n - t$ surviving nodes. Let $\alpha$ denote the size of each node (over a finite field $\FF$) and $\beta$ denote the amount of data downloaded from each of the contacted $d$ nodes under the $(d, t)$-CMR model. In order to denote all the relevant system parameters, we also expand the notation for the CMR model as $(n, k, d, t, \alpha, \gamma)$-CMR model or $(d, t, \alpha, \gamma)$-CMR model. After downloading $\gamma = d\beta$ symbols from the contacted nodes, the content stored on all $t$ failed nodes is recovered simultaneously in a centralized manner\footnote{The CMR model also allow for the distributed/parallel repair of all the $t$ failed nodes by $t$ newcomers independently. However, it is assumed that each of the $t$ newcomers have an access to all the $\gamma$ downloaded symbols.}.

\section{A file size bound for the CMR model}
\label{sec:gen}

In this section, we initiate the study of the trade-off between the per-node storage $\alpha$ and repair bandwidth $\gamma$ for the CMR model. We first provide a file size bound for the CMR model.

Let the system store a uniformly distributed file $\fv$ of size $|\fv|=\Mc$ (over a finite field $\F$). Consider the case when the nodes indexed by a set $\Kc \subseteq [n]$ such that $|\Kc|=k$ are used to reconstruct the file $\fv$. Further, assume that this set of nodes are partitioned into $g$ number of distinct subsets $\Sc_i$ with $|\Sc_i|=n_i\leq t$ such that $\sum\limits_{i=1}^{g} n_i=k$. We have the following bound.

\begin{lemma}\label{thm:FileSizeBound}
The system parameters necessarily satisfy
\begin{equation}
\Mc 
\leq \sum\limits_{i=1}^g \min\Big\{n_i\alpha, \big(d-\sum\limits_{j=1}^{i-1}n_j\big)\beta\Big\}. \label{eq:bound}
\end{equation}
\end{lemma}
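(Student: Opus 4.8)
The plan is to adapt the classical cut-set / information-flow argument of Dimakis et al.\ to the centralized multi-node setting. Consider the information flow graph whose sinks are data collectors that contact the $k$ nodes in $\Kc$, and whose relevant intermediate structure reflects the fact that the $k$ nodes were themselves regenerated in $g$ consecutive batches $\Sc_1,\dots,\Sc_g$, where batch $\Sc_i$ is repaired \emph{after} all of $\Sc_1,\dots,\Sc_{i-1}$ have been repaired, by contacting $d$ of the surviving nodes — and, crucially, we may assume that among the $d$ contacted nodes for batch $\Sc_i$ are precisely the $\sum_{j<i} n_j$ previously-repaired nodes of $\Sc_1,\dots,\Sc_{i-1}$ (this is the adversarial choice that makes the bound tight). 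First I would set up the random variables: let $W_i$ denote the content of the nodes in $\Sc_i$ (so $H(W_i)\le n_i\alpha$), and let $\mathrm{IN}_i$ denote the totality of symbols downloaded to regenerate $\Sc_i$, which has entropy at most $d\beta = \gamma$, but of which the portion coming from $\Sc_1,\dots,\Sc_{i-1}$ is already determined by $W_1,\dots,W_{i-1}$.

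The key chain-rule computation is the following. Since $\fv$ is a deterministic function of $(W_1,\dots,W_g)$,
\begin{equation}
\Mc = H(\fv) \le H(W_1,\dots,W_g) = \sum_{i=1}^g H(W_i \mid W_1,\dots,W_{i-1}).
\end{equation}
Now I would bound each term $H(W_i \mid W_1,\dots,W_{i-1})$ in two ways. On one hand it is at most $H(W_i) \le n_i\alpha$. On the other hand, $W_i$ is a function of $\mathrm{IN}_i$, and $\mathrm{IN}_i$ can be split into the symbols downloaded from the $\sum_{j<i} n_j$ already-regenerated nodes (a function of $W_1,\dots,W_{i-1}$) and the symbols downloaded from the remaining $d - \sum_{j<i} n_j$ nodes, whose entropy is at most $(d - \sum_{j<i} n_j)\beta$. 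Hence
\begin{equation}
H(W_i \mid W_1,\dots,W_{i-1}) \le H\Big(\mathrm{IN}_i \mid W_1,\dots,W_{i-1}\Big) \le \Big(d - \sum_{j=1}^{i-1} n_j\Big)\beta.
\end{equation}
Combining the two bounds term-by-term gives $H(W_i\mid W_1,\dots,W_{i-1}) \le \min\{n_i\alpha, (d-\sum_{j<i}n_j)\beta\}$, and summing over $i$ yields \eqref{eq:bound}.

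I expect the main obstacle to be the careful justification that the $i$-th batch $\Sc_i$ can indeed be taken to be regenerated using a contact set that includes all previously regenerated nodes, i.e.\ that this is a legitimate instance of the $(d,t)$-CMR repair process and that the entropy bound must hold for this particular sequence of repairs. One must argue that since the code must support repair of \emph{every} set of $t$ failed nodes from \emph{every} choice of $d$ surviving nodes, we are free to construct a worst-case sequence of repairs: start from a valid state, fail $\Sc_1$, regenerate it from some $d$ nodes; then fail $\Sc_2$, regenerate it from $d$ nodes chosen to include the just-repaired $\Sc_1$; and so on. A secondary technical point is handling the edge cases — when $n_i < t$ one should pad $\Sc_i$ to a set of exactly $t$ failed nodes whose regenerated content still has entropy at most $n_i\alpha$ for the coordinates of interest, or simply note that repairing a superset only adds information — and ensuring the first term ($i=1$, empty sum) reduces correctly to $\min\{n_1\alpha, d\beta\}$. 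Everything else is the standard chain rule and the data-processing inequality, which I would invoke without belaboring.
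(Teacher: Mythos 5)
Your proof is correct and essentially identical to the paper's: both expand $\Mc\le H(\xv_{\Sc_1},\ldots,\xv_{\Sc_g})$ by the chain rule, bound each conditional term by $\min\{n_i\alpha,\ (d-\sum_{j<i}n_j)\beta\}$ by choosing the $d$ helper nodes for $\Sc_i$ to include all of $\Sc_1,\ldots,\Sc_{i-1}$ (possible since $k\le d$) and invoking the repair guarantee together with a data-processing step. The only cosmetic difference is that the paper applies the repair condition $H(\xv_{\Sc}\mid\dv_{\Hc})=0$ as a static property of the code, so the ``worst-case repair sequence'' you flag as the main obstacle is not actually needed; otherwise your intermediate step $H(W_i\mid W_{<i})\le H(\mathrm{IN}_i\mid W_{<i})$ is just a rephrasing of the paper's mutual-information identity, and your padding remark for $n_i<t$ matches the paper's use of $|\Sc|\le t$.
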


\begin{proof}
Denoting the symbols stored on the nodes indexed by the set $\Sc$ by $\xv_{\Sc}$, we  have 
\begin{align}
\Mc &= H(\fv) \stackrel{(a)}{=} H(\fv)-H(\fv|\xv_{\Kc}) = I(\xv_{\Kc};\fv) \leq H(\xv_{\Kc}) \\
&\stackrel{(b)}{=}\sum\limits_{i=1}^g H(\xv_{\Sc_i}|\xv_{\Sc_1:\Sc_{i-1}}) \\
&\stackrel{(c)}{\leq} \sum\limits_{i=1}^g \min\left\{H(\xv_{\Sc_i}), \big(d-\sum\limits_{j=1}^{i-1}n_j\big)\beta\right\} \label{eq:bound_h}\\
&\stackrel{(d)}{\leq} \sum\limits_{i=1}^g \min\left\{n_i\alpha, \big(d-\sum\limits_{j=1}^{i-1}n_j\big)\beta\right\},
\end{align}
where (a) is due to recoverability constraint $H(\fv|\xv_{\Kc})=0$ as $|\Kc|=k$, 
(b) is due to $\Kc=\cup_{i=1}^g \Sc_i$,
(c) \& (d) are due to the following bounds for each term in the sum: $H(\xv_{\Sc_i}|\xv_{\Sc_1:\Sc_{i-1}})\leq H(\xv_{\Sc_i})\leq n_i\alpha$, and
\begin{align*}
H(\xv_{\Sc_i}|\xv_{\Sc_1:\Sc_{i-1}})
&\stackrel{(e)}{=} H(\xv_{\Sc_i}|\xv_{\Sc_1:\Sc_{i-1}}) \nonumber\\
&\quad - H(\xv_{\Sc_i}|\xv_{\Sc_1:\Sc_{i-1}}, \dv_{\Hc_i-\Sc_{1}:\Sc_{i-1}}) \\
&= I(\dv_{\Hc_i-\Sc_1:\Sc_{i-1}}; \xv_{\Sc_i} | \xv_{\Sc_1:\Sc_{i-1}}) \\
&\leq H(\dv_{\Hc_i-\Sc_1:\Sc_{i-1}}) \leq \left(d-\sum\limits_{j=1}^{i-1} n_j \right) \beta
\end{align*}
where set of helper nodes to regenerate symbols in $\Sc_i$ is denoted as $\Hc_i$, this set of $d$ nodes is constructed by using the sets $\Sc_{1} \cdots \Sc_{i-1}$ and additional nodes not belonging to these sets (this is possible as $\sum\limits_{i=1}^{g} n_i=k\leq d$), downloaded symbols from these additional nodes are denoted as $\dv_{\Hc_i-\Sc_1:\Sc_{i-1}}$ with $|\Hc_i-\Sc_1:\Sc_{i-1}|=d-\sum\limits_j^{i-1} n_j$, and (e) follows as $H(\xv_{\Sc_i}|\xv_{\Sc_1:\Sc_{i-1}}, \dv_{\Hc_i-\Sc_{1:i-1}})=0$ as $H(\xv_{\Sc}|\dv_{\Hc})$=0 for any $\Sc$ such that $|\Sc|\leq t$ and any $\Hc$ such that $|\Hc|=d$.
\end{proof}

Given the bound in Proposition~\ref{thm:FileSizeBound}, we differentiate between two operating regimes of the system: {\em Minimum storage multi-node regeneration (MSMR)} and {\em minimum bandwidth multi-node regeneration (MBMR)}. The MSMR point corresponds to having an MDS code which requires that $\alpha=\Mc/k$. Codes that attain minimum possible repair bandwidth under this constraint, i.e., $\alpha=\Mc/k$,   are referred to as MSMR codes. On the other hand, the MBMR point restricts that $H(\xv_{\Sc})  = \gamma = d\beta$ for every $\Sc \subseteq [n]$ such that $|\Sc| = t$, i.e., the amount of data downloaded during the centralized repair of $t$ node failures is equal to the amount of information stored on the lost $t$ nodes. MBMR codes achieve the minimum possible repair bandwidth under this restriction, i.e., $H(\xv_{\Sc})  = \gamma = d\beta$. In the following, we focus on the problem of characterizing these two operating points of the CMR model.

\section{MSMR Codes}
\label{sec:msmr_codes}

We first utilize Lemma~\ref{thm:FileSizeBound} to obtain a bound on the repair bandwidth at the MSMR point, and then focus on achievability.

\subsection{Repair bandwidth bound}

\begin{proposition}
\label{prop:MSMR}
Consider an $(n, k)$-DSS that stores a file of size $\Mc$ 
and enables repair of $t$ failed nodes under a $(d, t, \alpha_{MSMR} = \frac{\Mc}{k}, \gamma)$-CMR model. Then, we have 
\begin{align}
\label{eq:MSMR}
\gamma_{MSMR} \geq \frac{\Mc d t}{k(d - k + t)}.
\end{align}
\end{proposition}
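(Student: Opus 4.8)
The plan is to specialize the file-size bound of Lemma~\ref{thm:FileSizeBound} to the MSMR point $\alpha = \Mc/k$ and then choose the partition parameters to make the resulting inequality as tight as possible, thereby converting the upper bound on $\Mc$ into a lower bound on $\gamma = d\beta$. First I would pick a convenient partition of a reconstruction set $\Kc$ of size $k$ into $g$ blocks, each of size exactly $t$ (assuming for the moment $t \mid k$; the general case is handled by allowing the last block to be smaller, or by the standard observation that the bound is a statement about real parameters so one can reduce to the divisible case). With $n_i = t$ for all $i$ and $g = k/t$, the term $\big(d - \sum_{j=1}^{i-1} n_j\big)\beta$ becomes $(d - (i-1)t)\beta$, which decreases in $i$, while $n_i \alpha = t\alpha = t\Mc/k$ is constant. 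So the $i$th term of the sum in~\eqref{eq:bound} is $\min\{t\Mc/k,\ (d-(i-1)t)\beta\}$.

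Next I would determine the crossover index: the minimum equals $t\Mc/k$ for small $i$ (where $d-(i-1)t$ is large) and equals $(d-(i-1)t)\beta$ once $i$ is large enough that $(d-(i-1)t)\beta \le t\Mc/k$. The cleanest route is not to track the crossover index explicitly but to lower-bound each $\min$ term by whichever of the two expressions is smaller and sum; equivalently, one observes $\Mc \le \sum_{i=1}^{g}\min\{t\alpha,(d-(i-1)t)\beta\}$ and rearranges. Carrying out the arithmetic — summing the arithmetic progression $(d), (d-t), (d-2t), \dots$ over the range where that branch is active, and $t\alpha$ over the complementary range — and then solving the resulting inequality $\Mc \le (\text{expression in }\Mc,\beta,d,k,t)$ for $\beta$, and multiplying by $d$, should yield exactly $\gamma = d\beta \ge \frac{\Mc d t}{k(d-k+t)}$. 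A useful sanity check along the way: at $t = 1$ this must reduce to the classical MSR repair bandwidth $\frac{\Mc d}{k(d-k+1)}$ of Dimakis et al., which it does.

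The main obstacle I anticipate is bookkeeping rather than conceptual: making the choice of partition (and, if needed, the non-divisible case $t \nmid k$) rigorous, and correctly identifying over which indices $i$ each branch of the $\min$ is active so that the summation of the arithmetic series gives precisely the denominator $d - k + t$. One must be careful that the chosen partition is \emph{admissible} in the sense required by Lemma~\ref{thm:FileSizeBound}, i.e., that $k \le d$ (so the helper sets $\Hc_i$ can be formed) and that each block has size at most $t$; both hold here. A secondary subtlety is that Lemma~\ref{thm:FileSizeBound} gives a family of bounds, one per partition, and we only need \emph{one} good partition to get a valid lower bound on $\gamma$ — so it suffices to exhibit the equal-block partition and compute, without having to argue it is the optimal partition. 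If the equal-block partition does not land exactly on the claimed constant, the fallback is to optimize the block sizes $n_i$ subject to $\sum n_i = k$, $n_i \le t$, which is a small convex-type optimization whose optimum should again be the equal-block (or as-equal-as-possible) choice.
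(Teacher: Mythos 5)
Your overall route is the paper's: specialize Lemma~\ref{thm:FileSizeBound} to a block partition of a reconstruction set, impose $\alpha=\Mc/k$, and extract a lower bound on $\beta$. Two points merit attention, one of which is a genuine gap.

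First, the arithmetic-progression computation you propose is more roundabout than needed and would require care: the crossover index $i_0$ you would sum around depends on $\beta$ itself, and if you plug a generic $i_0$ into the resulting inequality $\Mc \le i_0 t\alpha + \beta\sum_{i>i_0}(d-(i-1)t)$ you obtain a bound whose denominator is $d - \tfrac{t(i_0+g-1)}{2}$, which is strictly weaker than $d-k+t$ whenever $i_0 < g-1$; you would then have to separately argue (via the crossover's defining inequalities) that $i_0<g-1$ is infeasible. The paper sidesteps all of this with a one-line observation: since $\sum_i n_i\alpha = k\alpha = \Mc$ at the MSMR point, the chain $\Mc \le \sum_i\min\{n_i\alpha,\,[d-\sum_{j<i}n_j]\beta\} \le \sum_i n_i\alpha = \Mc$ forces equality term by term, so every $\beta$-branch satisfies $[d-\sum_{j<i}n_j]\beta \ge n_i\alpha$. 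The last block then immediately yields $(d-k+t)\beta \ge t\alpha$. No summation is involved.

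Second, and this is the real gap: when $t\nmid k$, placing the small block of size $b=k-\lfloor k/t\rfloor t$ \emph{last}, as you suggest, gives a strictly weaker bound. With the small block last, the tightest constraints produced by the equality argument are $(d-k+b)\beta \ge b\alpha$ and $(d-k+b+t)\beta \ge t\alpha$, and since $x\mapsto x/(d-k+x)$ is increasing and $b<t$, both yield $\beta$ lower bounds strictly below $t\alpha/(d-k+t)$ whenever $d>k$. The correct choice (which the paper makes) is $n_1=b$, $n_i=t$ for $i\ge2$: then the final block gives $(d-b-(a-1)t)\beta = (d-k+t)\beta \ge t\alpha$ exactly. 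Your alternative suggestion of relaxing to real-valued block sizes is also unjustified here — the partition bound is a statement about integer subsets of nodes and there is no interpolation argument offered. So the $t\nmid k$ case, as written, is not handled.
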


\begin{proof}
Let $a=\lfloor k/t \rfloor$ and $b=k-at$. We set $n_1 = b$ and $n_i=t$ for $i=2,\cdots,g = a+1$. From the bound \eqref{eq:bound}, we obtain
\begin{align}
\Mc&\leq 
\min\left\{b\alpha, d\beta\right\} + \sum\limits_{i=1}^{a} \min\left\{
t\alpha, [d-(i-1)t - b]\beta
\right\}.
\end{align}
Note that we have $\alpha=\frac{\Mc}{k}$ which implies that $d\beta \geq b\alpha$ and $$[d-(i-1)t - b]\beta \geq t\alpha, \forall i=1,\cdots,a,$$ From this, we obtain $\beta \geq \frac{b\alpha}{d}$ and $[d-(a-1)t - b]\beta\geq t\alpha$, i.e., $\beta\geq \frac{t\alpha}{[d-at - b+t]} = \frac{t\alpha}{[d-k+t]}$. This implies that 
\begin{align}
\gamma_{MSMR}&= d\beta\geq d\alpha \max\left\{\frac{t}{d- k + t},\frac{b}{d}\right\} 
\overset{(i)}{=} \frac{\Mc d t}{k(d - k + t)}, \nonumber 
\end{align}
where $(i)$ follows from the fact that we have $b<t\leq k$ and $\alpha = \frac{\Mc}{k}$.
\end{proof}
\begin{remark}
Note that the same bound is also obtained by Cadambe et al. in \cite{CadambeT13} where they consider repair of multiple failures in an MDS code.
\end{remark}
\begin{remark}\label{rem:MSMR}
A code that allows for repair of $t$ failed nodes with the parameters $\big(d, t, \alpha = \frac{\Mc}{k}, \gamma  = \frac{\Mc d t}{k(d - k + t)}\big)$-CMR is an MSMR code. 
\end{remark}

\begin{proposition}
The bound above \eqref{eq:MSMR} does not improve when helper nodes are allowed to contribute different amounts of data for regeneration of $t$ nodes.
\end{proposition}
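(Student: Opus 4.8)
The plan is to re-run the derivations of Lemma~\ref{thm:FileSizeBound} and Proposition~\ref{prop:MSMR} in the general model in which, during the repair of a failed set $F$ with $|F|=t$ from a helper set $\Hc$ with $|\Hc|=d$, node $\ell\in\Hc$ contributes $\beta^{(F,\Hc)}_\ell$ symbols, with the total download still fixed, $\sum_{\ell\in\Hc}\beta^{(F,\Hc)}_\ell=\gamma$, but its split among the $d$ helpers now arbitrary and allowed to depend on $(F,\Hc)$. First I would check that the proof of Lemma~\ref{thm:FileSizeBound} goes through essentially verbatim: for the $i$-th group $\Sc_i$, repaired from a helper set $\Hc_i\supseteq\Sc_1\cup\cdots\cup\Sc_{i-1}$, the only change is that the term $H(\dv_{\Hc_i-\Sc_1:\Sc_{i-1}})$ is now bounded by $\sum_{\ell\in\Hc_i\setminus(\Sc_1\cup\cdots\cup\Sc_{i-1})}\beta^{(\Sc_i,\Hc_i)}_\ell=\gamma-\sum_{\ell\in\Sc_1\cup\cdots\cup\Sc_{i-1}}\beta^{(\Sc_i,\Hc_i)}_\ell$ instead of by $(d-\sum_{j<i}n_j)\beta$. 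Hence, for every admissible choice of the ordered partition $\Sc_1,\dots,\Sc_g$ and of the helper sets $\Hc_1,\dots,\Hc_g$,
\begin{equation}
\Mc\ \le\ \sum_{i=1}^{g}\min\Big\{\,n_i\alpha,\ \gamma-\textstyle\sum_{\ell\in\Sc_1\cup\cdots\cup\Sc_{i-1}}\beta^{(\Sc_i,\Hc_i)}_\ell\,\Big\}.
\end{equation}

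The one place where genuinely new work is needed is that, for a single fixed configuration, this inequality can be strictly weaker than \eqref{eq:bound}: an adversarial allocation could make the nodes of $\Sc_1\cup\cdots\cup\Sc_{i-1}$ contribute almost nothing to the repair of $\Sc_i$, so that the subtracted term essentially vanishes. To neutralize this I would symmetrize, by averaging the displayed bound over a uniformly random relabeling $\pi\in S_n$ of the $n$ node indices; relabeling preserves $\alpha$, $\gamma$, and the property that any $k$ nodes reconstruct the file (hence $\Mc$), so the displayed bound holds for every relabeled code, now with $\beta^{(\Sc_i,\Hc_i)}_\ell$ replaced by $\beta^{(\pi(\Sc_i),\pi(\Hc_i))}_{\pi(\ell)}$. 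Since $\min\{\cdot,\cdot\}$ is concave, the average of the right-hand side is at most $\sum_i\min\{n_i\alpha,\ \gamma-\mathbb{E}_\pi[\sum_{\ell\in\Sc_1\cup\cdots\cup\Sc_{i-1}}\beta^{(\pi(\Sc_i),\pi(\Hc_i))}_{\pi(\ell)}]\}$; and since, for fixed $\Sc_i$ and $\Hc_i$, the triple $(\pi(\ell),\pi(\Sc_i),\pi(\Hc_i))$ is uniform over all valid (point, $t$-set, $d$-set) configurations with the point lying in the $d$-set and the $t$-set disjoint from it, the inner expectation equals $m_i\cdot\frac1d\sum_{\ell\in\Hc_i}\beta^{(\Sc_i,\Hc_i)}_\ell=m_i\gamma/d$, where $m_i=\sum_{j<i}n_j$. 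The averaged inequality therefore reads $\Mc\le\sum_{i=1}^{g}\min\{n_i\alpha,\ (d-\sum_{j<i}n_j)\gamma/d\}$, which is precisely \eqref{eq:bound} for the uniform model with per-helper download $\beta':=\gamma/d$.

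It then only remains to repeat the argument in the proof of Proposition~\ref{prop:MSMR} verbatim, with $\beta$ replaced by $\beta'=\gamma/d$: choosing $a=\lfloor k/t\rfloor$, $b=k-at$, $n_1=b$ and $n_i=t$ for $i\ge2$, and invoking $\alpha=\Mc/k$, every minimum must be attained at its first argument, the binding constraint being $(d-k+t)\beta'\ge t\alpha$, so that $\gamma=d\beta'\ge \frac{dt\alpha}{d-k+t}=\frac{\Mc d t}{k(d-k+t)}$; thus \eqref{eq:MSMR} is unchanged. I expect the principal obstacle to be the bookkeeping in the symmetrization step — making rigorous that averaging over relabelings turns the mandatory-helper subset $\Sc_1\cup\cdots\cup\Sc_{i-1}$ into a uniformly random $m_i$-subset of $\Hc_i$, so that the subtracted terms average to $m_i\gamma/d$ — which is why routing the argument through a random relabeling of the code, rather than through an ad hoc clever choice of configuration, is the clean way to proceed.
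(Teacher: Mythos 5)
Your proof is correct, but it takes a genuinely different route from the paper's. The paper argues directly and locally at the MSMR point: it fixes a single failure/helper configuration, orders the $d$ helpers by download size, notes (via the MDS property at $\alpha=\Mc/k$) that the $k-t$ largest contributors together with the $t$ failed nodes form an information set and hence the remaining $d-k+t$ \emph{smallest} contributors must jointly carry at least $t\alpha$, and then observes that since these are the smallest, their average is no larger than the global average $\gamma/d$, which amplifies the partial sum to $\gamma\ge \frac{dt\alpha}{d-k+t}$ in one step. You instead re-derive the entire file-size bound of Lemma~\ref{thm:FileSizeBound} with per-$(F,\Hc)$ per-helper splits and then symmetrize over uniformly random relabelings of the code, using concavity of $\min$ (Jensen) to push the expectation inside and permutation invariance to replace the subtracted download terms by their means $m_i\gamma/d$; this recovers \eqref{eq:bound} verbatim with $\beta$ replaced by the average per-helper download $\gamma/d$, after which the algebra of Proposition~\ref{prop:MSMR} applies unchanged. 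The trade-off: the paper's sorting argument is shorter and exploits the MDS structure inherent to the MSMR point, while your symmetrization is slightly longer but strictly more general\,---\,it shows that the full bound \eqref{eq:bound}, not merely its MSMR specialization, is insensitive to non-uniform download allocations. One small bookkeeping point you should make explicit: when $n_1=b<t$, the quantity $\beta^{(\Sc_1,\Hc_1)}_\ell$ is not directly supplied by the CMR model; one pads $\Sc_1$ to a $t$-set disjoint from $\Hc_1$ (possible since $d\le n-t$). This is harmless here because $m_1=0$, so no subtracted term appears at $i=1$, but the same padding is implicitly used in Lemma~\ref{thm:FileSizeBound} and deserves a sentence.
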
 

\begin{proof}
The proof follows from the steps given in \cite{HLKB15}.
Assume that the $n$ nodes in the DSS are indexed by the set $[n]$. Let's consider a specific failure pattern, where the $t$ nodes indexed by the set $[t] \subset [n]$ are under failure. Furthermore, we assume that the $d$ nodes indexed by the set $\{t+1, t+2,\ldots, t + d\}$ are contacted to repair the $t$ failures under the centralized repair model. For $j \in \{t+1, t+2,\ldots, t+d\}$, let $\sv_{j}$ denote the symbols downloaded from the node indexed by $j$ in order to repair the $t$ failed nodes. Without loss of generality, we can assume that\footnote{Note that the proof holds even when we define $\beta_t = \frac{\gamma_t}{d} = \frac{\sum_{j = t+1}^{t+d}|\sv_j|}{d}$, i.e., $\beta_t$ represents the average number of symbols downloaded from each of the contacted nodes. In the special setting where we have each contacted node contributes the equal number of symbols during the centralized node repair process, we have $\beta_t = |\sv_{t+1}| = \cdots = |\sv_{t+d}|$.}
\begin{align}
\label{eq:msr_CR1}
|\sv_{t+1}| \geq |\sv_{t+2}| \geq \cdots \geq |\sv_{t+d}|.
\end{align}
Note that an $(n, k)$-coding scheme with $\alpha = \frac{\Mc}{k}$ is an MDS coding scheme. Therefore, the content of the nodes indexed by the set $\{t+1,\ldots, k\}$ does not provide any information about the content of the failed nodes, i.e., the nodes indexed by the set $[t]$. Therefore, in order to be able to repair the $t$ failed nodes, we need to have
\begin{align}
\label{eq:msr_CR2}
\sum_{j = t+k+1}^{t+d}|\sv_i| \geq t\alpha,
\end{align}
i.e., the amount of data downloaded from the remaining $d + t - k$ contacted nodes should be at least the amount of information lost due to node failures. Therefore, we have 

\begin{align}
\label{eq:msr_CR3}
\gamma_t = \sum_{j = t+d}^{t+k}|\sv_j| &\overset{(a)}{\geq} \frac{d}{d - k + t}\sum_{j = t+k+1}^{t+d}|\sv_j| \nonumber \\
&\overset{(b)}{\geq} \frac{dt\alpha}{d - k + t},
\end{align}
where (a) and (b) follow from \eqref{eq:msr_CR1} and \eqref{eq:msr_CR2}, respectively.

\end{proof}

\subsection{Constructions and the characterization of the MSMR point}

\begin{figure*}
\begin{center}
   \small
    \begin{tabular}{| c | c | c | c | c | c |}
    \hline
    $1$&$2$&$3$&$4$&$5$ & $6$ \\ \hline
    $x_{0,0}$ &$x_{0, 1}$&${\color{green} x_{0,2}}$&${\color{green}x_{0,0} + x_{0,1} + x_{0,2}}$&${\color{red}x_{0,0} + x_{6,1} + x_{2,2}}$ & ${\color{red}x_{0,0} + x_{3,1} + x_{1,2}}$ \\ \hline
   $x_{1, 0}$ &$x_{1,1}$&${\color{red}x_{1,2}}$&${\color{red} x_{1,0} + x_{1,1} + x_{1,2}}$&${\color{green}x_{1,0} + x_{7,1} + x_{0,2}}$ & ${\color{red}x_{1,0} + x_{4,1} + x_{2,2}}$ \\ \hline
   $x_{2, 0}$ &$x_{2, 1}$ &${\color{red}x_{2,2}}$&${\color{red} x_{2,0} + x_{2,1} + x_{2,2}}$&${\color{red}x_{2,0} + x_{8,1} + x_{1,2}}$ & ${\color{green} x_{2,0} + x_{5,1} + x_{0,2}}$ \\ \hline
  $x_{3, 0}$ &$x_{3, 1}$ &$x_{3,2}$&$x_{3,0} + x_{3,1} + x_{3,2}$&${\color{blue}x_{3,0} + x_{0,1} + x_{5,2}}$ & $x_{3,0} + x_{6,1} + x_{4,2}$ \\ \hline
    $x_{4, 0}$ &$x_{4, 1}$ &$x_{4,2}$&$x_{4,0} + x_{4,1} + x_{4,2}$&$x_{4,0} + x_{1,1} + x_{3,2}$ & ${\color{blue} x_{4,0} + x_{7,1} + x_{5,2}}$ \\ \hline
    $x_{5, 0}$ &$x_{5, 1}$ &${\color{blue}x_{5,2}}$&${\color{blue}x_{5,0} + x_{5,1} + x_{5,2}}$&$x_{5,0} + x_{2,1} + x_{4,2}$ & $x_{5,0} + x_{8,1} + x_{3,2}$ \\ \hline
    $x_{6, 0}$ &$x_{6, 1}$&${\color{magenta} x_{6,2}}$&${\color{magenta}x_{6,0} + x_{6,1} + x_{6,2}}$&$x_{6,0} + x_{3,1} + x_{8,2}$ & ${\color{blue} x_{6,0} + x_{0,1} + x_{7,2}}$ \\ \hline
    $x_{7, 0}$ &$x_{7, 1}$ &${\color{blue}x_{7,2}}$&${\color{blue}x_{7,0} + x_{7,1} + x_{7,2}}$&${\color{magenta}x_{7,0} + x_{4,1} + x_{6,2}}$ & $x_{7,0} + x_{1,1} + x_{8,2}$ \\ \hline
    $x_{8, 0}$ &$x_{8, 1}$&$x_{8,2}$&$x_{8,0} + x_{8,1} + x_{8,2}$&${\color{blue}x_{8,0} + x_{5,1} + x_{7,2}}$ & ${\color{magenta}x_{8,0} + x_{2,1} + x_{6,2}}$ \\ \hline
   \end{tabular}
\end{center}
\caption{Repair of the first two systematic nodes in a $(6, 3)$-zigzag code. (Coding coefficients of the parity symbols are not specified.) Blue (red) colored symbols contribute in the repair of only node $1$ (respectively, $2$) in the case of single node failure. Green colored symbols contribute in the repair of both node $1$ and node $2$ in the case of single node failure. Magenta colored symbols denote the additional symbols that need to be downloaded to enable the centralized repair of both the nodes.}
\label{tab:example1_repair}
\end{figure*}

\subsubsection{Constructions from existing MSCR codes}
\label{sec:msmr1}

Minimum storage cooperative regenerating (MSCR) codes allow for simultaneous repair of $t$ storage nodes with the following scheme: Each newcomer node contacts to $d$ nodes and downloads $\beta$ symbols from each. (Different nodes can contact to different live nodes.) Then, each newcomer node sends $\beta'$ symbols to each other. Under this setup, the repair bandwidth \emph{per failed node} is $d\beta+(t-1)\beta'$. MSCR codes operate at $\alpha_{MSCR}=\Mc/k$ and $\beta_{MSCR}=\beta'_{MSCR}=\frac{\Mc}{k(d-k+t)}$.

\begin{proposition}
\label{prop:fromMSCR}
A code $\Cc$ that operates as an MSCR code is also an MSMR code for the CMR model.
\end{proposition}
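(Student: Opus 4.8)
The plan is to show that any repair procedure witnessing that $\Cc$ operates as an MSCR code can be \emph{internalized} by a single central repairer, and that the resulting download exactly meets the MSMR repair bandwidth $\gamma_{MSMR}=\frac{\Mc d t}{k(d-k+t)}$ of Proposition~\ref{prop:MSMR} (cf.\ Remark~\ref{rem:MSMR}). First I would fix an arbitrary failure pattern of $t$ nodes together with an arbitrary set $D$ of $d$ surviving nodes. Since in the cooperative model each of the $t$ newcomers selects its $d$ helper nodes independently, it is legitimate to have all $t$ newcomers use the common helper set $D$. For $j\in[t]$ and $i\in D$, let $f_{i\to j}$ denote the $\beta_{MSCR}=\frac{\Mc}{k(d-k+t)}$ symbols that newcomer $j$ downloads from node $i$, let $g_{j\to j'}$ denote the $\beta'_{MSCR}=\beta_{MSCR}$ symbols that newcomer $j$ forwards to newcomer $j'$ in the exchange phase, and recall that newcomer $j$ then reconstructs its $\alpha_{MSCR}=\Mc/k$ stored symbols from $\{f_{i\to j}\}_{i\in D}\cup\{g_{j'\to j}\}_{j'\neq j}$.

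Next I would define the CMR repair procedure: from each node $i\in D$ the central repairer downloads the union $\{f_{i\to j}:j\in[t]\}$, i.e.\ $\beta_{MSMR}:=t\,\beta_{MSCR}$ symbols, the same amount from every one of the $d$ contacted nodes, so that the total download is $\gamma_{MSMR}=d\,\beta_{MSMR}=\frac{\Mc d t}{k(d-k+t)}$. The key observation is that the central repairer now holds every $f_{i\to j}$, hence it can itself compute every exchange message $g_{j\to j'}$, since each such message is a deterministic function of $\{f_{i\to j}\}_{i\in D}$; if the cooperative protocol uses several exchange rounds, each later message is a function of data the repairer already possesses, so all rounds can be simulated internally. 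The repairer can then run each newcomer's decoding map and recover the content of all $t$ failed nodes. Because the code itself is unchanged, the $(n,k)$ reconstruction property still holds and the per-node storage is $\alpha=\Mc/k$, so $\Cc$ together with this procedure is an MSMR code, achieving the bound of Proposition~\ref{prop:MSMR} with equality.

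I expect the only point requiring care is the bookkeeping that collapsing the $t$ per-newcomer downloads onto a common helper set yields equal per-node contributions and total exactly $\gamma_{MSMR}$, i.e.\ $t d\,\beta_{MSCR}=\frac{\Mc d t}{k(d-k+t)}$, together with the (routine) justification that letting all newcomers share the helper set $D$ is permitted within the cooperative model. The conceptual content --- that the inter-newcomer exchange becomes ``free'' once a central entity observes all the data downloaded from the surviving nodes --- is precisely the qualitative advantage of centralized over cooperative repair highlighted in the introduction, so no further idea is needed.
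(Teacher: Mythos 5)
Your proposal is correct and follows essentially the same approach as the paper's proof: both fix a common helper set $D$ for all $t$ newcomers, download $t\beta_{MSCR}$ symbols from each of the $d$ helpers, and observe that the central repairer can internally simulate the exchange phase and hence run each newcomer's decoder, hitting $\gamma = dt\beta_{MSCR} = \frac{\Mc dt}{k(d-k+t)}$. Your write-up merely makes explicit the step (the simulation of exchange messages $g_{j\to j'}$, including multi-round protocols) that the paper leaves implicit in the phrase ``these symbols can recover each failed node.''
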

\begin{proof}
Consider that each failed node contact to the same set of $d$ nodes in the MSCR code $\Cc$. Then, each failed node downloads $\beta_{MSCR}$ symbols from these $d$ helper nodes, resulting in a total of at most $\gamma=td\beta_{MSCR}=\frac{\Mc dt}{k(d-k+t)}$ symbols. These symbols can recover each failed node, hence regenerates $t$ failed nodes in the CMR model. Therefore, code $\Cc$ is an MSMR code with $\alpha=\frac{\Mc}{k}$ and $\gamma=\frac{\Mc dt}{k(d-k+t)}$.
\end{proof}
We remark that random linear network coding attains MSCR point \cite{ShumHu}, hence it provides an MSMR code with functional repair. Explicit code constructions for the MSCR setup while ensuring exact-repair, on the other hand, are know for a small set of parameters. The only such constructions that we are aware of are provided in \cite{LeScouarnec:Exact12} for $k=t=2$, in \cite{LiBao14} for $t=2$ (for parameters $(n,k,d)$ at which $(n, k, d+1)$ MSR codes exist), and in \cite{ShumHu} for $d=k$. We believe that moving from the cooperative repair model~\cite{ShumHu, Kermarrec:Repairing11} to the CMR model would allow us to construct MDS codes (MSMR codes) that enable repair-bandwidth efficient repair of $t$ nodes for an expanded set of system parameters.  We exhibit this by designing a scheme to perform centralized repair of multiple nodes in a distributed storage system employing a zigzag code~\cite{zigzag13}.

\subsubsection{Centralized repair of multiple node failures in a zigzag code~\cite{zigzag13}}
\label{sec:msmr2}

The zigzag codes, as introduced in \cite{zigzag13}, are MDS codes that allow for repair of a single node failure among systematic nodes by contacting $d = n - 1$ (all of the) remaining nodes. The zigzag codes are associated with the MSR point~\cite{dimakis} (or MSMR point with $t = 1$ (cf.~\eqref{eq:MSMR})) as each of the contacted $d = n-1$ nodes contributes $\beta = \frac{\alpha}{d - k + 1} = \frac{\alpha}{n-k}$ symbols during the repair of a single failed node. This amounts to the repair bandwidth of $\gamma = d\beta = \frac{n - 1}{n-k}\alpha$. Here, we show that the framework of zigzag codes also enable repair of multiple nodes in the CMR model. 

We state the achievable parameters in the following result. We then illustrate the proposed centralized repair scheme with the help an example of an $(n = 6, k = 3)$-zigzag code where we can simultaneously repair any $2$ systematic nodes\footnote{In a parallel and independent work~\cite{WTB16}, the authors present a mechanism for repairing multiple failures in zigzag codes as well. They show that the zigzag codes can repair any $t \leq n - k$ failures while achieving the lower bound in \eqref{eq:MSMR}.}. 
\begin{theorem}
\label{thm:zigzag}
For an $(n = k  + r, k)$ zigzag code with $r = n - k \geq 2$, it is possible to repair any $1\leq t \leq 3$ systematic nodes in a centralized manner with the optimal repair-bandwidth (cf.~\ref{eq:MSMR}) by contacting $d = n - t$ helper nodes.
\end{theorem}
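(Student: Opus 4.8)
The plan is to realize the centralized repair of $t$ failed systematic nodes as a superposition of the zigzag code's single-node repair scheme, together with a small number of extra downloads, and then to verify that the resulting linear system over the symbols of the failed nodes has full rank.

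I would first recall the structure of the $(n=k+r,k)$ zigzag code of \cite{zigzag13}: its rows are indexed by vectors in $\integers_r^{k-1}$ (so in Figure~\ref{tab:example1_repair} the sub-packetization is $\alpha=r^{k-1}=9$), one systematic column is ``special'' while each other systematic column is tied to one coordinate, and the $r$ parity columns apply coordinate-shift permutations to the systematic symbols. The crucial input is its single-node repair: a failed systematic node $i$ is rebuilt by downloading from every surviving node a structured $\alpha/r$-subset of rows --- a diagonal-type coset if $i$ is the special column, and a coordinate hyperplane $\{\,\vec m:m_i=c\,\}$ otherwise --- chosen so that, after the downloaded systematic symbols are substituted into the downloaded parity symbols, what remains is an invertible system in the $\alpha$ lost symbols. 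Since here $d=n-t$ and $d-k+t=r$, the optimal bandwidth \eqref{eq:MSMR} amounts to downloading exactly $\beta=t\alpha/r$ symbols from each of the $n-t$ contacted nodes; this is the budget the construction must meet.

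For a failed set $\{i_1,\dots,i_t\}$ of systematic nodes I would then prescribe the downloads as follows: from each of the $k-t$ surviving systematic nodes take a common row-set $R$, obtained by \emph{padding} the union of the $t$ single-repair row-sets of $i_1,\dots,i_t$ up to size $\beta=t\alpha/r$; and from each parity node $l$ take the row-set $D_l$ obtained by applying the shift of parity $l$ to $R$. By construction $D_l$ has size $\beta$, and every symbol downloaded from a parity node becomes, after the known surviving-systematic contributions are removed, a linear combination involving only the unknown symbols of $i_1,\dots,i_t$ --- in fact exactly one symbol from each failed node. For $t=1$ there is no padding and one recovers the known single-node repair; for $t\ge 2$ the raw union of the single-repair row-sets has size strictly below $\beta$ (these are the ``magenta'' symbols in Figure~\ref{tab:example1_repair}), so the bandwidth budget has exactly enough slack to absorb the padding, and the worked $(6,3)$ example is the sub-case $t=2$ with $\{i_1,i_2\}=\{\text{special column, first coordinate column}\}$.

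It then remains to prove that the induced $t\alpha\times t\alpha$ linear system in the symbols of the $t$ failed nodes is nonsingular. Since each reduced parity equation has exactly $t$ terms, one per failed node, the system matrix is very sparse and its determinant is a polynomial in the zigzag coefficients $\lambda_{l,i}$ whose monomials correspond to the matchings/cycles of an associated sparse (hyper)graph; it suffices to choose the padding rows so that this polynomial is not identically zero, and then to invoke a coefficient assignment --- such as powers of a primitive element of a large enough field, as in \cite{zigzag13} --- that lies off its zero set while still giving an MDS code. For $t=2$ the ``not identically zero'' condition is exactly that each connected component of the bipartite symbol-dependency graph is uni-cyclic with nontrivial monodromy, and the padding rows are precisely what makes this hold; for $t=3$ there is an analogous but more involved condition. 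The argument splits into the handful of sub-cases determined by how many of $i_1,\dots,i_t$ equal the special column (by symmetry among ordinary columns only this count matters: two sub-cases for each of $t=1,2,3$). The main obstacle --- and the reason the result stops at $t=3$ --- is this last step for larger $t$: the dependency graphs then carry several interacting cycles, and arguing that the padding can always be arranged so that one uniform coefficient choice keeps the whole system invertible becomes delicate; for $t\le 3$ the case analysis remains manageable.
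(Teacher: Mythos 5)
Your high-level plan coincides with the paper's: repair $t\leq 3$ systematic nodes by first downloading the union of the $t$ single-node-repair sets from every helper, then padding each helper's contribution up to $\beta=t\alpha/r$, and finally arguing that the resulting $t\alpha\times t\alpha$ system (with exactly $t$ unknowns per reduced parity equation, one per failed node) has generically nonzero determinant because the associated bipartite symbol--equation graph has a perfect matching, so some coefficient choice over a large enough field makes it invertible. That part matches the paper's Appendix~\ref{appen:zigzag} almost verbatim.

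What is missing is the construction that makes it work, and this is where the entire difficulty lies. Your prescription ``pad the common row-set $R$ up to size $\beta$ and let the parity downloads be the parity-$l$ shift of $R$'' is not well-posed: a parity symbol at row $s$ in parity node $l$ involves systematic row $s-l\ev_j$ from column $j$, and these offsets differ across the surviving systematic columns, so there is no single shift to apply. The paper resolves this by choosing the padding set $\Sc^{\{1,2\}}$ (and, for $t=3$, the sets built from $\Ic_0,\Ic_1,\Ic_2$ and $\tilde{i}_1,\tilde{i}_2$) to be invariant under all differences $\ev_{j}-\ev_{j'}$ over surviving systematic indices, precisely so that $\Sc+l\ev_2=\Sc+l\ev_3=\cdots$ and one common additional parity row-set per parity node can be taken. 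Moreover, the choice of the ``anchor'' $i^{\ast}$ (resp.\ $\Ic_0,\Ic_1,\Ic_2$) is pinned down by the structure of the residual sets $\Rc_{j\to l}$, $\widetilde{\Rc}_{j\to l}$ (equations \eqref{eq:R31}--\eqref{eq:R332}): the paper explicitly verifies that each additional parity symbol cancels down to exactly one hitherto-unmatched symbol from a failed node, which is what produces the perfect matching. ``Choose padding rows so the determinant polynomial is nonzero'' is asserting the conclusion; the content of the theorem is exhibiting a padding with this property, and an arbitrary size-$\beta$ padding does not suffice. Separately, your characterization of the $t=2$ invertibility as ``each component uni-cyclic with nontrivial monodromy'' is the condition for a \emph{fixed} unit-coefficient system, not what the paper uses; since coefficients are chosen generically from a large field, the perfect-matching criterion (cited from \cite{lovasz, alon_null}) is both the right and the weaker condition. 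Finally, your proposed case split on how many failed columns are the ``special'' one is not what the paper does (it fixes a representative failure pattern and lets indexing symmetry do the rest), and the explanation you offer for why the result stops at $t=3$ is not supported by the paper; indeed the footnote cites \cite{WTB16} showing the zigzag construction handles all $t\le n-k$, so the $t\le 3$ limit is an artifact of the case analysis carried out here, not an inherent cycle-structure obstruction.
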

\begin{proof}
We provide the details of the repair process for $2\leq t \leq 3$ systematic nodes along with the necessary background on zigzag codes in Appendix~\ref{appen:zigzag}.
\end{proof}

\begin{example}[Repairing $t = 2$ systematic nodes in a $(6, 3)$-zigzag code]
Let's consider a zigzag code with the parameters $n = 6, k = 3$ and $\alpha = 9$ from \cite{zigzag13}. This code is illustrated in Table~\ref{tab:example1_repair} where each column (indexed from $1$ to $6$) represents a storage node. Recall that, in the event of a single node failure, this code allows for the repair of any systematic node failure by contacting $\hat{d} = 5$ remaining nodes and downloading $\beta = \frac{\alpha}{n - k} = 3$ symbols from each of these nodes. We now show that we can use this same construction (with required modifications of the non-zero coefficients in coded symbols) to repair $2$ systematic node failures by contacting ${d} = n - 2 = 4$ remaining nodes. We download $t\frac{\alpha}{\widehat{d} - k + 2} = 2\frac{\alpha}{n - k} = 6$ symbols from each of the $d=4$ contacted nodes. 

Assume that node $1$ and $2$ are in failure. We download the colored symbols from node $3$ to node $6$ in Figure~\ref{tab:example1_repair} to repair these two nodes. Using the downloaded symbols, we get the following $18$ combinations in the $18$ unknown information symbols. (We suppress the coefficients of the linear combinations here.)
\begin{align}
\label{eq:18_eq}
&{\color{red} x_{0,0} }+ x_{6, 1},~{\color{red} x_{1,0}} + x_{4, 1},~{\color{red} x_{2,0}} + x_{2, 1},~{\color{red} x_{3,0}} + x_{0, 1}, \nonumber \\
& {\color{red} x_{4,0}} + x_{7, 1}, {\color{red} x_{5,0}} + x_{5, 1},~{\color{red} x_{6,0}} + x_{0, 1},~{\color{red} x_{7,0}} + x_{7, 1},\nonumber \\
&{\color{red} x_{8,0}} + x_{5, 1}, x_{2,0} +{\color{red}  x_{8, 1}},~x_{1,0} + {\color{red} x_{7, 1}},~x_{6,0} + {\color{red} x_{6, 1}},\nonumber \\
&x_{2,0} + {\color{red} x_{5, 1}},~x_{7,0} + {\color{red} x_{4, 1}}, x_{0,0} + {\color{red} x_{3, 1}},~x_{8,0} + {\color{red} x_{2, 1}},\nonumber \\
&x_{1,0} + {\color{red} x_{1, 1}},~x_{0,0} +{\color{red}  x_{0, 1}}.
\end{align}

Now, we need to show that it is possible to choose the coding coefficients in such a manner that these $18$ equations allow us to recover the desired $18$ symbols. Assuming that $A$ denotes the $18 \times 18$ coefficient matrix of the aforementioned $18$ combinations, it is a necessary and sufficient (with large enough field size) condition for the matrix $A$ to be full rank that the natural bipartite graph associated with the matrix $A$ contains a perfect matching\footnote{The left and the right nodes in the bipartite graph correspond to the combinations and the unknowns, respectively.}~\cite{lovasz, alon_null}. We illustrate one such perfect matching in \eqref{eq:18_eq}, where the colored unknown symbol in a combination represents the unknown symbol matched by that combination. The similar argument can be performed for the remaining combinations of $2$ failed systematic nodes.

\end{example}

\subsubsection{MSMR point}

The achievability results above together with the repair bandwidth bound reported in the previous section, see Remark~\ref{rem:MSMR}, results in the following characterization.

\begin{theorem}
The MSMR point for the $(n,k,d,t,\alpha,\gamma)$-CMR model is given by
$$\alpha_{MSMR}=\frac{\Mc}{k}, \quad \gamma_{MSMR}=\frac{\Mc d t}{k(d-k+t)}.$$
\end{theorem}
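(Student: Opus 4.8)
The plan is to assemble the final theorem purely from ingredients already established in the excerpt, so the ``proof'' is really a matching of a lower bound with a matching upper bound (achievability). First I would invoke Proposition~\ref{prop:MSMR} (together with the strengthening that allows unequal helper contributions) to record the converse direction: any $(n,k,d,t,\alpha,\gamma)$-CMR scheme with $\alpha = \Mc/k$ must have $\gamma \geq \frac{\Mc d t}{k(d-k+t)}$. This pins down $\alpha_{MSMR} = \Mc/k$ by definition of the MSMR operating point (the MDS requirement), and gives the ``$\geq$'' half of the claimed value of $\gamma_{MSMR}$.

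Next I would supply a matching construction to show the bound is achievable, i.e.\ the ``$\leq$'' half. Here I would appeal to Proposition~\ref{prop:fromMSCR}: any code operating as an MSCR code yields an MSMR code for the CMR model with exactly $\alpha = \Mc/k$ and $\gamma = \frac{\Mc d t}{k(d-k+t)}$. Combined with the remark that random linear network coding attains the MSCR point~\cite{ShumHu}, this gives functional-repair MSMR codes for \emph{all} CMR parameters, which is enough to certify that the stated point is the true operating point. For emphasis (and to match the discussion preceding the theorem) I would also note that Theorem~\ref{thm:zigzag} and the worked $(6,3)$-zigzag example furnish \emph{exact}-repair MSMR codes for the sub-range $1 \le t \le 3$, $n - k \ge 2$, so the characterization is attained not merely functionally but exactly in those regimes.

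Putting these together: the converse of Proposition~\ref{prop:MSMR} shows no CMR scheme with $\alpha = \Mc/k$ can beat $\gamma = \frac{\Mc d t}{k(d-k+t)}$, and Proposition~\ref{prop:fromMSCR} plus random linear network coding shows this pair $(\alpha,\gamma)$ is achieved; hence the MSMR point is exactly $\bigl(\tfrac{\Mc}{k}, \tfrac{\Mc d t}{k(d-k+t)}\bigr)$, as claimed. The logical skeleton is a one-line ``lower bound meets upper bound'' argument.

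The main obstacle is not in this theorem's proof at all — everything here is bookkeeping — but rather it sits upstream, in the two facts being cited: (i) the tightness of the file-size bound of Lemma~\ref{thm:FileSizeBound} in the MSMR regime, which requires the careful choice of the partition $\{\Sc_i\}$ made in Proposition~\ref{prop:MSMR} (and the argument that unequal helper contributions do not help, borrowed from~\cite{HLKB15}); and (ii) the achievability in Proposition~\ref{prop:fromMSCR}, whose real content is that all $t$ newcomers can be forced to contact the \emph{same} $d$ helpers so that the centralized downloader never needs the inter-newcomer exchange data of the cooperative model. If one wanted exact rather than functional repair for general parameters, the genuine difficulty would be constructing explicit MSCR (equivalently MSMR) codes beyond the known parameter ranges — but that is explicitly left open, and the zigzag construction of Theorem~\ref{thm:zigzag} is the partial progress offered here.
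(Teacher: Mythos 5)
Your proof matches the paper's own argument exactly: the MSMR point is characterized by pairing the converse of Proposition~\ref{prop:MSMR} with the achievability supplied by Proposition~\ref{prop:fromMSCR} (random linear network coding attains the MSCR point, hence functional-repair MSMR codes for all parameters), with the zigzag construction of Theorem~\ref{thm:zigzag} as a supplementary exact-repair witness for $1 \le t \le 3$. The paper itself states the theorem immediately after the remark that the achievability results together with the Remark~\ref{rem:MSMR} bound yield the characterization, so your ``lower bound meets upper bound'' skeleton is precisely the intended proof.
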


\section{MBMR Codes}
\label{sec:mbmr_codes}

In this section, we focus on the other extremal point of the storage vs. repair-bandwidth trade-off, namely the MBMR point.

\subsection{Repair bandwidth bound}

For the MBMR point, depending on whether $t | k$ or $t \nmid k$, we state the following two results. 

\begin{proposition}
\label{prop:MBMR}
Assume that $t | k$. Consider an $(n, k)$-DSS that stores a file of size $\Mc$ 
and enables repair of $t$ failed nodes under a $(d, t, \alpha_{MBMR}, \gamma_{MBMR})$-CMR model. Then, denoting the entropy of $t$ nodes as $H_t$, we have 
\begin{align}
&t\alpha_{MBMR}\geq H_t= \gamma_{MBMR}, \label{eq:MBMRa}\\
&\gamma_{MBMR}\geq \frac{\Mc2dt}{k(2d - k + t)}.\label{eq:MBMRb}
\end{align}
\end{proposition}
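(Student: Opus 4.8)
The plan is to derive both inequalities in Proposition~\ref{prop:MBMR} directly from the general file size bound of Lemma~\ref{thm:FileSizeBound}, specialized to the MBMR regime where $H(\xv_{\Sc}) = \gamma = d\beta$ for every $\Sc$ with $|\Sc| = t$. The inequality \eqref{eq:MBMRa} is essentially definitional: the MBMR constraint asserts exactly that the entropy of any $t$ nodes equals $\gamma_{MBMR}$, and since $H(\xv_{\Sc}) \le \sum_{j \in \Sc} H(x_j) \le t\alpha_{MBMR}$ by subadditivity and the per-node size constraint, we get $t\alpha_{MBMR} \ge H_t = \gamma_{MBMR}$. So the real content is \eqref{eq:MBMRb}.

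For \eqref{eq:MBMRb}, since $t \mid k$, I would apply Lemma~\ref{thm:FileSizeBound} with the partition of the $k$ reconstruction nodes into $g = k/t$ blocks each of size $n_i = t$. The bound \eqref{eq:bound} then reads
\begin{equation}
\Mc \le \sum_{i=1}^{g} \min\Big\{ t\alpha_{MBMR},\ \big(d - (i-1)t\big)\beta \Big\}. \nonumber
\end{equation}
Now I would invoke the MBMR relations: $t\alpha_{MBMR} \ge \gamma_{MBMR} = d\beta$, so for the first block ($i=1$) the minimum is at most $d\beta$; and more importantly, I should revisit the proof of Lemma~\ref{thm:FileSizeBound} to see that in the MBMR case the term $H(\xv_{\Sc_i}|\xv_{\Sc_1:\Sc_{i-1}})$ can be bounded more sharply. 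The key observation is that $H(\xv_{\Sc_i}) = d\beta$ exactly (not just $\le t\alpha$), so each summand is $\min\{ d\beta,\ (d-(i-1)t)\beta \} = (d - (i-1)t)\beta$ whenever $(i-1)t \ge 0$, which is always — so actually each term equals $(d-(i-1)t)\beta$ for $i \ge 1$. Summing the arithmetic progression:
\begin{equation}
\Mc \le \beta \sum_{i=1}^{g} \big(d - (i-1)t\big) = \beta\Big( gd - t\frac{g(g-1)}{2}\Big) = \beta g \Big( d - \frac{t(g-1)}{2}\Big). \nonumber
\end{equation}
Substituting $g = k/t$ gives $\Mc \le \beta \cdot \frac{k}{t}\cdot\frac{2d - k + t}{2}$, hence $d\beta = \gamma_{MBMR} \ge \frac{\Mc \cdot 2 d t}{k(2d - k + t)}$, which is exactly \eqref{eq:MBMRb}.

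The main subtlety — and the step I would be most careful about — is justifying that each summand in the file-size bound truly equals $(d-(i-1)t)\beta$ rather than just being upper-bounded by $\min\{t\alpha, (d-(i-1)t)\beta\}$; this requires that the MBMR entropy-accumulation-type identity $H(\xv_{\Sc_i} \mid \xv_{\Sc_1:\Sc_{i-1}})$ is genuinely controlled by the download bound and not by $t\alpha$, i.e.\ that we are in the regime where the bandwidth term is the binding one for every block. One must check that $(d-(i-1)t)\beta \le t\alpha_{MBMR}$ holds for all $i \le g$, equivalently for $i=1$, i.e.\ $d\beta \le t\alpha_{MBMR}$, which is precisely \eqref{eq:MBMRa}; so the two parts of the proposition interlock. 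I would also double-check the boundary arithmetic of the arithmetic-series sum and the case $g=1$ (i.e.\ $k = t$), where the bound degenerates to $\Mc \le d\beta = \gamma_{MBMR}$, consistent with $\frac{2dt}{k(2d-k+t)} = \frac{2dt}{t(2d)} = 1$. Finally I would remark that the companion case $t \nmid k$ is handled analogously with an uneven leading block $n_1 = k - t\lfloor k/t\rfloor$, as in the proof of Proposition~\ref{prop:MSMR}.
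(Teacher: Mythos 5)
Your proof is correct and follows essentially the same route as the paper: partition the $k$ reconstruction nodes into $k/t$ blocks of size $t$, apply Lemma~\ref{thm:FileSizeBound}, and sum the arithmetic progression to obtain \eqref{eq:MBMRb}; the derivation of \eqref{eq:MBMRa} by subadditivity matches the paper as well. The one place you over-complicate is in insisting that each summand of the file-size bound \emph{equals} $(d-(i-1)t)\beta$, which leads you to invoke \eqref{eq:MBMRa} and claim the two parts of the proposition ``interlock''; this verification is unnecessary, since $\min\{n_i\alpha,\,(d-\sum_{j<i}n_j)\beta\}\leq (d-\sum_{j<i}n_j)\beta$ holds unconditionally and that one-sided inequality is all the argument requires (the paper indeed just drops the $\min$ without appealing to the MBMR condition).
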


\begin{proof}
Note that the MBMR point has $H(\xv_{\Sc}) = \gamma_{MBMR}$ for every $\Sc \subseteq [n]$ such that $|\Sc| = t$. Therefore, we have
\begin{align}
\gamma_{MBMR} = H(\xv_{\Sc}) \leq \sum_{i \in \Sc}H(\xv_i) \leq t\alpha_{MBMR}. \nonumber
\end{align}
In order to establish the lower bound on $\gamma_{MBMR}$ in \eqref{eq:MBMRb}, we use $n_i=t, \forall i \in [a]$ in the bound \eqref{eq:bound}, we obtain
\begin{align}
\label{eq:MBMR_1}
\Mc\leq \sum\limits_{i=1}^{k/t}  \big(d-(i-1)t\big)\beta=\frac{k}{t}\left(\frac{2d-k+t}{2}\right)\beta.
\end{align}
This implies that 
$\gamma_{MBMR}=d\beta\geq \frac{\Mc 2dt}{k(2d-k+t)}.$
\end{proof}

\begin{proposition}
\label{prop:MBMR2}
Consider an $(n, k)$-DSS that stores a file of size $\Mc$ 
and enables repair of $t$ failed nodes under a $(d, t, \alpha_{MBMR}, \gamma_{MBMR})$-CMR model. Then, the bounds given in \eqref{eq:MBMRa} and \eqref{eq:MBMRb} hold for the case of $t \nmid k$, if
$H_b \geq \left(\frac{\beta}{t}\right) \left[b\left(\frac{2d + t - 1}{2}\right) - {b \choose 2} \right]$,
where $b=k~(\textrm{mod}~t)$, and $H_b$ denotes entropy of $b$ nodes in the system.
\end{proposition}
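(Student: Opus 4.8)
The plan is to mimic the proof of Proposition~\ref{prop:MBMR}, but now with the general file-size bound \eqref{eq:bound} applied to a partition of the $k$ reconstruction nodes into one block of size $b = k \bmod t$ and $a = \lfloor k/t\rfloor$ blocks of size $t$. Taking $n_1 = b$ and $n_i = t$ for $i = 2,\dots,a+1$ in \eqref{eq:bound}, and using $H(\xv_{\Sc_i}) = \gamma_{MBMR} = d\beta$ for the size-$t$ blocks (the MBMR property) and $H(\xv_{\Sc_1}) = H_b$ for the size-$b$ block, gives
\begin{align}
\Mc &\leq \min\{H_b, d\beta\} + \sum_{i=1}^{a}\min\{d\beta,\ (d - b - (i-1)t)\beta\} \nonumber \\
&= \min\{H_b,d\beta\} + \beta\sum_{i=1}^{a}\big(d - b - (i-1)t\big). \nonumber
\end{align}
The inner sum telescopes to $\frac{a}{t}\big(t\,d - t\,b - \binom{a}{2}t^2\big)\big/1$; more cleanly, writing $k = at + b$, one checks $\sum_{i=1}^{a}(d-b-(i-1)t) = \frac{1}{t}\big[\frac{k-b}{1}\cdot\frac{2d - k - b + t}{2}\big]$, so that the right-hand side becomes $\min\{H_b,d\beta\} + \frac{\beta}{t}\cdot\frac{(k-b)(2d-k-b+t)}{2}$.

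The next step is to dispose of the $\min$ using the hypothesis. The hypothesis $H_b \geq \frac{\beta}{t}\big[b\big(\frac{2d+t-1}{2}\big) - \binom{b}{2}\big]$ is exactly the statement that $H_b$ is at least the ``value'' that a block of size $b$ would contribute if it were treated as $b$ blocks of size $1$ in the same style of bound, i.e.\ $\frac{\beta}{t}\sum_{j=0}^{b-1}(d - (k-b) - jt \cdot \frac{1}{t}\cdots)$ — I would verify that $b\big(\frac{2d+t-1}{2}\big) - \binom{b}{2} = \sum_{j=0}^{b-1}\big(d - \frac{(k-b)}{1} + \cdots\big)$ collapses correctly, the point being that with this much entropy in the first block we may replace $\min\{H_b,d\beta\}$ by the larger-ish combinatorial expression and re-assemble the full sum $\frac{\beta}{t}\big[k\big(\frac{2d-k+t}{2}\big)\big]$, recovering exactly \eqref{eq:MBMR_1}. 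From $\Mc \leq \frac{k}{t}\cdot\frac{2d-k+t}{2}\beta$ we get \eqref{eq:MBMRb} as before, and \eqref{eq:MBMRa} follows verbatim from the MBMR property $H(\xv_{\Sc}) = \gamma_{MBMR} \leq \sum_{i\in\Sc}H(\xv_i) \leq t\alpha_{MBMR}$, which did not use $t\mid k$ at all.

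The main obstacle is purely bookkeeping: I expect the delicate part to be checking that the hypothesis on $H_b$ is precisely the inequality needed to lift the size-$b$ block's contribution from $\min\{H_b,d\beta\}$ up to the ``continuation of the arithmetic progression'' value $\frac{\beta}{t}\big[b\cdot\frac{2d+t-1}{2} - \binom{b}{2}\big]$, so that the partial sums telescope into the clean $t\mid k$ expression. I would be careful about two edge cases — whether $d\beta \geq H_b$ or not (the hypothesis is chosen so the bound holds regardless, since we only ever use $H_b$ as a \emph{lower} substitute) and the boundary constraint that the last block still satisfies $d - b - (a-1)t \geq 0$, i.e.\ $d \geq k - t$, which is implied by the feasibility of the CMR model. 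Once the arithmetic is aligned, the inequalities \eqref{eq:MBMRa}–\eqref{eq:MBMRb} drop out exactly as in Proposition~\ref{prop:MBMR}.
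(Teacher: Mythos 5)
You take the same route as the paper: partition the $k$ reconstruction nodes as $n_1 = b$, $n_i = t$ for $i \ge 2$, apply the conditional-entropy form of the file-size bound, and obtain $\Mc \le \min\{H_b, d\beta\} + \sum_{i=1}^{a}\big(d - b - (i-1)t\big)\beta$; your evaluation of the sum as $\frac{(k-b)(2d-k-b+t)}{2t}\beta$ is correct, and your observation that \eqref{eq:MBMRa} carries over verbatim because it never used $t \mid k$ also matches the paper's remark.

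The gap is precisely in the step you flag as delicate and leave unverified, and it is not merely bookkeeping. Set $X = \frac{(k-b)(2d-k-b+t)}{2t}\beta$ and $Z = \frac{\beta}{t}\big[b\big(\frac{2d+t-1}{2}\big) - {b \choose 2}\big] = \frac{b(2d+t-b)}{2t}\beta$. One checks that $Z + X = \frac{k(2d-k+t)}{2t}\beta$, the right-hand side of \eqref{eq:MBMR_1}, so the replacement you want to make is $\min\{H_b, d\beta\} \leadsto Z$. But an upper bound $\Mc \le \min\{H_b, d\beta\} + X$ only yields $\Mc \le Z + X$ if $\min\{H_b, d\beta\} \le Z$, i.e.\ $H_b \le Z$ (in the generic regime $H_b \le d\beta$). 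The hypothesis gives $H_b \ge Z$, which is the wrong direction: a large $H_b$ only \emph{weakens} the intermediate bound $\Mc \le H_b + X$ and cannot force $\Mc$ below $Z + X$. Your phrase ``replace $\min\{H_b,d\beta\}$ by the larger-ish combinatorial expression'' is the symptom of this; under the stated hypothesis $Z$ is the \emph{smaller} quantity, so the substitution does not preserve the inequality chain. In fairness, the paper's own proof has the same inversion---it writes ``we need the RHS of \eqref{eq:MBMR_gen} to be at least the RHS of \eqref{eq:MBMR_1}'' and solves that for $H_b$, which is a consistency condition relating the two upper bounds rather than a derivation of \eqref{eq:MBMRb} from the hypothesis. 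So you have reproduced the paper's reasoning faithfully, but as a standalone proof the step you deferred would fail as stated; it would be repaired by reversing the hypothesis to $H_b \le Z$ (the MBCR-based construction attains $H_b = Z$, so achievability is unaffected by the direction).
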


\begin{proof} 
The bound in \eqref{eq:MBMRa} follows from the similar analysis as presented in the proof of Proposition~\ref{prop:MBMR2}. In order to establish \eqref{eq:MBMRb}, we select $g = \floorb{k/t} + 1 = a+1$ disjoint sets of nodes indexed by the sets $\Sc_1, \Sc_2,\ldots, \Sc_{g}$ such that $n_1 = |\Sc_1| = b$ and $n_i = |\Sc_i| = t$ for $i \in \{2, 3,\cdots, g = a+1\}$. Note that we have $\sum_{i}n_i = k$. Utilizing this particular sequence of sets in \eqref{eq:bound_h} along with the fact that we have $H(\xv_{\Sc_i}) = d\beta$ for $2 \leq i \leq g$, we obtain
\begin{align}
\label{eq:tmp_step11}
\Mc & \leq  \min\big\{ H(\xv_{\Sc_1}), d\beta\big\} + \sum\limits_{i=1}^{a} \big(d-(i-1)t - b\big)\beta \nonumber \\
& =  H(\xv_{\Sc_1}) + \sum\limits_{i=1}^{a} \big(d-(i-1)t - b\big)\beta.
\end{align}
Note that the choice of the set $\Sc_1$ is arbitrary and all the nodes in the system are equivalent in terms of their information content. Therefore, $H_b = H(\Sc_1)$ (the amount of information stored on $b$ nodes indexed by the set $\Sc_1$) only depends on $b$. It follows from \eqref{eq:tmp_step11} that 
\begin{align}
\label{eq:MBMR_gen}
\Mc &\leq 
H_b +\left(\frac{2d - k + (t - b)}{2}\right)a\beta 
\end{align}

In order to have the bound in \eqref{eq:MBMRb} 
we need the RHS of \eqref{eq:MBMR_gen} to be at least the RHS of \eqref{eq:MBMR_1}, i.e.,
\begin{align}
H_b +\left(\frac{2d - k + (t - b)}{2}\right)a\beta  \geq \frac{k}{t}\left(\frac{2d-k+t}{2}\right)\beta. \nonumber
\end{align} 
This implies that 
\begin{align}
H_b &\geq \left(\frac{2d-k+t}{2}\right)\frac{k}{t}\beta - \left(\frac{2d - k + (t - b)}{2}\right)a\beta \nonumber \\
& = \left(\frac{\beta}{t}\right) \left[b\left(\frac{2d + t - 1}{2}\right) - {b \choose 2} \right].
\end{align}
\end{proof}

\begin{remark}\label{rem:MBMR}
A code that allows for repair of $t$ failed nodes with $H_t = \gamma  = \frac{\Mc 2 d t}{k(2d - k + t)}$ is an MBMR code for the case of $t | k$ and $t\nmid k$, if for the latter case the system also operates at
$H_b \geq \left(\frac{\beta}{t}\right) \left[b\left(\frac{2d + t - 1}{2}\right) - {b \choose 2} \right]$.
\end{remark}

\subsection{Constructions and the characterization of the MBMR point}
\label{sec:mbmr_construction}

\subsubsection{Constructions from existing MBCR codes}
MBCR codes have $\alpha_{MBCR}=\frac{\Mc}{k}\frac{2d+t-1}{2d+t-k}$, $\beta=\frac{\Mc}{k}\frac{2}{2d+t-k}$, and $\beta'=\frac{\Mc}{k}\frac{1}{2d+t-k}$. A construction of MBCR codes for all parameters is provided in \cite{WangZhang13}, where the entropy accumulation for MBCR codes is also characterized. In particular, entropy of $b\leq k$ nodes is given by $H_b=\left(b\left(\frac{2d+t-1}{2}\right)- {b \choose 2} \right)\beta$. 

\begin{proposition}
\label{prop:mbmr_mbcr}
A code $\Cc$ that operates as an MBCR code is also an MBMR code for the CMR model that operates at 
$\alpha=\frac{\Mc (2d+t-1)}{k(2d+t-k)}$ and 
$H_b \geq \left(\frac{\beta}{t}\right) \left[b\left(\frac{2d + t - 1}{2}\right) - {b \choose 2} \right]$.
\end{proposition}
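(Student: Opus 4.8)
The plan is to show that an MBCR code, when all $t$ failed nodes contact the same set of $d$ helper nodes, functions as an MBMR code for the CMR model, mirroring the argument used for Proposition~\ref{prop:fromMSCR}. First I would recall the MBCR repair protocol: each of the $t$ newcomers contacts $d$ surviving nodes and downloads $\beta_{MBCR} = \frac{\Mc}{k}\frac{2}{2d+t-k}$ symbols from each, and then the $t$ newcomers exchange $\beta'_{MBCR} = \frac{\Mc}{k}\frac{1}{2d+t-k}$ symbols pairwise. The key observation is that when all $t$ newcomers contact the \emph{same} $d$ helper nodes, a centralized repairer sees the union of all downloaded data, namely at most $t$ copies of the $\beta_{MBCR}$-symbol downloads from each of the $d$ helpers, i.e. $\gamma = d \cdot t \cdot \beta_{MBCR}$ symbols; since the inter-newcomer exchange in the cooperative model is a function of this downloaded data, the centralized entity can reconstruct those exchanged symbols itself and hence recover all $t$ failed nodes. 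So with $\beta := t\beta_{MBCR} = \frac{\Mc}{k}\frac{2t}{2d+t-k}$ per helper, the CMR repair succeeds, giving $\gamma = d\beta = \frac{\Mc\, 2dt}{k(2d+t-k)}$, which matches the MBMR bound \eqref{eq:MBMRb}.

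Next I would verify the two defining properties of the MBMR operating point for this code. The per-node storage is inherited directly from the MBCR code: $\alpha = \alpha_{MBCR} = \frac{\Mc}{k}\frac{2d+t-1}{2d+t-k} = \frac{\Mc(2d+t-1)}{k(2d+t-k)}$, as claimed in the statement. For the entropy-accumulation / minimum-bandwidth condition, I would invoke the characterization of entropy accumulation for MBCR codes from \cite{WangZhang13}, namely that the entropy of any $b \leq k$ nodes equals $H_b = \left(b\left(\frac{2d+t-1}{2}\right) - \binom{b}{2}\right)\beta_{MBCR}$. Substituting $\beta_{MBCR} = \beta/t$ gives exactly $H_b = \left(\frac{\beta}{t}\right)\left[b\left(\frac{2d+t-1}{2}\right) - \binom{b}{2}\right]$, which is the condition in the statement (with equality, hence certainly $\geq$). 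In particular, taking $b = t$ one checks $H_t = \left(\frac{\beta}{t}\right)\left[t\cdot\frac{2d+t-1}{2} - \frac{t(t-1)}{2}\right] = \frac{\beta(2d)}{2} = d\beta = \gamma$, confirming $H_t = \gamma$, which is the MBMR requirement $H(\xv_\Sc) = d\beta$ for $|\Sc| = t$.

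Finally I would close by observing that the value $\gamma = \frac{\Mc\,2dt}{k(2d-k+t)}$ together with $H_t = \gamma$ and the entropy-accumulation condition exactly meets the hypotheses of Remark~\ref{rem:MBMR}, so the code is an MBMR code operating at the stated point. The main obstacle — and the only genuinely non-routine step — is the first one: carefully arguing that forcing all $t$ newcomers to use a common helper set is legitimate within the MBCR model (it is, since the cooperative model only requires that \emph{some} valid helper choice exists, and a common choice is one such choice when $d \leq n-t$), and that the centralized entity can simulate the pairwise exchange phase purely from the downloaded symbols so that no extra download beyond $d\cdot t\beta_{MBCR}$ is needed. Everything after that is bookkeeping with the closed-form MBCR parameters.
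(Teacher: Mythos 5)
Your proposal is correct and follows essentially the same route as the paper's proof: have all $t$ newcomers contact a common helper set so that the total download is $td\beta_{MBCR}=\frac{\Mc 2dt}{k(2d+t-k)}$, then verify $H_t=\gamma$ and the $H_b$ condition from the MBCR entropy-accumulation formula of \cite{WangZhang13}, and conclude via Remark~\ref{rem:MBMR}. The only difference is that you spell out explicitly why the centralized entity can simulate the pairwise exchange phase from the downloaded data, a point the paper leaves implicit in its ``at most $\gamma = td\beta_{MBCR}$'' claim.
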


\begin{proof}
Consider that each failed node contact to the same set of $d$ nodes in the MBCR code $\Cc$. This results in a repair  bandwidth of at most $\gamma=td\beta_{MBCR}=\frac{\Mc 2dt}{k(2d+t-k)}$. Entropy of $t$ nodes in this code is given by $H_t=\left(t\left(\frac{2d+t-1}{2}\right)- {t \choose 2} \right) \frac{\Mc}{k}\frac{2}{2d+t-k}=\frac{\Mc 2dt}{k(2d+t-k)}=\gamma$. These and also the entropy of $b$ nodes meet the conditions stated in Remark~\ref{rem:MBMR}, establishing the claimed result.
\end{proof}

\begin{remark}
In general, for MBMR codes, we have the condition that $t\alpha\geq H_t = \gamma_{MBMR}$. It is not clear if $\alpha$ can be further reduced than that in Proposition~\ref{prop:mbmr_mbcr}, e.g., when $b = 0$.
\end{remark}

\subsubsection{MBMR point}

The achievability results above together with the repair bandwidth bound reported in the previous section results in the following characterization.

\begin{theorem}
Let $k~({\rm mod}~t) = b$. Then,  for the CMR models satisfying
$H_b \geq \left(\frac{\beta}{t}\right) \left[b\left(\frac{2d + t - 1}{2}\right) - {b \choose 2} \right]$, the MBMR point is given by
$$H_t=\gamma_{MBMR}=\frac{\Mc 2dt}{k(2d+t-k)}.$$
\end{theorem}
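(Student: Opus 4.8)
The plan is to obtain the MBMR characterization by combining the converse and achievability results already established in the paper, so this final theorem is essentially a bookkeeping step. First I would invoke Proposition~\ref{prop:MBMR} (for $t \mid k$) and Proposition~\ref{prop:MBMR2} (for $t \nmid k$): under the hypothesis $H_b \geq \left(\frac{\beta}{t}\right)\left[b\left(\frac{2d+t-1}{2}\right) - {b \choose 2}\right]$, both give the converse bound $\gamma_{MBMR} \geq \frac{\Mc 2dt}{k(2d+t-k)}$ together with $\gamma_{MBMR} = H_t$. Note that in the $t \mid k$ case the hypothesis is vacuous since $b = 0$, so the same expression $\frac{\Mc 2dt}{k(2d-k+t)} = \frac{\Mc 2dt}{k(2d+t-k)}$ covers both regimes and there is no case split needed in the final statement.

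Next I would establish achievability by appealing to Proposition~\ref{prop:mbmr_mbcr} together with the existence of MBCR codes for all parameters from \cite{WangZhang13}. Concretely: take the MBCR code of \cite{WangZhang13} with the stated parameters $\alpha_{MBCR}$, $\beta$, $\beta'$, whose entropy accumulation satisfies $H_b = \left(b\left(\frac{2d+t-1}{2}\right) - {b \choose 2}\right)\beta$ — which in particular meets the hypothesis of the theorem with equality when we read $\beta$ appropriately. By Proposition~\ref{prop:mbmr_mbcr}, using the CMR repair strategy in which all $t$ failed nodes contact the same set of $d$ helpers, this code operates as an MBMR code with $H_t = \gamma = td\beta_{MBCR} = \frac{\Mc 2dt}{k(2d+t-k)}$. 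Thus the converse bound is matched.

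I would then conclude: for any CMR model satisfying the entropy-accumulation hypothesis, the minimum bandwidth $\gamma_{MBMR}$ cannot be smaller than $\frac{\Mc 2dt}{k(2d+t-k)}$ by the converse, and this value is achieved by the MBCR-derived construction; hence the MBMR point is exactly $H_t = \gamma_{MBMR} = \frac{\Mc 2dt}{k(2d+t-k)}$. I expect no real obstacle here, since all the substance lives in the earlier propositions; the only point requiring a line of care is verifying that the $\beta$ appearing in the theorem's hypothesis is consistent with the $\beta$ of the achieving MBCR code (i.e., that the construction lands exactly on the boundary of the allowed region rather than strictly inside it), so that the achievability claim genuinely certifies the same operating point for which the converse was proved. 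A secondary remark worth including, echoing the earlier \textbf{Remark} after Proposition~\ref{prop:mbmr_mbcr}, is that this characterization pins down $\gamma_{MBMR}$ and $H_t$ but leaves open whether $\alpha$ can be pushed below the MBCR value, so the theorem is a statement about the repair-bandwidth coordinate of the operating point only.
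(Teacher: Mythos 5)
Your proposal is correct and follows essentially the same route as the paper, which likewise obtains this theorem by combining the converse bounds of Propositions~\ref{prop:MBMR} and~\ref{prop:MBMR2} with the achievability supplied by Proposition~\ref{prop:mbmr_mbcr} via the MBCR construction of \cite{WangZhang13}. Your added remarks on the vacuity of the hypothesis when $b=0$ and on the consistency of $\beta$ between the converse and the achieving code are sensible refinements of the same argument.
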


\section{Applications to communication and repair efficient secret sharing schemes}
\label{sec:comm_secret}
 
Recently, in \cite{HLKB15}, Huang et al. proposed a model for {\em communication efficient secret sharing}. They consider a setting where one wants to encode a secret $\mv \in \ff_Q^K$ into $N$ shares $\sv_1, \sv_2,\ldots, \sv_N \in \ff_Q$. The encoding from secret to shares should satisfy two requirements: 1) Given any $z$ shares one should not be able to learn any information about the secret $\mv$ and 2) given access to any $d \geq N - r$ shares one should be able to reconstruct (or decode) the entire secret $\mv$. Huang et al. refer to such secret sharing schemes as $(N, K, r, z)_Q$ secret sharing schemes. For a naive secret reconstruction process, one downloads $Q$ symbols over $\ff_Q$ from each of the $d$ contacted shares leading to the communication bandwidth (the amount of data downloaded for secret reconstruction) of $dQ$ symbols over $\ff_Q$. In \cite{HLKB15}, Huang et al. explore the minimum possible communication bandwidth of an $(N, K, r, z)_Q$ secret sharing scheme as a function of the number of shares participating in the reconstruction process $d$. Towards this end, the authors obtain the following bound on the communication bandwidth of an $(N, K, r, z)_Q$ secret sharing scheme when $N - r \leq d \leq N$ shares are available during the reconstruction process\footnote{In \cite{HLKB15}, the authors present this bound in terms of communication overhead $CO_d$ which is the difference between the communication bandwidth $BW_d$ and the size of the secret $K$.}. 
\begin{align}
\label{eq:bw_bound}
BW_{d} \geq \frac{d}{d - z}K,
\end{align}
where the communication bandwidth is counted in terms of the number of symbols over $\ff_Q$. Huang et al. further present an explicit $(N, K = N - r - z, r, z)_Q$ secret sharing scheme which attain the bound in \eqref{eq:bw_bound} for $d = k$ and $d = N$. They also show the existence of $(N, K = N - r - z, r, z)_Q$ secret sharing schemes which attain the lower bound on the communication bandwidth for all values of $d$ in $\{N - r, N - r + 1,\ldots, N\}$. Note that these secret sharing schemes are designed to work for a particular value $d$. However, it is also an interesting question to design secret sharing schemes which simultaneously work for all the values of $d$. In \cite{BR15}, Bitar and El Rouayheb present two explicit constructions which give $(N, K = N - r - z, r, z)_Q$ secret sharing schemes with optimal communication bandwidth. The first construction attains the bound in \eqref{eq:bw_bound} for any fixed $d$ and the second construction simultaneously attains the bound for $N - r \leq d \leq N$.

In this section, we show that the communication optimal $(N, K, r, z)_Q$ secret sharing schemes can be designed using the codes which allow for centralized repair of multiple nodes. The added advantage of using this approach to construct secret sharing scheme is that this method also enables bandwidth efficient repair of shares in the secret sharing scheme. This can also be viewed as an attempt to unify the study of repair bandwidth efficient codes for distributed storage and communication efficient efficient secret sharing. This allows us to employ various ideas from the work on secure distributed storage literature to the setting of communication efficient secret sharing.

Let $\Mc^{s}$ be the size of the secret $\mv$ (over $\ff_{q}$) that we want secure in the secret sharing scheme. We further assume that each of the $N$ shares in the secret sharing scheme consists of $\alpha$ symbols over $\ff_q$, i.e., we have $\ff_Q = \ff_{q^{\alpha}}$. Note that Huang et al. define the sizes of the secret and the shares over the same alphabet $\ff_Q = \ff_{q^{\alpha}}$~\cite{HLKB15}. However, we denote the size of the secret over a base field $\ff_q$ and assume that each share comprises a symbol from the extension field $\ff_Q = \ff_{q^{\alpha}}$. This representation is quite prevalent in the distributed storage literature and is consistent with the rest of the paper as well. We represent the secret sharing scheme as an $(N, \Mc^{s}, r, z)_{\alpha, q}$ or $(N, \Mc^{s}, r, z)_{\alpha}$ secret sharing scheme. First, we restate the lower bound on the communication bandwidth of an $(N, \Mc^{s}, r, z)_{\alpha}$ secret sharing schemes (cf.~\eqref{eq:bw_bound}) in our notations as follows. 
\begin{align}
\label{eq:bw_bound1}
BW_{d} \geq \frac{d}{d - z}\Mc^{s},
\end{align}
where we count the communication bandwidth $BW_{d}$ in terms of number of symbols over the base field $\ff_q$.

\begin{definition}{($z$-secure distributed storage system)} Consider an $(n, k)$-DSS storing a file $\fv^s$ of size $\Mc^{s}$ (over $\ff_q$) under the $(d, t)$-CMR model. We say that the DSS is {\em $z$-secure} if an eavesdropper who has access to the content of any set of $z$ (out of $n$) storage nodes does not gain any information about the file $\fv^{s}$. 
\end{definition}

\begin{remark}
Recall that when there is no security requirement, we denote the file stored on the DSS and its size as $\fv$ and $\Mc$ (over $\ff_q$), respectively (cf.~Section~\ref{sec:gen}). The quantity $\Mc - \Mc^{s}$ denotes the loss in the file size that the system has to bear in order to guarantee the information theoretic security of the stored file against an eavesdropper. Or, this part of the data can be considered as public information (without any secrecy constraints), as compared to the private counterpart (which has secrecy constraints).
\end{remark}

The file size bounds for DSS which are secure against even a general eavesdropping model where an eavesdropper can observe both the content stored on a set of nodes and the content downloaded during the repair of another set of node have been previously considered in the literature. The regenerating coding schemes which are secure against such eavesdroppers are presented in \cite{PRR11, SRK_globecom11, RKSV12, HPX15a} and references therein. Similarly, the problem of designing secure cooperative regenerating codes is explored in \cite{KRV12, HPX15b}. As discussed in Section~\ref{sec:msmr_codes} and \ref{sec:mbmr_codes}, both regenerating codes and cooperative regenerating codes are specific sub-classes of codes for centralized repair model. Therefore, both the secure regenerating codes and secure cooperative regenerating codes which can prevent the leakage of information to an eavesdropper observing the content stored on $z$-storage nodes form special cases of $z$-secure DSS under CMR  model with respective system parameters. 

We can utilize $z$-secure coding scheme for DSS under the CMR model to obtain communication efficient secret sharing schemes. We first illustrate this approach with the help of a secure MSR code in the following subsection. We then comment on how this approach can be employed using general secure coding schemes for DSS under the CMR model.

\subsection{An example}
Let $\Cc$ be a linear systematic code which operates at $(n, \Mc, d < n - 1, \alpha = \frac{\Mc}{z + 1}, \beta = \frac{\alpha}{d - (z+1) + 1})_q$ MSR point. Note that this $\Cc$ is also an MDS code where the content of any $k = z+1$ symbols is sufficient to recover the entire file of size $\Mc$. We next show how we can use $\Cc$ to construct a communication bandwidth efficient $(N = n - 1, \Mc^{s} = \alpha = \frac{\Mc}{z + 1}, r = N - z - 1, z)_{\alpha}$ secret sharing scheme. 

Let $\mv = (m_1,\ldots, m_{\alpha}) \in \ff_q^{\alpha}$ denote the secret of size $\alpha$ over $\ff_q$. Let $\rv = (r_1, r_2,\ldots, r_{z\alpha}) \in \ff_q^{z\alpha}$ be $z\alpha$ random symbols which are distributed uniformly at random over $\ff_q$. We encode the $\Mc = (z + 1)\alpha$-length vector $(\mv, \rv) \in \ff_q^{\Mc}$ using the MSR code $\Cc$. Let $(\cv_1, \cv_2,\ldots, \cv_n) = (\mv, \rv, \cv_{z+2}, \ldots, \cv_n) \in \ff_{q^{\alpha}}^{n}$ denote the associated MSR codeword. Note that a code symbol, say $\cv_i$, can be repaired by any set of $d$ out of the remaining $n-1$ code symbols by downloading at most $d\beta = \frac{d}{d - (z+1) + 1}\alpha = \frac{d}{d - z}\Mc^{s}$ symbols (over $\ff_q$) from the contacted $d$ nodes. In order to obtain a secret sharing scheme we puncture the symbol $\cv_1$ from each of the codewords in $\Cc$ which gives us another code $\tilde{\Cc} \in \ff_{q^{\alpha}}^{n-1}$. Let $\tilde{\cv} = (\cv_2, \cv_3,\ldots, \cv_{n}) \in \ff_{q^{\alpha}}^{n-1}$ be the codeword in $\tilde{\Cc}$ which is obtained by removing the first code symbol from the codeword $\cv \in \Cc$.  For the secret $\mv$ we treat $n-1$ symbols in $\tilde{\cv}$ as $N = n - 1$ shares of the secret sharing scheme. In order to reconstruct the secret $\mv$, we can invoke the node repair process of first node (code symbol) in the original MSR code $\Cc$ where we contact $d$ shares and download $\beta =\frac{\alpha}{d-k+1}$ from each of these $d$ shares. This leads to the communication bandwidth of 
$$
d\beta =  \frac{d}{d - z}\Mc^{s},
$$
which matches the bound in \eqref{eq:bw_bound1}. Using the MDS property \footnote{Since $\Cc$ is an $(n, z+1)$ MDS code, it is straightforward to observe that $\tilde{\Cc}$ is an $(n - 1, z+1)$ MDS code.} of $\tilde{\Cc}$, it is easy to argue that $\Cc$ is a $z$-secure coding scheme. Note that besides reconstructing the secret $\mv$ in a communication efficient manner, the proposed scheme also allows the bandwidth efficient repair of any of the $N = n - 1$ shares by using $d$ out of $N-1 = n-2$ remaining shares. This can be performed again by invoking the repair mechanism of the original MSR code $\Cc$.

\begin{figure}[htbp]
	\centering
		\includegraphics[width=0.25\textwidth]{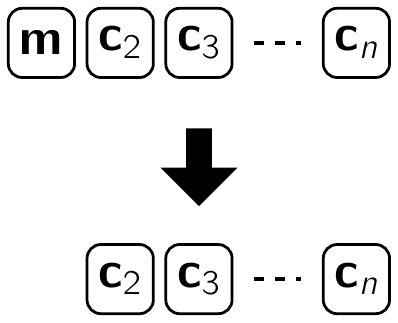}
          \caption{Message puncturing from an MSR code gives communication efficient secret sharing with repairable shares.}
	\label{fig:msr_example}
\end{figure}

\subsection{Construction of communication and repair efficient secret sharing schemes using MSMR codes}

Generally, we can utilize an MSMR code to obtain a communication efficient secret sharing scheme which also enables bandwidth efficient repair of the shares in the scheme. Let $\Cc$ be a systematic linear $(n, k = z + t, d, t, \alpha = \frac{\Mc}{k}, \gamma = \frac{t\Mc}{k}\frac{d}{d - k + t})_q$-MSMR code. Recall that this code encodes a file $\fv$ of size $\Mc$ over $\ff_q$ to an $n$-length codeword $\cv = (\cv_1,\cv_2,\ldots, \cv_n) \in \ff_{q^{\alpha}}^n$ such that we have
$$
\cv_i = (\fv_{(i-1)\alpha + 1}, \fv_{(i-1)\alpha + 2},\ldots, \fv_{i\alpha}) \in \ff_q^{\alpha}~\text{for}~1 \leq i \leq k.
$$
Using the code $\Cc$, we now construct a communication efficient $(N = n - t, \Mc^{s} = t\alpha = \frac{t\Mc}{z+t}, r = N - z - t= n - k - t, z)_{\alpha}$ secret sharing scheme. Let $\mv \in \ff_q^{\Mc^{s}} = \ff_q^{t\alpha}$ denote the secret to be encoded. Let $\rv = (r_1, r_2,\ldots, r_{z\alpha}) \in \ff_q^{z\alpha}$ be $z\alpha$ independent random symbols which are distributed uniformly at random over $\ff_q$. We encode the $\Mc = (z + t)\alpha$ symbols long file $\fv = (\mv, \rv) \in \ff_q^{(z+t)\alpha}$ using the MSMR code $\Cc$. Given $\cv = (\cv_1, \cv_2,\ldots, \cv_n) \in \ff^n_{q^{\alpha}}$, the codeword associated with the file $\fv$ in the MSMR code $\Cc$, we puncture the codeword at the first $t$ code symbols to obtain a punctured codeword $\tilde{\cv} = (\tilde{\cv}_1, \tilde{\cv}_2,\ldots, \tilde{\cv}_{N}) = (\cv_{t+1}, \cv_{t+2},\ldots, \cv_{n}) \in \ff_{q^{\alpha}}^{n-t}$. Assuming that $\tilde{\Cc}$ denotes the codebook obtained by puncturing all the codewords in $\Cc$ at the first $t$ code symbols, we have $\tilde{\cv} \in \tilde{\Cc}$. We claim that $\tilde{\Cc}$ gives us a $(N = n - t, \Mc^{s} = t\alpha = \frac{t\Mc}{z+t}, r = N - z = n - k, z)_{\alpha}$ secret sharing scheme.

\begin{figure}[htbp]
	\centering
		\includegraphics[width=0.50\textwidth]{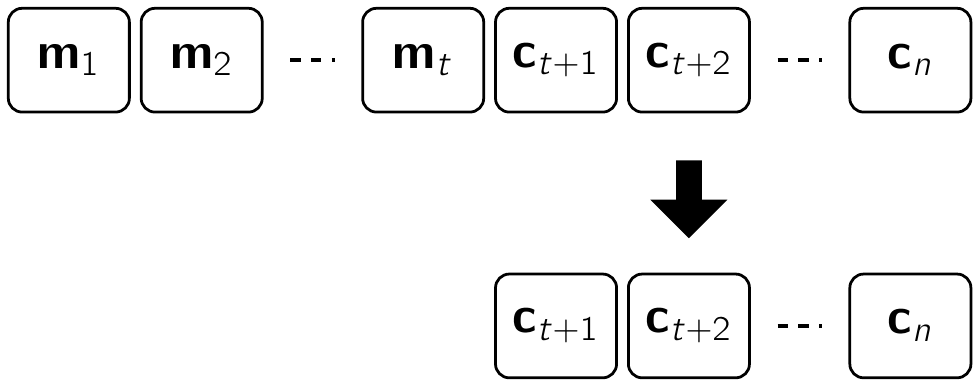}
          \caption{Puncturing of the secret covering multiple symbols in an MSMR code gives communication efficient secret sharing with multi-share repair property.}
	\label{fig:mscr_example}
\end{figure}

\begin{itemize}
\item \textbf{Security:}~Let's consider an adversary who has access to $z$ shares $\tilde{\cv}_{i_1},\tilde{\cv}_{i_2},\ldots, \tilde{\cv}_{i_z}$. We make two observations: First, we have $H(\tilde{\cv}_{i_1},\tilde{\cv}_{i_2},\ldots, \tilde{\cv}_{i_z}) \leq z\alpha =  H(\rv)$. Second, given $\mv$ and $\tilde{\cv}_{i_1},\tilde{\cv}_{i_2},\ldots, \tilde{\cv}_{i_z}$ we have access to $k = z + t$ code symbols of $\cv \in \Cc$; as a result, we can decode $\rv$ (by decoding $\fv = (\mv, \rv)$) as $\Cc$ is an $(n, k = z+ t)$ MDS code. From these two observations, it follows that the adversary does not get any information about the secret $\mv$ from the $z$ shares at its disposal~\cite{SRK_globecom11, RKSV12}.
\item \textbf{Communication efficiency:~}Assume that we contact a set of $d$ shares $\tilde{\cv}_{i_1},\tilde{\cv}_{i_2},\ldots, \tilde{\cv}_{i_d}$. Since these $d$ shares form $d$ code symbols in the codeword $\cv$ of the MSMR code, we can use these $d$ shares to recover the secret $\mv$ which constitutes the first $t$ code symbols of the code word $\cv$. Recall that we download total $\gamma = \frac{t\Mc}{z+t}\frac{d}{d - (z+t) +t} = \frac{d}{d - z}{\Mc^{s}}$ symbols (over $\ff_q$) from the $d$ shares we contact. Note that $\gamma$ is exactly equal to the lower bound on \eqref{eq:bw_bound1} which establishes the communication efficiency of the obtained secret sharing scheme.
\end{itemize}

\begin{remark}
Note that if we have $d < N - t$, by invoking the repair process of the original MSMR code $\Cc$, we can repair any $t$ shares by contacting any set of $d$ out of $N-t$ remaining shares and downloading $\gamma = \frac{d}{d - z}{\Mc^{s}}$ symbols from the contacted shares. We are not necessarily required to repair the shares in the group of $t$ failed shares at a time. If the original MSMR coding scheme also allows for bandwidth efficient repair of less than $t$ code symbols (nodes) at a time, then we can also repair less than $t$ failed shares at a time by using such repair mechanism. Specifically, in Section~\ref{sec:msmr1} and \ref{sec:msmr2}, we discuss some constructions of MSMR codes that are obtained from MSR codes. Therefore, the secret sharing schemes designed by these codes enable bandwidth efficient repair of one share at a time as well.
\end{remark}

\subsection{Secret sharing schemes using MBMR codes}
\label{sec:mbmr_ss}

In this subsection, we illustrate how coding schemes at the MSMR point can be utilized to construct communication and repair efficient secret sharing scheme. Note that this allows us to increase the size of the share in an MDS code in order to lower the repair bandwidth. Furthermore, this also allows us to construct explicit secret sharing schemes for a wider set of parameters $n, k, d$, and $\alpha$. Here we note that the MBMR coding scheme that we employ in this subsection is from \cite{WangZhang13}. We define the following quantities. 
\begin{align}
\Mc &= k(2d + t - k) = (z + t)(2d + t - (z + t)) = (z + t)(2d - z) \\
\Mc^{s} &= t(2d + t - k - z) = t(2d + t - (z + t) - z) = 2t(d - z). 
\end{align}
Given $n, k = z+t$ and  $d$, we construct an MBMR code as follows. 
\begin{itemize}
\item Let $\{y_1, y_2,\ldots, y_{n + d + t - 1}\} \subseteq \ff_q$ be $n + d + t -1$ distinct elements in $\ff_q$. Similarly, we select another set of $n + d -1$ distinct elements in $\ff_q$ as $\{x_1, x_2,\ldots, x_{n + d -1}\} \subseteq \ff_q$.
\item Given an $\Mc$-length message vector $\fv = (f_1, f_2,\ldots, f_{\Mc}) \in \ff^{\Mc}_q$ construct a bi-variate polynomial such that 
\begin{align}
\label{eq:mbmr_poly}
F(X, Y) = \sum_{\substack{0 \leq i < k,\\ 0 \leq j < k}}a_{i, j}X^iY^j +  \sum_{\substack{0 \leq i < k,\\ k \leq j < d+t}}b_{i, j}X^iY^j  +  \sum_{\substack{k \leq i < d,\\ 0 \leq j < k}}c_{i, j}X^iY^j.
\end{align}
Here,
\begin{align}
\label{eq:f_precode}
(a_{0, 1}, a_{0, 2},\ldots, a_{k-1, k-1}, b_{0, k}, b_{1, k},\ldots, b_{k-1,d+t-1}, c_{k, 0}, c_{k,1},\ldots, c_{d-1,k-1}) = \Am\fv
\end{align}
for an $\Mc \times \Mc$ matrix $\Am$ with entries from $\ff_q$ which we specify later.
\item Given the polynomial $F(X, Y)$, the $i$th code symbol $\cv_i$ of the codeword associated with $\fv$ in $\Cc$ is obtained by evaluating $F(X, Y)$ at 
$$
\{(x_i, y_i), (x_{i}, y_{i+1}),\ldots, (x_i, y_{i + d + t -1}), (x_{i+1}, y_i), (x_{i+2}, y_i),\ldots, (x_{i+d-1}, y_i)\}.
$$
That is, we have
$$
\cv_i = \big(F(x_i, y_i), F(x_{i}, y_{i+1}),\ldots, F(x_i, y_{i + d + t -1}), F(x_{i+1}, y_i), F(x_{i+2}, y_i),\ldots, F(x_{i+d-1}, y_i)\big) \in \ff_q^{2d + t -1}.
$$
\end{itemize}
\begin{remark}
\label{rem:wang1}
Note that the code symbol $\cv_i$ contains $d+t$ evaluations of the degree-$(d+t)$ polynomial $h_i(Y) =  F(x_i, Y)$ at distinct points $\{y_i, y_{i+1},\ldots, y_{i+d+t-1}\}$. Therefore, the content of $\cv_i$ is sufficient to recover the polynomial $h_i(Y) =  F(x_i, Y)$. Similarly, $\cv_i$ contains $d$ evaluations of the degree-$d$ polynomial $g_i(X) =  F(X, y_i)$ at distinct points $\{x_i, x_{i+1},\ldots, x_{i+d -1}\}$. This implies that the content of $\cv_i$ is sufficient to recover the polynomial $g_i(X) = F(X, y_i)$.
\end{remark}

\begin{remark}
\label{rem:wang2}
In \cite{WangZhang13}, Wang and Zhang show that this construction enables repair of any $t$ code symbols (nodes) under a cooperative repair framework. This implies that the coding scheme can also be utilized in the centralized repair framework. As discussed in Section~\ref{sec:mbmr_construction}, these codes operate at the MBMR point with $\alpha = 2d + t - 1$ and the repair bandwidth 
\begin{align}
\label{eq:wang_bw}
\gamma_t = \frac{\Mc2dt}{k(2d + t - k)}  = \frac{\Mc2dt}{(z + t)(2d  - z)}.
\end{align}
\end{remark}

The codeword associated with the information symbols $\fv$ in $\Cc$ is described in Figure~\ref{fig:wang1}. Note that the content of evaluations highlighted in Figure~\ref{fig:wang1} form an information set as the original $\Mc$ information symbols $\fv$ can be reconstructed from the highlighted symbols~\cite{WangZhang13}. Therefore, it is possible to precode the information symbols  $\fv$ using an $\Mc \times \Mc$ matrix $\Am$ (cf.~\eqref{eq:f_precode}) such that the information symbols themselves appear at the highlighted positions in the codewords of $\Cc$. Note that this corresponds to a systematic encoding for the code $\Cc$.

\begin{figure}[htbp]
	\centering
		\includegraphics[width=1.0\textwidth]{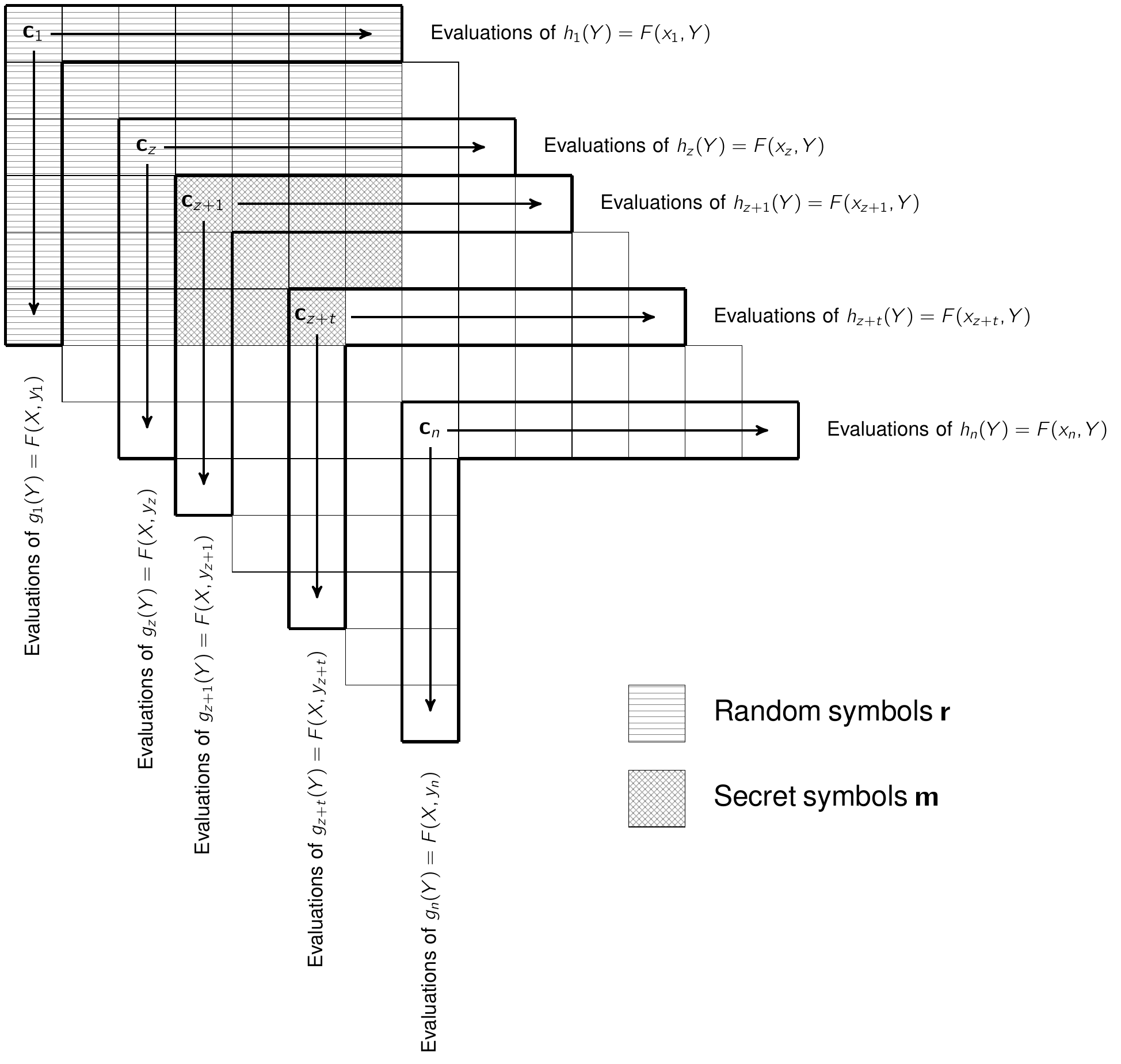}
          \caption{Description of the MBMR coding scheme with a particular systematic encoding utilized in this paper. Each node (code symbol) is collection of $d+t$ and $d$ evaluations of the univariate polynomials $\{h_i(Y)\}$ and $\{g_i(X)\}$, respectively. Note that the systematic encoding ensures that the random symbols $\rv$ and the secret symbols $\mv$ appear in the first $z$ and the subsequent $t$ code symbols, respectively.}
	\label{fig:wang1}
\end{figure}

We now describe how we can utlize the MBMR code described above to obtain a a communication efficient $(N = n - t, \Mc^{s}, d, r = N - z - t = n - z - 2t, z)_{\alpha, q}$-secret sharing scheme. Let $\mv = (m_1, m_2,\ldots, m_{\Mc^{s}}) \in \ff^{\Mc^{s}}_q$ be a $\Mc^{s}$-length (over $\ff_q$) secret that needs to be encoded in the secret sharing scheme. Let $\rv = (r_1, r_2,\ldots, r_{\Rc}) \in \ff_q^{\Rc}$ be $\Rc = \Mc - \Mc^{s} = 2dz + zt -z^2$ i.i.d. random variables which are uniformly distributed over $\ff_q$. Given the $\Mc^{s}$-length secret $\mv$, the $(\Mc - \Mc^{s})$-length random symbols $\rv$ and the precoding matrix $\Am$, we construct the $\Mc$-length information vector \footnote{Each of the $\Mc$ symbols in the vector $\fv$ comprises either a symbol from $\mv$ or $\rv$. Moreover, each symbol of $\mv$ and $\rv$ appears in exactly one coordinate of the vector $\fv$.} $\fv$ such that the secret $\mv$ appears in the code symbols $\cv_{z+1},\ldots, \cv_{z + t}$ as described in Figure~\ref{fig:wang1}. In other words, there exists a permutation $\sigma: [\Mc^{s}] \rightarrow [\Mc^{s}]$ such that we have
\begin{align}
\label{eq:secret_poly}
F(x_{z +1}, y_{z+1}) = m_{\sigma(1)}, F(x_{z+1}, y_{z+2}) &= m_{\sigma(2)},\ldots, F(x_{z+1}, y_{z + d + t - 1}) = m_{\sigma(d+t)}, \nonumber \\
&\vdots \nonumber \\
F(x_{z + t}, y_{z+1}) = m_{\sigma((t-1)(d+t) + 1)}, F(x_{z + t}, y_{z+2}) &= m_{\sigma((t-1)(d+t) + 2)},\ldots, F(x_{z + t}, y_{z + d + t - 1}) = m_{\sigma(t(d+t))} \nonumber \\
F(x_{z + t+1}, y_{z+1}) = m_{\sigma(t(d+t)+1)}, F(x_{z + t+1}, y_{z+2}) &= m_{\sigma(t(d+t)+2)},\ldots, F(x_{z + t+1}, y_{z + t}) = m_{\sigma(t(d+t)+t)}, \nonumber \\
&\vdots \nonumber \\
F(x_{d}, y_{z+1}) = m_{\sigma((t-1)(d+t) + 1)}, F(x_{d}, y_{z+2}) &= m_{\sigma((t-1)(d+t) + 2)},\ldots, F(x_{d}, y_{z + t})  = m_{\sigma(\Mc^{s})},
\end{align}
where we have used the fact that $\Mc^{s} = 2t(d - z)$ in the last equality. Now, the encoding of the secret $\mv$ in the secret sharing scheme $\tilde{\Cc}$ is obtained by puncturing the code symbols $\cv_{z+1}, \cv_{z+2},\ldots, \cv_{z+t}$ for the codeword $\cv \in \Cc$ (cf. Figure~\ref{fig:wang2}). Thus, the $N = n - t$ shares associated with the secret $\mv$ in the secret sharing scheme $\tilde{\Cc}$ are defined as 
\begin{align}
\tilde{\cv} = (\tilde{\cv}_1, \tilde{\cv}_2,\ldots, \tilde{\cv}_N) = ({\cv}_1,\ldots, \cv_{z}, \cv_{z + t + 1},\ldots, \cv_{n}) \in \ff_{q^{\alpha}}^{N}.
\end{align}

\begin{figure}[t!]
	\centering
		\includegraphics[width=1.0\textwidth]{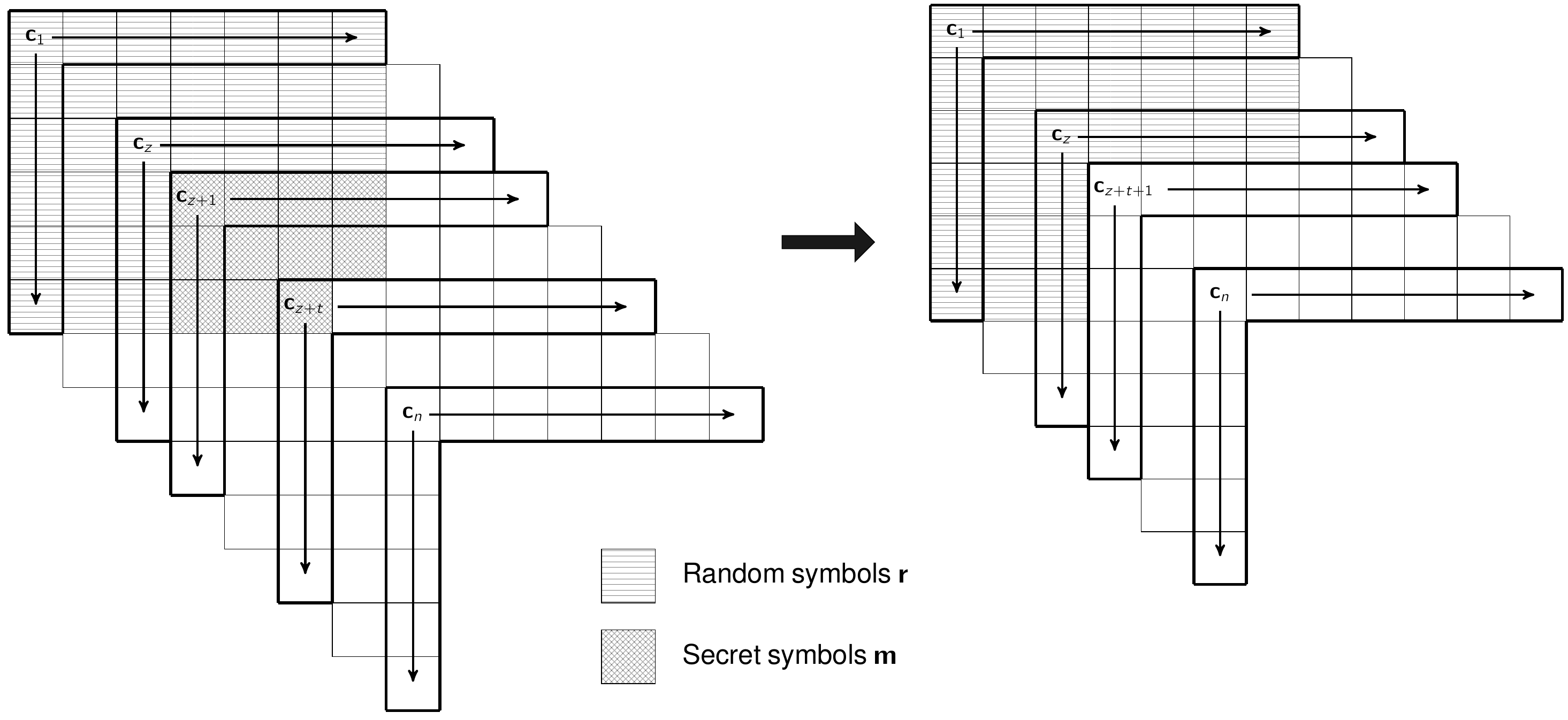}
          \caption{Puncturing of the secret covering multiple symbols in the MBMR code gives communication efficient secret sharing with multi-share repair property.}
	\label{fig:wang2}
\end{figure}

We now argue the security and the communication efficiency of the proposed secret sharing scheme.
\begin{itemize}
\item \textbf{Security:}~Assume that an adversary has access to $z$ shares $\tilde{\cv}_{i_1},\tilde{\cv}_{i_2},\ldots, \tilde{\cv}_{i_z}$. Let these shares correspond to the code symbols $\cv_{j_1}, \cv_{j_2},\ldots, \cv_{j_z}$ in the associated codeword in the MBMR code $\Cc$ with $\{j_1, j_2,\ldots, j_s\} \subset [n]\backslash \{z+1,\ldots, z+t\}$. It follows from the Remark~\ref{rem:wang1} that the adversary knows the following univariate polynomials\footnote{Knowing a polynomial  means that the adversary knows the coefficients of the polynomials and can evaluate the polynomial at any point in $\ff_q$}. 
\begin{align}
\label{eq:adversaryMBMR}
\big\{h_{j_s}(Y) = F(x_{j_s}, Y), g_{j_s}(X) = F(X,y_{j_s})\big\}~~\text{for}~s = 1, 2,\ldots, z.
\end{align}
It is argued in \cite{KRV12} that if $\ev$ denote the symbols (in $\ff_q$) known to the adversary by observing $z$ shares, then we have
\begin{align}
H(\ev) \leq H(\rv) = \Mc - \Mc^{s}.
\end{align}
Next, we argue that given the observations of the adversary (cf.~\eqref{eq:adversaryMBMR}) and the secret $\mv$, one can decode the random symbols $\rv$, i.e., $H(\rv|\mv, \ev) = 0$. Note that the secret symbols $\mv$ correspond to part of the code symbols $\cv_{z+1},\ldots, \cv_{z+t}$. As highlighted in \eqref{eq:secret_poly}, the secret symbols $\mv$ are evaluations of the polynomial $F(X, Y)$ (cf.~\eqref{eq:mbmr_poly}). In particular, knowing the secret $\mv$ translates to knowing $d + t - z$ evaluations of each of the polynomials in $\{h_j(Y) = F(x_j, Y)\}_{j = z+1,\ldots, z+t}$ and $d - z$ evaluations of each of the polynomials in $\{g_j(X) = F(X,y_{j})\}_{j = z+1,\ldots, z+t}$ (cf.~\eqref{eq:secret_poly}). Now, using the observations of the adversary (cf.~\eqref{eq:adversaryMBMR}), we can obtain $z$ additional observations of each of the polynomials $\{h_j(Y) = F(x_j, Y), g_j(X) = F(X,y_{j})\}_{j = z+1,\ldots, z+t}$ as follows.
\begin{align}
h_j(y_{j_s}) = F(x_j, y_{j_s}) = g_{j_s}(x_j)~\text{for}~j \in \{z+1,\ldots, z+t\}~\text{and}~s\in \{1,\ldots, z\}
\end{align}
and
\begin{align}
g_j(x_{j_s}) = F(x_{j_s}, y_j) = h_{j_s}(y_j)~\text{for}~j \in \{z+1,\ldots, z+t\}~\text{and}~s\in \{1,\ldots, z\}.
\end{align}
Therefore, given the observations of the adversary and the secret symbols one has access to $k = z+t$ code symbols $\cv_{j_1},\ldots, \cv_{j_s}, \cv_{z+1},\ldots, \cv_{z+t}$ of the associated codeword in $\Cc$. Now one can use a decoding algorithm of $\Cc$ to decode $\Am\fv$ and subsequently obtain $\fv$. Note that the random symbols $\rv$ can now be obtained as these symbols constitute $\Mc - \Mc^{s}$ coordinates of the vector $\fv$. 

From the two observations shown above that $H(\ev) \leq H(\rv)$ and $H(\rv|\ev, \mv) = 0$, it follows that the adversary does not get any information about the secret $\mv$ from the $z$ shares it has access to~\cite{SRK_globecom11, RKSV12}.

\item \textbf{Communication efficiency:~}Note that the secret $\mv$ can be obtained by repairing $t$ code symbols $\cv_{z+1}, \cv_{z+2},\ldots, \cv_{z+t}$ in the associated codeword $\cv \in \Cc$. Since $\Cc$ is an MBMR code, this repair process can be performed by contacting a set of $d$ shares say $\tilde{\cv}_{i_1},\tilde{\cv}_{i_2},\ldots, \tilde{\cv}_{i_d}$ and downloading total 
$$\gamma_t = \frac{\Mc2dt}{(z + t)(2d  - z)}  = \frac{d}{d - z}{\Mc^{s}}$$ symbols (over $\ff_q$) from the $d$ shares we contact (cf.~\ref{eq:wang_bw}). Comparing $\gamma_t$ to the lower bound on \eqref{eq:bw_bound1} establishes the communication efficiency of the secret sharing scheme based on the MBMR code from \cite{WangZhang13}.
\end{itemize}
 
\begin{remark}
Since the secret sharing scheme $\tilde{\Cc}$ is obtained by puncturing $t$ code symbols in the MBMR code $\Cc$. Assuming that the original MBMR code has $d \leq N - t = n - 2t$, we can repair any $t$ shares in a bandwidth efficient manner by invoking the repair mechanism of the MBMR code $\Cc$. This repair process would involve contacting any set of $d$ out of remaining $N - t$ shares and downloading $\gamma_t = \frac{\Mc2dt}{(z + t)(2d  - z)}$ symbols (over $\ff_q$) from the contacted shares.
\end{remark}

\section*{Acknowledgememt}
The authors would like to thank Salim El Rouayheb for pointing us to
the work by Cadambe et al.~\cite{CadambeT13}, which includes a bound and an
existential result on the repair of multiple failures in an MDS code.
\bibliographystyle{unsrt}
\bibliography{multinode}

\appendices

\section{Zigzag codes: simultaneous repair of up to $3$ failed systematic nodes}
\label{appen:zigzag}

\subsection{Description of the zigzag construction~\cite{zigzag13}}
Let $\ZZ_r$ denote the set $\{0, 1,\ldots, r-1\}$. Let $\ev_1, \ev_2,\ldots, \ev_m \in \ZZ_r^{m}$ denote the $m$ standard $m$-dimensional unit vectors. For $i \in [m]$, the vector $\ev_i$ has all but one of its coordinates as zero. The vector $\ev_i$ has value $1$ at its $i$-th coordinate. We use $\ev_0 \in \ZZ_r^m$ to denote an $m$-dimensional all zero vector. For an integer in $[0, r^m - 1]$ we associate a unique vector from $\ZZ_r^m$ as its vector representation. Given $(m+1)r^m$ information symbols, encoding of an $(n = m + 1 + r, k = m + 1)$ zigzag code works as follows.
\begin{itemize}
\item Arrange the $(m+1)r^m$ information symbols in an $r^m \times (m + 1)$ array. For $j \in [m]$ and $i \in [r^m-1]$, let $x_{i,j}$ denote the $(i+1)$-th information symbol in the $(j + 1)$-th column of the array. The $k = (m + 1)$ columns of the information array represent the $k$ systematic nodes in the zigzag code construction.
\item In order to generate $r$ parity nodes, for every $l \in [0, r-1]$ and $s \in [0, r^m-1]$, we define the zigzag set 
\begin{align}
\label{eq:Zset}
\Zc^{l}_{s} = \{x_{i,j}~:i + l\ev_j = s\}.
\end{align}
Note that we use the vector representations  of $i, s \in [0, r^m-1]$ while defining the set $\Zc^{l}_s$ in \eqref{eq:Zset}. Given the zigzag set $\Zc^{l}_s$, $(s+1)$-th symbol stored on the $(l + 1)$-th parity node is linear combination of the information symbols in the zigzag set $\Zc^{l}_s$. The coefficients of the linear combinations belong to non-zero elements (multiplicative group) of a large enough finite field.
\end{itemize}

\subsubsection{Repair of a single systematic node in zigzag codes~\cite{zigzag13}}
Here, we briefly describe the repair mechanism of a single systematic node in the zigzag code construction. For $j \in [1, m]$ and $l \in [0, r-1]$, we define the set 
\begin{align}
\label{eq:Xset}
\Xc^l_{j} = \{i \in [0, r^m - 1]~:~i\cdot \ev_j = r - l\}.
\end{align}
Again, we use the vector representation of the integer $j\in [0, r^m - 1]$ while defining the set $\Xc^{l}_{j}$ in \eqref{eq:Xset}. For $j = 0$ and $l \in [0, r-1]$, we define the corresponding set $\Xc^l_{0}$ as follows. 
\begin{align}
\label{eq:Xset0}
\Xc^l_{0} = \{i \in [0, r^m - 1]~:~i\cdot (1, 1,\ldots, 1) = l\}.
\end{align}
For $j \in [0, m]$, those informations symbols stored on the $(j + 1)$-th systematic node which are indexed by the set $\Xc^l_{j}$ are recovered by downloading the code symbols from the $(l+1)$-th partiy node. From the remaining $k - 1 = m$ systematic nodes, we download those symbols which appear in the symbols downloaded from the parity nodes. Combining the definitions in \eqref{eq:Zset}, \eqref{eq:Xset} and \eqref{eq:Xset0}, we obtain the following. 
\begin{proposition}
\label{prop:zigzag_repair}
For $l \in [0,r-1]$, let $\Dc^{1}_l$ be the set defined as follows. 
\begin{align}
\label{eq:R_repair}
\Dc^{1}_l = \{i \in [0, r^{k-1}-1]~:~i \cdot (1, 1, \ldots, 1) = l\}.
\end{align}
Furthermore, for $j \in [1, k-1]$ and $l \in [0, r-1]$, we define the set $\Dc^{j+1}_{l}$ as follows. 
\begin{align}
\label{eq:R_repair2}
\Dc^{j+1}_l = \{i \in [0, r^{k-1}-1]~:~i \cdot \ev_j = 0~({\rm mod}~r)\}.
\end{align}
Then, for $j \in [0, k-1]$ and $l \in [0, r-1]$, the set $\Dc^{j+1}_{l}$ denotes the indices of the parity symbols downloaded from $(l + 1)$-th parity node in order to repair the $(j+1)$-th systematic node in the event of a single failure.
\end{proposition}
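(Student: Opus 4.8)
The plan is to read off the single-node repair procedure of \cite{zigzag13} recalled above and to verify directly that, for each failed systematic node $j+1$ and each parity node $l+1$, the set of parity-symbol indices that the procedure downloads coincides with $\Dc^{j+1}_l$ of \eqref{eq:R_repair}--\eqref{eq:R_repair2}. No new idea is needed beyond a careful translation between integers in $[0,r^m-1]$ and their vector representations in $\ZZ_r^m$.

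First I would fix $j\in[0,m]$, so the failed node is node $j+1$. By the recalled procedure, for each $l\in[0,r-1]$ the information symbols $\{x_{i,j} : i\in\Xc^l_j\}$ are exactly those recovered with the help of parity node $l+1$, and since $\Xc^l_j$ (see \eqref{eq:Xset}, \eqref{eq:Xset0}) is determined by fixing the $j$-th coordinate of $i$ (respectively the coordinate-sum of $i$) modulo $r$, these sets partition $[0,r^m-1]$ as $l$ varies, so the whole of node $j+1$ is covered exactly once. The key observation is then: a parity symbol of node $l+1$ with index $s$ must be downloaded precisely when its zigzag set $\Zc^l_s$ (see \eqref{eq:Zset}) contains one of the target symbols $x_{i,j}$ with $i\in\Xc^l_j$, i.e.\ when $s=i+l\ev_j$ for some such $i$; thus the downloaded index set is the image of $\Xc^l_j$ under the map $i\mapsto i+l\ev_j$.

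I would then compute this image in the two cases. For $j\in[1,m]$: $i\in\Xc^l_j$ forces the $j$-th coordinate of $i$ to be $\equiv-l\pmod r$, so $s=i+l\ev_j$ agrees with $i$ in every coordinate except the $j$-th, whose value is $\equiv0\pmod r$; conversely any $s$ with $s\cdot\ev_j\equiv0\pmod r$ equals $i+l\ev_j$ for $i=s-l\ev_j\in\Xc^l_j$. Hence the image is exactly $\{s : s\cdot\ev_j\equiv0\pmod r\}=\Dc^{j+1}_l$, as in \eqref{eq:R_repair2}. For $j=0$: $\ev_0$ is the all-zero vector, so $i+l\ev_0=i$, and $i\in\Xc^l_0$ means $i\cdot(1,\ldots,1)\equiv l\pmod r$; hence the image is $\{s : s\cdot(1,\ldots,1)\equiv l\pmod r\}=\Dc^1_l$, matching \eqref{eq:R_repair}. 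Combining the two cases yields the claimed description of $\Dc^{j+1}_l$.

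Since everything reduces to tracking coordinates modulo $r$, there is no genuinely hard step; the one point requiring care is checking that $i\mapsto i+l\ev_j$ is a bijection from $\Xc^l_j$ onto $\Dc^{j+1}_l$ in both directions, so that the downloaded index set is neither strictly larger nor strictly smaller than $\Dc^{j+1}_l$, together with being uniformly explicit that the defining conditions for $\Xc^l_0$ and $\Dc^1_l$ (the coordinate sums) are read modulo $r$.
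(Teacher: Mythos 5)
Your proof is correct and follows exactly the route the paper intends: the paper itself does not spell out an argument but simply states the proposition after the sentence ``Combining the definitions in \eqref{eq:Zset}, \eqref{eq:Xset} and \eqref{eq:Xset0}, we obtain the following.'' Your derivation is precisely that combination, made explicit: the downloaded index set at parity node $l+1$ is the image of $\Xc^l_j$ under $i\mapsto i+l\ev_j$, and you compute that image to be $\Dc^{j+1}_l$ in each of the two cases $j=0$ and $j\in[1,m]$, checking bijectivity, which is the only point requiring care.
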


\subsubsection{Structure of symbols downloaded to repair different node in the even of a single node failure}
\label{sec:zigzag_multiple}

In our approach to repair $t$ simultaneous node failures, we contact the remaining $d = n - t$ nodes and download symbols in two stages. In the first stage, for each of the failed $t$ node, we download those $\frac{\alpha}{n - k} = r^{k-2}$ symbols from the contacted node which would have been downloaded to repair this node in the event of single node failure. Since some of the symbols from a helper node contribute to the repair of many nodes during the repair of a single node failure, we end up downloading {\em less than} $\frac{t\alpha}{n - k}$ symbols from each of the $d = n - k$ contacted node. Using the structure of the zigzag code, in the second stage, we then download additional symbols from the helper nodes so that each helper node contributes exactly $\frac{t\alpha}{n - k}$ symbols. In order to identify which symbols need to be downloaded in the second stage we need to understand the structure of the parity symbols downloaded in the first stage. Therefore, we first explore this. 

For the ease of exposition, without loss of generality, we assume that the first $t$ systematic nodes are in failure, i.e., the systematic nodes indexed by the set  $[t] := \{0, 1,\ldots, t-1\}$ experience failure. The analysis for other $t$ systematic nodes can be carried out in a similar manner. Recall that for $j \in [0, t-1]$ and $l \in [0, r-1]$, $\Dc^{j+1}_{l}$ denotes the indices of the parity symbols downloaded from $(l + 1)$-th parity node to repair the $(j + 1)$-th systematic node failure. The sets $\big\{\Dc^{j + 1}_{l}\big\}_{j \in [0, t-1],~l \in [0, r-1]}$ are defined in Proposition~\ref{prop:zigzag_repair}. It follows from the definition of these sets, that for any set of $u$ out $t$ failed nodes, say indexed by the set $\{j_1, j_2,\ldots, j_u\} \subseteq [0, t-1]$, and $l \in [0, r-1]$, we have the following. 
\begin{align}
\label{eq:repair_set_intersection}
\bigcap_{j \in \{j_1, j_2,\ldots, j_u\}}\Dc^{j+1}_{l} = r^{k - 1 - u}.
\end{align}
Given this observations, we now define the following families of sets. For $\{j_1, j_2,\ldots, j_u\} \subseteq [0, t-1]$ and $l \in [0, r-1]$, we define the set $\Uc^{\{j_1+1, j_2+1,\ldots, j_u+1\}}_{l}$ to be the indices of the pairty symbols downloaded from the $(l + 1)$-th parity node which participate in the repair of exactly $u \leq t$ systematic nodes indexed by the set $\{j_1,\ldots, j_u\}$ in the event of single node failure. In particular, given $\Sc \subseteq [t]$ and $l \in [0, r-1]$,we have
\begin{align}
\label{eq:setU}
\Uc^{\Sc}_l = \bigcap_{j \in \Sc}\Dc^{j}_{l} \backslash \bigcup_{\Sc \subsetneq \Sc' \subseteq [t-1]}\bigcap_{j \in \Sc'}\Dc^{j}_{l}.
\end{align}
Combining \eqref{eq:setU} with the definitions of the sets $\{\Dc^{j+1}_{l}\}_{j \in [0, k-1], l \in [0, r-1]}$, we obtain that
\begin{enumerate}
\item {\bf Case~1:~$1 \in \Sc$},
\begin{align}
\label{eq:setU1}
\Uc^{\Sc}_l = \{i~:~i\cdot (1,\ldots,1) = l;~i\cdot \ev_{w-1} = 0~\forall w \in \Sc\backslash\{1\};~\text{and}~i\cdot \ev_{v-1} \neq 0~\forall~v \in [t]\backslash \Sc\}\subseteq [0, r^{k-1}-1].
\end{align}
\item {\bf Case~2:~$1 \in [t]\backslash\Sc$},
\begin{align}
\label{eq:setU2}
\Uc^{\Sc}_l = \{i~:~i\cdot \ev_{w-1} = 0~\forall w \in \Sc;~i\cdot (1,\ldots,1) \neq l;~\text{and}~i\cdot \ev_{v-1} \neq 0~\forall~v \in [2,t]\backslash \Sc\}\subseteq [0, r^{k-1}-1].
\end{align}
\end{enumerate}
Moreover, we have that
\begin{align}
\label{eq:sizeU}
|\Uc^{\Sc}_l|  &=  |\bigcap_{j \in \Sc}\Dc^{j}_{l}| - | \bigcup_{\Sc \subsetneq \Sc' \subseteq [t]}\bigcap_{j \in \Sc'}\Dc^{j}_{l}| \nonumber \\
&\overset{(a)}{=} r^{k - 1 - |\Sc|}\left(1 - \frac{1}{r}\right)^{t - |\Sc|},
\end{align}
where $(a)$ follows from \eqref{eq:setU1} and \eqref{eq:setU2}. 
Note that, by construction, for a fixed $l \in [0, r-1]$, the family of sets $\big\{\Uc^{\Sc}_{l}\big\}_{\Sc \subseteq [0,t-1]}$ comprises disjoint sets. In case of $t = 3$, for a fixed value of $l \in [0:r-1]$, this gives us the following sequences of disjoint sets, $\Uc^{\{1\}}_{l}, \Uc^{\{2\}}_{l}, \Uc^{\{3\}}_{l}, \Uc^{\{1, 2\}}_{l}, \Uc^{\{2, 3\}}_{l}, \Uc^{\{1, 3\}}_{l}, \Uc^{\{1, 2, 3\}}_{l}$. Here, for $j \in [0,2]$, $\Uc^{\{j+1\}}_{l}$ denotes the indices of those symbols from the $(l + 1)$-th parity node which participate only in the repair of $(j+1)$-th systematic node in the event of single node failure. The sets $\Uc^{(1,2)}_l$ represent the sets of symbols from $(l+1)$-th parity node that participate only in the repair of $1$st and $2$nd systematic nodes in the event of single node failure. Similarly, the parity symbols from $(l+1)$th parity nodes that enable repair of each of the first $3$ systematic nodes in the event of a single node failure are represented by $\Uc^{(1,2,3)}_l$.

We now characterize the sets of information symbols on the lost (failed) systematic nodes that participate in the sets $\{ \Uc^{\Sc}_l\}_{\Sc \subseteq [t], l \in [0, r-1]}$. For a given $j \in [0, k-1], l \in [0, r-1]$, and $\Sc \subseteq [t]$, let $\Uc^{\Sc}_{j \rightarrow l}$ denotes the indices of the symbols from the $(j + 1)$-th systematic node which participate in the parity symbols in the set $\Uc^{\Sc}_{l}$. Using \eqref{eq:Zset} and \eqref{eq:setU1}, we can explicitly characterize these sets. In particular, we consider $3$ different case. 
\begin{enumerate}
\item {\bf Case~1~(a)~:~$j = 0$~and~$1 \in \Sc$},
\begin{align}
\label{eq:setUsys1a}
\Uc^{\Sc}_{j \rightarrow l} &\overset{(a)}{=} \big\{i \in [0, r-1]^{k-1}~:~i \in \Uc^{\Sc}_{l}\big\} \nonumber \\
&\overset{(b)}{=} \{i~:~i\cdot (1,\ldots,1) = l;~i\cdot \ev_{w-1} = 0~\forall w \in \Sc\backslash\{1\};~\text{and}~i\cdot \ev_{v-1} \neq 0~\forall~v \in [t]\backslash \Sc\}\subseteq [0, r^{k-1}-1].
\end{align}
\item {\bf Case~1~(b)~:~$j = 0$~and~$1 \in [t]\backslash\Sc$},
\begin{align}
\label{eq:setUays1b}
\Uc^{\Sc}_{j \rightarrow l} = \{i~:~i\cdot \ev_{w-1} = 0~\forall w \in \Sc;~i\cdot (1,\ldots,1) \neq l;~\text{and}~i\cdot \ev_{v-1} \neq 0~\forall~v \in [2,t]\backslash \Sc\}\subseteq [0, r^{k-1}-1].
\end{align}
\item {\bf Case~2~(a)~:~$j \neq 0$~and~$\{1, j + 1\} \subseteq \Sc$,}
\begin{align}
\label{eq:setUsys2a}
\Uc^{\Sc}_{j \rightarrow l} &\overset{(a)}{=} \big\{i \in [0, r-1]^{k-1}~:~i + l\ev_j \in \Uc^{\Sc}_{l}\big\} \nonumber \\
&\overset{(b)}{=} \big\{i~:~i\cdot(1,\ldots, 1) = 0;~i\cdot \ev_j + l= 0;~i \cdot \ev_{w-1} = 0~\forall w \in \Sc \backslash\{1, j+1\};~\text{and}\nonumber \\
&~~~~~~~~~~~~~~~~~~~~~~~~~~~~~~~~~~~~~~~i\cdot \ev_{v-1} \neq 0~\forall~v \in [t]\backslash \Sc\big\}\subseteq [0, r-1]^{k-1}.
\end{align}
\item {\bf Case~2~(b)~:~$j \neq 0$, $1 \in \Sc$~and~$j + 1 \in [t]\backslash \Sc$,}
\begin{align}
\label{eq:setUsys2b}
\Uc^{\Sc}_{j \rightarrow l} &{=} \big\{i~:~i\cdot(1,\ldots, 1) = 0;~i \cdot \ev_{w-1} = 0~\forall w \in \Sc \backslash\{1\};~i\cdot \ev_j + l \neq 0~\text{and}~\nonumber \\
&~~~~~~~~~~~~~~~~~~~~~~~~~~~~~~~~~~~~~~~i\cdot \ev_{v-1} \neq 0~\forall~v \in [t]\backslash\{\Sc \cup \{j+1\}\}\big\} \subseteq [0, r-1]^{k-1}.
\end{align}
\item {\bf Case~2~(c):~$j \neq 0$, $1 \in [t]\backslash \Sc$~and~$j + 1 \in \Sc$,}
\begin{align}
\label{eq:setUsys2c}
\Uc^{\Sc}_{j \rightarrow l} &{=} \big\{i~:~i\cdot \ev_j + l = 0;~i \cdot \ev_{w-1} = 0~\forall w \in \Sc \backslash\{j+1\};~i\cdot(1,\ldots, 1) \neq 0;~\text{and}~\nonumber \\
&~~~~~~~~~~~~~~~~~~~~~~~~~~~~~~~~~~~~~~~i\cdot \ev_{v-1} \neq 0~\forall~v \in [2,t]\backslash \Sc\big\}\subseteq [0, r-1]^{k-1}.
\end{align}
\item {\bf Case~2~(d):~$j \neq 0$~and~$\{1, j + 1\} \in [t]\backslash\Sc$,}
\begin{align}
\label{eq:setUsys2d}
\Uc^{\Sc}_{j \rightarrow l} &{=} \big\{i~:~i \cdot \ev_{w-1} = 0~\forall w \in \Sc; i\cdot(1,\ldots, 1) \neq 0;~i\cdot \ev_j + l\neq 0;~~~\text{and}~\nonumber \\
&~~~~~~~~~~~~~~~~~~~~~~~~~~~~~~~~~~~~~~~i\cdot \ev_{v-1} \neq 0~\forall~v \in [2,t]\backslash \{\Sc \cup \{j+1\}\}\big\}\subseteq [0, r-1]^{k-1}.
\end{align}
\end{enumerate}

\subsection{Repairing $t = 2$ failed nodes}
\label{sec:t2case}

\subsubsection{First stage of the download process}

In the first stage, we download the symbols which enable the repair of $1$st and $2$nd systematic nodes in the event of single node failure. In particular, for $l \in [0, r-1]$, we download the parity symbols indexed by the set $$\Dc^{1}_{l} \cup \Dc^{2}_{l} = \Uc^{\{1\}}_{l} \cup \Uc^{\{2\}}_{l} \cup \Uc^{\{1, 2\}}_{l}$$ from the $(l + 1)$-th parity node. From the remaining $k - 2$ systematic nodes, we download those systematic symbols which appear in these parity nodes.

\bgroup
\def\arraystretch{1.2}
\begin{table}[htbp]
\begin{center}
   \small
    \begin{tabular}{|c|c|c|c|c|c|}
    \hline
    Sets& $l = 0$ &$l = 1$& $\cdots$ & $l = r - 1$ \\ \hline
    $\Uc^{\{1\}}_{l}$&${\color{blue} \Uc^{\{1\}}_{0 \rightarrow 0}}, \Uc^{\{1\}}_{1 \rightarrow 0} = \left(\begin{array}{c}
\Uc^{\{1, 2\}}_{1 \rightarrow 1}\\
\Uc^{\{1, 2\}}_{1 \rightarrow 2}\\
\vdots \\
\Uc^{\{1, 2\}}_{1 \rightarrow r - 1}\\
\end{array} \right)$&${\color{blue}\Uc^{\{1\}}_{0 \rightarrow 1}}, \Uc^{\{1\}}_{1 \rightarrow 1} = \left(\begin{array}{c}
\Uc^{\{1, 2\}}_{1 \rightarrow 0}\\
\Uc^{\{1, 2\}}_{1 \rightarrow 2}\\
\vdots \\
\Uc^{\{1, 2\}}_{1 \rightarrow r - 1}\\
\end{array} \right)$&$\cdots$ &${\color{blue}\Uc^{\{1\}}_{0 \rightarrow r-1}}, \Uc^{\{1\}}_{1 \rightarrow r-1} = \left(\begin{array}{c}
\Uc^{\{1, 2\}}_{1 \rightarrow 0}\\
\Uc^{\{1, 2\}}_{1 \rightarrow 1}\\
\vdots \\
\Uc^{\{1, 2\}}_{1 \rightarrow r - 2}\\
\end{array} \right)$ \\ \hline

   $\Uc^{\{2\}}_{l}$&$\Uc^{\{2\}}_{0 \rightarrow 0} = \left(\begin{array}{c}
{\color{blue} \Uc^{\{1, 2\}}_{0 \rightarrow 1}}\\
\Uc^{\{1, 2\}}_{0 \rightarrow 2}\\
\vdots \\
{\Uc^{\{1, 2\}}_{0 \rightarrow r - 1}}\\
\end{array} \right), {\color{red}\Uc^{\{2\}}_{1 \rightarrow 0}}$&$\Uc^{\{2\}}_{0 \rightarrow 1} = \left(\begin{array}{c}
{\Uc^{\{1, 2\}}_{0 \rightarrow 0}}\\
{\color{blue}\Uc^{\{1, 2\}}_{0 \rightarrow 2}}\\
\vdots \\
\Uc^{\{1, 2\}}_{0 \rightarrow r - 1}\\
\end{array} \right), {\color{red}\Uc^{\{2\}}_{1 \rightarrow 1}}$&$\cdots$ &$\Uc^{\{2\}}_{0 \rightarrow r-1} = \left(\begin{array}{c}
{\color{blue}\Uc^{\{1, 2\}}_{0 \rightarrow 0}}\\
\Uc^{\{1, 2\}}_{0 \rightarrow 1}\\
\vdots \\
{\Uc^{\{1, 2\}}_{0 \rightarrow r - 2}}\\
\end{array} \right), {\color{red}\Uc^{\{2\}}_{1 \rightarrow r-1}}$ \\ \hline
    $\Uc^{\{1,2\}}_{l}$&$\Uc^{\{1,2\}}_{0 \rightarrow 0}, {\color{red} \Uc^{\{1,2\}}_{1 \rightarrow 0}}$ & $\Uc^{\{1,2\}}_{0 \rightarrow 1},  {\color{red}\Uc^{\{1,2\}}_{1 \rightarrow 1}}$ & $\cdots$ & $\Uc^{\{1,2\}}_{0 \rightarrow r-1},  {\color{red}\Uc^{\{1,2\}}_{1 \rightarrow r-1}}$ \\ \hline
   \end{tabular}
\end{center}
\caption{Composition of parity symbols downloaded in the first stage of the repair process to repair of $1$st and $2$nd systematic nodes. Blue color symbols correspond to the matched symbols from the $1$st systematic node. Red color symbols represent the matched symbols from the $2$nd system node.}
\label{tab:example1b}
\end{table}
\egroup

We next illustrate a strategy to match the symbols from the failed systematic nodes using the downloaded symbols illustrated in Table~\ref{tab:example1b}. Formally, we have the following. 

\begin{itemize}
\item \textbf{Matching symbols from the $2$nd systematic node using parity symbols $\big\{\Uc^{\{1, 2\}}_{l}\big\}$:}~We use the symbols for the $(l + 1)$-th parity node which are indexed by the set $\Uc^{\{1, 2\}}_{l}$ to match those symbols from the $2$nd systematic node that are indexed by the set 
$$
\Uc^{\{1, 2\}}_{1 \rightarrow l} = \{i~:~i\cdot(1,\ldots,1) = 0, i\cdot\ev_1 = r - l\}.
$$

\item \textbf{Matching symbols from the $1$st systematic node using parity symbols $\big\{\Uc^{\{1\}}_{l}\big\}$:}~
We use the symbols for the $(l + 1)$-th parity node which are indexed by the set $\Uc^{\{1\}}_{l}$ to match those symbols from the $1$st systematic node that are indexed by the set 
$$
\Uc^{\{1\}}_{0 \rightarrow l} = \{i~:~i\cdot(1,\ldots,1) = l, i\cdot\ev_1 \neq 0\}.
$$

\item \textbf{Matching symbols from the $1$st and the $2$nd systematic node using parity symbols $\big\{\Uc^{\{2\}}_{l}\big\}$:}~
We use the symbols for the $(l + 1)$-th parity node which are indexed by the set $\Uc^{\{1\}}_{l}$ to match those symbols from the $1$st systematic node that are indexed by the set 
\begin{align}
\Uc^{\{1, 2\}}_{0 \rightarrow l + 1} = \{i:~i\cdot(1,\ldots, 1) = l + 1, i\cdot \ev_1 = 0\}.
\end{align}
We use the remaining $\mid \Uc^{\{2\}}_{1 \rightarrow l} \mid - \mid \Uc^{\{1, 2\}}_{0 \rightarrow (l + 1)} \mid = (r^{k-2} - 2r^{k-3)})$ parity symbols to match the symbols from the $2$nd systematic node with the following indices.
\begin{align}
\label{eq:t2_matched2_tmp}
\{i:~i\cdot(1,\ldots, 1) \notin \{0, 1\}, i\cdot \ev_1 = r - l\} \subseteq \Uc^{\{2\}}_{1 \rightarrow l} = \{i:~i\cdot(1,\ldots, 1) \neq 0, i\cdot \ev_1 = r - l\} .
\end{align}
\end{itemize}

As it is clear from Table~\ref{tab:example1b} and the matching processing described above, the following number of symbols from the $2$ failed nodes are matched by the parity symbols downloaded in the first stage.
\begin{enumerate}
\item {\bf Number of symbols matched from $1$st systematic node (blue colored):} 

\begin{align}
\underbrace{\sum_{l = 0}^{r - 1}\left |\Uc^{\{1\}}_{0 \rightarrow l} \right|}_{\text{from~$\Uc^{\{1\}}_{l}$}} + \underbrace{\sum_{l = 0}^{r - 1}\left|\Uc^{\{1, 2\}}_{0 \rightarrow l+1}\right|}_{\text{from~$\Uc^{\{2\}}_{l}$}} = r\cdot \left(r^{k-2} - r^{k - 3}\right) + r \cdot r^{k - 3} = r^{k-1} = \alpha.
\end{align}

\item {\bf Number of symbols matched from $2$nd systematic node (red colored):} 
\begin{align}
\label{eq:second_mt}
\underbrace{\sum_{l = 0}^{r - 1}\left (\left |\Uc^{\{2\}}_{1 \rightarrow l}|\right| -  \left|\Uc^{\{1, 2\}}_{0 \rightarrow  l+1}\right| \right)}_{\text{from~$\Uc^{\{2\}}_{l}$}} + \underbrace{\sum_{l = 0}^{r - 1}\left|\Uc^{\{1, 2\}}_{1 \rightarrow l}\right|}_{\text{from~$\Uc^{\{1,2\}}_{l}$}} = r\cdot \left(r^{k-2} -2 r^{k - 3} \right) + r \cdot r^{k - 3} = r^{k-1} - r^{k-2} = \alpha - r^{k - 2}.
\end{align}
\end{enumerate}

\subsubsection{Second Stage of download process}

\begin{itemize}
\item \textbf{Unmatched symbols from the second systematic node~:}~It follows from \eqref{eq:second_mt} that it remains to match $r^{k-2}$ symbols from the $2$nd systematic node. This requires us to download additional symbols from the intact nodes. Towards this, let's consider the unmatched symbols from the second systematic node in the $(l + 1)$-th parity node (cf. Table~\ref{tab:example1b}). These are exactly those symbols among the symbols indexed by the set $\Uc^{\{2\}}_{1 \rightarrow l}$ which form linear combinations with the symbols indexed by the set $\Uc^{\{1, 2\}}_{0 \rightarrow l + 1}$. It can be easily deduced from \eqref{eq:t2_matched2_tmp} that the unmatched symbols from the second systematic node have the following indices.
\begin{align}
\label{eq:sys12_2}
\Rc_{1 \rightarrow l} &= \{i~:~i\cdot(1,\ldots, 1) = 1,~i\cdot\ev_1  = r - l\}
\end{align}
Another way to see this is as follows. Using \eqref{eq:setUsys1a},  we know that 
\begin{align}
\Uc^{\{1, 2\}}_{0 \rightarrow l + 1} = \{i:~i\cdot(1,\ldots, 1) = l + 1, i\cdot \ev_1 = 0\}.
\end{align}
Utilizing the definition of the zigzag set (cf.~\eqref{eq:Zset}), the symbols from the first systematic nodes which are indexed by the set $\Uc^{\{1, 2\}}_{0 \rightarrow l + 1}$ appear in those parity symbols in the $(l + 1)$-th parity node which are indexed by the set 
\begin{align}
\label{eq:parity12}
\{i~:~i \in \Uc^{\{1, 2\}}_{0 \rightarrow l + 1}\} \subseteq [0: r^k-1].
\end{align}
We again utilize the definition of the zigzag sets (cf.~\eqref{eq:Zset}) to identify the symbols from the second systematic node that appear in those parity symbols from the $(l + 1)$-th parity node which are indexed by the set defined in \eqref{eq:parity12}. These are exactly the symbols indexed by the following set.
\begin{align}
\label{eq:sys12_2}
\Rc_{1 \rightarrow l} &= \{i~:~i\cdot(1,\ldots, 1) + l = l + 1,~i\cdot\ev_1 + l = 0\} \nonumber \\
& = \{i~:~i\cdot(1,\ldots, 1) = 1,~i\cdot\ev_1  = r - l\}
\end{align}
\end{itemize}

\begin{itemize}
\item \textbf{Additional symbols downloaded to match remaining symbols from the second systematic node~:}~Now, let's consider an integer $i^{\ast} \in [0, r^m - 1] = [0, r^{k-1}-1]$ such that the following two conditions hold.
\begin{enumerate}
\item $i^{\ast}\cdot(1,1,\ldots, 1) = 1$. Note that we are using vector representation of $i^{\ast}$ in $\ZZ_r^{k-1} = \ZZ_r^m$ in defining this relationship.
\item $i^{\ast}\cdot\ev_1 = r - 1$, i.e., the first coordinate of the vector representation of $i^{\ast}$ takes the nonzero value $r - 1$.
\end{enumerate}

For $j \in  [2, k-1]$, the set of additional symbols downloaded from the $(j+1)$-th systematic node have their row indices belonging to the following set. 
\begin{align}
\label{eq:Sdef}
\Sc^{\{1, 2\}} = \{i^{\ast} + a_1(\ev_3 - \ev_2) + a_2(\ev_4 - \ev_2) + \cdots + a_{k-3}(\ev_{k-1} - \ev_2)~:~(a_1,\ldots, a_{k-3}) \in [r-1]^{k-3} \}.
\end{align}
Note that we have $|\Sc^{\{1, 2\}}|= r^{k-3}$. The reason behind this particular choice for the set $\Sc^{\{1, 2\}}$ will become clear very soon. Now, let's focus on the additional code symbols that need to be downloaded from the parity nodes. For $l \in [0:r - 1]$, we download those parity symbols from the $(l + 1)$-th parity node which involve the information symbols associated with the set $\Sc^{\{1, 2\}}$. Recall that one can use the definitions of the zigzag sets (cf.~\eqref{eq:Zset}) to identify these additional parity symbols that need to be downloaded. In particular, for $l = 0$, the additional symbols downloaded from the $1$st parity node have their row indices belonging to the set 
\begin{align}
\Pc^{\{1, 2\}}_{0} = \Sc^{\{1, 2\}}.
\end{align}

In general, for $l \in [0, r-1]$, the additional parity symbols downloaded from the $(l + 1)$-th parity nodes have their row indices belonging to the following sets
\begin{align}
\label{eq:Rdef}
\Pc^{\{1, 2\}}_l = \Sc^{\{1, 2\}} + l\cdot \ev_2 = \Sc^{\{1, 2\}} + l\cdot \ev_3 = \cdots = \Sc^{\{1, 2\}} + l\cdot \ev_{k-1}.
\end{align}

Now, let's make sure if these additional parity symbols indexed by the sets $\{\Pc^{\{1, 2\}}_{l}\}_{l \in [0:r-1]}$ indeed help us match the unmatched symbols from the second systematic node. Let's first consider the parity symbols indexed by the set $\Pc^{\{1, 2\}}_0$. The symbols from the second systematic node which can be mathced using these parity symbols are the ones indexed by the set $\Pc^{\{1, 2\}}_0 = \Sc^{\{1, 2\}} \subset [0:r-1]^k$ itself, i.e.,
\begin{align}
\{i~:~i\cdot(1, 1,\ldots, 1) = 1,~i\cdot\ev_1 = r - 1\}.
\end{align}
Note that these are exactly those symbol which remained unmatched as they appear together those symbols from the first systematic nodes that are indexed by the set $\Uc^{\{1, 2\}}_{0 \rightarrow 2}$ in the $2$nd parity node, i.e., the symbols from the second systematic node which are indexed by the set $\Rc_{1 \rightarrow 1}$ (cf.~\ref{eq:sys12_2}). Similarly, one can show that the symbols from the second systematic node which can potentially be matched using the additional symbols downloaded from the $(l + 1)$-th parity node, i.e., the parity symbols indexed by the set $\Pc^{\{1, 2\}}_{l}$, are associated with the set. 
\begin{align}
\big\{i~:~i + l\ev_1 \in \Pc^{\{1, 2\}}_l\big\} = \{i~:~i\cdot(1, 1,\ldots, 1) = 1,~i\cdot\ev_1 = r - (l + 1)\}.
\end{align}
Note that these are exactly those symbol denoted by the set $\Rc_{1 \rightarrow l + 1}$. That is, the symbols from second systematic node which remained unmatched as they appear together those symbols from the first systematic nodes that are indexed by the set $\Uc^{\{1, 2\}}_{0 \rightarrow l+2}$ in the $(l + 2)$-th parity node. 
\end{itemize}

\subsection{Repairing $t = 3$ failed nodes}
\label{sec:t3case}

\subsubsection{First stage of the download process}

In the first stage, we download the symbols which enable the repair of the first $3$ systematic nodes in the event of single node failure. In particular, for $l \in [0, r-1]$, we download the parity symbols indexed by the set $$\Dc^{1}_{l} \cup \Dc^{2}_{l} \cup \Dc^{3}_{l} = \Uc^{\{1\}}_{l} \cup \Uc^{\{2\}}_{l} \cup \Uc^{\{3\}}_{l}\cup \Uc^{\{1, 2\}}_{l}\cup \Uc^{\{2, 3\}}_{l}\cup \Uc^{\{1,3\}}_{l}\cup \Uc^{\{1, 2,3\}}_{l}$$ from the $(l + 1)$-th parity node. From the remaining $k - 2$ systematic nodes, we download those systematic symbols which appear in these parity nodes.

\bgroup
\def\arraystretch{1.2}
\begin{table}[htbp]
\begin{center}
   \tiny
    \begin{tabular}{| c | c | c | c | c | c |}
    \hline
    {\footnotesize Sets} & {\footnotesize$l = 0$} &{\footnotesize$l = 1$}& {\footnotesize$\cdots$} & {\footnotesize$l = r - 1$} \\ \hline
$\Uc^{\{1\}}_{l}$&${\color{blue} \Uc^{\{1\}}_{0 \rightarrow 0}}, \left(\begin{array}{c}
\Uc^{\{1, 2\}}_{1 \rightarrow 1}\\
\Uc^{\{1, 2\}}_{1 \rightarrow 2}\\
\Uc^{\{1, 2\}}_{1 \rightarrow 3}\\
\vdots \\
\Uc^{\{1, 2\}}_{1 \rightarrow r - 1}\\
\end{array} \right), \left(\begin{array}{c}
{\color{green}\Uc^{\{1, 3\}}_{2 \rightarrow 1}}\\
\Uc^{\{1, 3\}}_{2 \rightarrow 2}\\
\Uc^{\{1, 3\}}_{2 \rightarrow 3}\\
\vdots \\
{\Uc^{\{1, 3\}}_{2 \rightarrow r - 1}}\\
\end{array} \right)$&${\color{blue} \Uc^{\{1\}}_{0 \rightarrow 1}}, \left(\begin{array}{c}
\Uc^{\{1, 2\}}_{1 \rightarrow 0}\\
\Uc^{\{1, 2\}}_{1 \rightarrow 2}\\
\Uc^{\{1, 2\}}_{1 \rightarrow 3}\\
\vdots \\
\Uc^{\{1, 2\}}_{1 \rightarrow r - 1}\\
\end{array} \right), \left(\begin{array}{c}
{\Uc^{\{1, 3\}}_{2 \rightarrow 0}}\\
{\color{green} \Uc^{\{1, 3\}}_{2 \rightarrow 2}}\\
{\Uc^{\{1, 3\}}_{2 \rightarrow 3}}\\
\vdots \\
\Uc^{\{1, 3\}}_{2 \rightarrow r - 1}\\
\end{array} \right)$&$\cdots$ &${\color{blue} \Uc^{\{1\}}_{0 \rightarrow r-1}}, \left(\begin{array}{c}
\Uc^{\{1, 2\}}_{1 \rightarrow 0}\\
\Uc^{\{1, 2\}}_{1 \rightarrow 1}\\
\Uc^{\{1, 2\}}_{1 \rightarrow 2}\\
\vdots \\
\Uc^{\{1, 2\}}_{1 \rightarrow r - 2}\\
\end{array} \right), \left(\begin{array}{c}
{\color{green} \Uc^{\{1, 3\}}_{2 \rightarrow 0}}\\
\Uc^{\{1, 3\}}_{2 \rightarrow 1}\\
\Uc^{\{1, 3\}}_{2 \rightarrow 2}\\
\vdots \\
{\Uc^{\{1, 3\}}_{2 \rightarrow r - 2}}\\
\end{array} \right)$ \\ \hline


$\Uc^{\{2\}}_{l}$&$\left(\begin{array}{c}
\Uc^{\{1, 2\}}_{0 \rightarrow 1}\\
{\color{blue}\Uc^{\{1, 2\}}_{0 \rightarrow 2}}\\
{\Uc^{\{1, 2\}}_{0 \rightarrow 2}}\\
\vdots \\
{\Uc^{\{1, 2\}}_{0 \rightarrow r - 1}}\\
\end{array} \right), {\color{red} \Uc^{\{2\}}_{1 \rightarrow 0}}, \left(\begin{array}{c}
\Uc^{\{2, 3\}}_{2 \rightarrow 1}\\
{\color{green} \Uc^{\{2, 3\}}_{2 \rightarrow 2}}\\
\Uc^{\{2, 3\}}_{2 \rightarrow 3}\\
\vdots \\
{\Uc^{\{2, 3\}}_{2 \rightarrow r - 1}}\\
\end{array} \right)$&$\left(\begin{array}{c}
{\Uc^{\{1, 2\}}_{0 \rightarrow 0}}\\
\Uc^{\{1, 2\}}_{0 \rightarrow 2}\\
{\color{blue} \Uc^{\{1, 2\}}_{0 \rightarrow 3}}\\
\vdots \\
\Uc^{\{1, 2\}}_{0 \rightarrow r - 1}\\
\end{array} \right), {\color{red} \Uc^{\{2\}}_{1 \rightarrow 1}}, \left(\begin{array}{c}
{\Uc^{\{2, 3\}}_{2 \rightarrow 0}}\\
\Uc^{\{2, 3\}}_{2 \rightarrow 2}\\
{\color{green} \Uc^{\{2, 3\}}_{2 \rightarrow 3}}\\
\vdots \\
\Uc^{\{2, 3\}}_{2 \rightarrow r - 1}\\
\end{array} \right)$&$\cdots$ &$\left(\begin{array}{c}
\Uc^{\{1, 2\}}_{0 \rightarrow 0}\\
{\color{blue}\Uc^{\{1, 2\}}_{0 \rightarrow 1}}\\
{\Uc^{\{1, 2\}}_{0 \rightarrow 2}}\\
\vdots \\
{\Uc^{\{1, 2\}}_{0 \rightarrow r - 2}}\\
\end{array} \right), {\color{red} \Uc^{\{2\}}_{1 \rightarrow r-1}}, \left(\begin{array}{c}
\Uc^{\{2, 3\}}_{2 \rightarrow 0}\\
{\color{green} \Uc^{\{2, 3\}}_{2 \rightarrow 1}}\\
\Uc^{\{2, 3\}}_{2 \rightarrow 2}\\
\vdots \\
{\Uc^{\{2, 3\}}_{2 \rightarrow r - 2}}\\
\end{array} \right)$ \\ \hline


$\Uc^{\{3\}}_{l}$&$\left(\begin{array}{c}
\Uc^{\{1, 3\}}_{0 \rightarrow 1}\\
\Uc^{\{1, 3\}}_{0 \rightarrow 2}\\
{\color{blue}\Uc^{\{1, 3\}}_{0 \rightarrow 3}}\\
{\Uc^{\{1, 3\}}_{0 \rightarrow 4}}\\
\vdots \\
{\Uc^{\{1, 3\}}_{0 \rightarrow r - 1}}\\
\end{array} \right), \left(\begin{array}{c}
\Uc^{\{2, 3\}}_{1 \rightarrow 1}\\
{\color{red}\Uc^{\{2, 3\}}_{1 \rightarrow 2}}\\
{\Uc^{\{2, 3\}}_{1 \rightarrow 3}}\\
{\Uc^{\{2, 3\}}_{1 \rightarrow 4}}\\
\vdots \\
{\Uc^{\{2, 3\}}_{1 \rightarrow r - 1}}\\
\end{array} \right), {\color{green} \Uc^{\{3\}}_{2 \rightarrow 0}}$&$\left(\begin{array}{c}
\Uc^{\{1, 3\}}_{0 \rightarrow 0}\\
\Uc^{\{1, 3\}}_{0 \rightarrow 2}\\
\Uc^{\{1, 3\}}_{0 \rightarrow 3}\\
{\color{blue}\Uc^{\{1, 3\}}_{0 \rightarrow 4}}\\
\vdots \\
{\Uc^{\{1, 3\}}_{0 \rightarrow r - 1}}\\
\end{array} \right), \left(\begin{array}{c}
\Uc^{\{2, 3\}}_{1 \rightarrow 0}\\
\Uc^{\{2, 3\}}_{1 \rightarrow 2}\\
{\color{red} \Uc^{\{2, 3\}}_{1 \rightarrow 3}}\\
{\Uc^{\{2, 3\}}_{1 \rightarrow 4}}\\
\vdots \\
{\Uc^{\{2, 3\}}_{1 \rightarrow r - 1}}\\
\end{array} \right), {\color{green}\Uc^{\{2\}}_{2 \rightarrow 1}}$&$\cdots$ &$\left(\begin{array}{c}
\Uc^{\{1, 3\}}_{0 \rightarrow 0}\\
\Uc^{\{1, 3\}}_{0 \rightarrow 1}\\
{\color{blue} \Uc^{\{1, 3\}}_{0 \rightarrow 2}}\\
{\Uc^{\{1, 3\}}_{0 \rightarrow 3}}\\
\vdots \\
{ \Uc^{\{1, 3\}}_{0 \rightarrow r - 2}}\\
\end{array} \right),\left(\begin{array}{c}
\Uc^{\{2, 3\}}_{1 \rightarrow 0}\\
{\color{red} \Uc^{\{2, 3\}}_{1 \rightarrow 1}}\\
{\Uc^{\{2, 3\}}_{1 \rightarrow 2}}\\
{\Uc^{\{2, 3\}}_{1 \rightarrow 3}}\\
\vdots \\
{\Uc^{\{2, 3\}}_{1 \rightarrow r - 2}}\\
\end{array} \right), {\color{green} \Uc^{\{2\}}_{2 \rightarrow r-1}}$ \\ \hline

$\Uc^{\{1,2\}}_{l}$&$\Uc^{\{1, 2\}}_{0 \rightarrow 0}, {\color{red}\Uc^{\{1, 2\}}_{1 \rightarrow 0}}, \left(\begin{array}{c}
\Uc^{\{1, 2, 3\}}_{2 \rightarrow 1}\\
\Uc^{\{1, 2, 3\}}_{2 \rightarrow 2}\\
\vdots \\
\Uc^{\{1, 2, 3\}}_{2 \rightarrow r - 1}\\
\end{array} \right)$&$\Uc^{\{1, 2\}}_{0 \rightarrow 1}, {\color{red} \Uc^{\{1, 2\}}_{1 \rightarrow 1}}, \left(\begin{array}{c}
\Uc^{\{1, 2, 3\}}_{2 \rightarrow 0}\\
\Uc^{\{1, 2, 3\}}_{2 \rightarrow 2}\\
\vdots \\
\Uc^{\{1, 2, 3\}}_{2 \rightarrow r - 1}\\
\end{array} \right)$&$\cdots$ &$\Uc^{\{1, 2\}}_{0 \rightarrow r-1}, {\color{red} \Uc^{\{1, 2\}}_{1 \rightarrow r-1}}, \left(\begin{array}{c}
\Uc^{\{1, 2, 3\}}_{2 \rightarrow 0}\\
\Uc^{\{1, 2, 3\}}_{2 \rightarrow 1}\\
\vdots \\
\Uc^{\{1, 2, 3\}}_{2 \rightarrow r - 2}\\
\end{array} \right)$ \\ \hline


$\Uc^{\{2, 3\}}_{l}$&$\left(\begin{array}{c}
{\color{blue} \Uc^{\{1, 2, 3\}}_{0 \rightarrow 1}}\\
\Uc^{\{1, 2, 3\}}_{0 \rightarrow 2}\\
\vdots \\
{ \Uc^{\{1, 2, 3\}}_{0 \rightarrow r - 1}}\\
\end{array} \right), \Uc^{\{2, 3\}}_{1 \rightarrow 0}, {\color{green} \Uc^{\{2, 3\}}_{2 \rightarrow 0}}$&$\left(\begin{array}{c}
{\Uc^{\{1, 2, 3\}}_{0 \rightarrow 0}}\\
{\color{blue} \Uc^{\{1, 2, 3\}}_{0 \rightarrow 2}}\\
\vdots \\
\Uc^{\{1, 2, 3\}}_{0 \rightarrow r - 1}\\
\end{array} \right), \Uc^{\{2, 3\}}_{1 \rightarrow 1}, {\color{green} \Uc^{\{2, 3\}}_{2 \rightarrow 1}}$&$\cdots$ &$\left(\begin{array}{c}
{\color{blue}\Uc^{\{1, 2, 3\}}_{0 \rightarrow 0}}\\
\Uc^{\{1, 2, 3\}}_{0 \rightarrow 1}\\
\vdots \\
{\Uc^{\{1, 2, 3\}}_{0 \rightarrow r - 2}}\\
\end{array} \right), \Uc^{\{2, 3\}}_{1 \rightarrow r-1}, {\color{green} \Uc^{\{2, 3\}}_{2 \rightarrow r-1}}$ \\ \hline


${\Uc^{\{1, 3\}}_{l}}$&${\color{blue}\Uc^{\{1, 3\}}_{0 \rightarrow 0}}, \left(\begin{array}{c}
\Uc^{\{1, 2, 3\}}_{1 \rightarrow 1}\\
\Uc^{\{1, 2, 3\}}_{1 \rightarrow 2}\\
{\color{red}\Uc^{\{1, 2, 3\}}_{1 \rightarrow 3}}\\
\Uc^{\{1, 2, 3\}}_{1 \rightarrow 4}\\
\vdots \\
{ \Uc^{\{1, 2, 3\}}_{1 \rightarrow r - 1}}\\
\end{array} \right), { \Uc^{\{1,3\}}_{2 \rightarrow 0}}$&${\color{blue} \Uc^{\{1, 3\}}_{0 \rightarrow 1}}, \left(\begin{array}{c}
{\Uc^{\{1, 2, 3\}}_{1 \rightarrow 0}}\\
\Uc^{\{1, 2, 3\}}_{1 \rightarrow 2}\\
\Uc^{\{1, 2, 3\}}_{1 \rightarrow 3}\\
{\color{red} \Uc^{\{1, 2, 3\}}_{1 \rightarrow 4}}\\
\vdots \\
{\Uc^{\{1, 2, 3\}}_{1 \rightarrow r - 1}}\\
\end{array} \right), {\Uc^{\{1, 3\}}_{2 \rightarrow 1}}$&$\cdots$ &${\color{blue} \Uc^{\{1, 3\}}_{0 \rightarrow r-1}},\left(\begin{array}{c}
\Uc^{\{1, 2, 3\}}_{1 \rightarrow 0}\\
\Uc^{\{1, 2, 3\}}_{1 \rightarrow 1}\\
{\color{red}\Uc^{\{1, 2, 3\}}_{1 \rightarrow 2}}\\
\Uc^{\{1, 2, 3\}}_{1 \rightarrow 3}\\
\vdots \\
{\Uc^{\{1, 2, 3\}}_{1 \rightarrow r - 2}}\\
\end{array} \right), {\Uc^{\{1, 3\}}_{2 \rightarrow r-1}}$ \\ \hline

    $\Uc^{\{1,2, 3\}}_{l}$&$\Uc^{\{1,2, 3\}}_{0 \rightarrow 0}, \Uc^{\{1,2, 3\}}_{1 \rightarrow 0}, {\color{green} \Uc^{\{1,2, 3\}}_{2 \rightarrow 0}}$ & $\Uc^{\{1,2, 3\}}_{0 \rightarrow 1}, \Uc^{\{1,2, 3\}}_{1 \rightarrow 1}, {\color{green} \Uc^{\{1,2, 3\}}_{2 \rightarrow 1}}$ & $\cdots$ & $\Uc^{\{1,2, 3\}}_{0 \rightarrow r-1}, \Uc^{\{1,2, 3\}}_{1 \rightarrow r-1}, {\color{green} \Uc^{\{1,2, 3\}}_{2 \rightarrow r-1}}$ \\ \hline
   \end{tabular}
\end{center}
\caption{Composition of parity symbols downloaded during repair of $t=3$ node failures.  Blue colored symbols correspond to the matched symbols from the $1$st systematic node. Red colored symbols represent the matched symbols from the $2$nd system node. Green colored symbols are used to denote the matched symbols from the $3$rd systematic node.}
\label{tab:example2b}
\end{table}
\egroup

\begin{itemize}
\item \textbf{Matching symbols from first and second systematic node using parity symbols $\big\{\Uc^{\{1, 2, 3\}}_{l}\big\}$:}~We use the symbols for the $(l + 1)$-th parity node which are indexed by the set $\Uc^{\{1, 2, 3\}}_{l}$ to match those symbols from the third systematic node that are indexed by the set 
$$
\Uc^{\{1, 2, 3\}}_{2 \rightarrow l} = \{i~:~i\cdot(1,\ldots,1) = 0, i\cdot\ev_1 = 0, i\cdot \ev_2 = r - l\}.
$$
\end{itemize}

\begin{itemize}
\item \textbf{Matching symbols from first and second systematic node using parity symbols $\big\{\Uc^{\{1, 3\}}_{l}\big\}$:}~We propose the following matching scheme for the symbols from the first and the second systematic node. Given the parity symbols from the $(l+1)$-th parity node which are indexed by the set $\Uc^{\{1, 3\}}_l$, we use them to match those symbols from the second systematic node which are indexed by the set 
\begin{align}
\Uc^{\{1, 2, 3\}}_{1 \rightarrow l + 3} = \{i~:~i\cdot(1,\ldots, 1) = 0, i\cdot\ev_1 = r - (l + 3), i\cdot \ev_2 = 0\}.
\end{align}
This would allow us to use the remaining $\mid\Uc^{\{1, 3\}}_l\mid - \mid\Uc^{\{1, 2, 3\}}_{1 \rightarrow l + 3}\mid = (r - 2)r^{k-4}$ parity symbols indexed by the set $\Uc^{\{1, 3\}}_l$ to match those symbols from the second systematic node which are indexed by the following set.
\begin{align}
\{i~:~i\cdot(1,\ldots, 1) = l, i\cdot\ev_1 \notin \{0, r-3\}, i\cdot\ev_2 = 0\} \subset \Uc^{\{1, 3\}}_{0 \rightarrow l} = \{i~:~i\cdot(1,\ldots, 1) = l, i\cdot\ev_1 \neq 0, i\cdot\ev_2 = 0\}.
\end{align}

\item \textbf{Matching symbols from first and third systematic node using parity symbols $\big\{\Uc^{\{2, 3\}}_{l}\big\}$:}~We propose the following matching scheme for the symbols from the first and the second systematic node. Given the parity symbols from the $(l+1)$-th parity node which are indexed by the set $\Uc^{\{2, 3\}}_l$, we use them to match those symbols from the first systematic node which are indexed by the set 
\begin{align}
\Uc^{\{1, 2, 3\}}_{0 \rightarrow l + 1} = \{i~:~i\cdot(1,\ldots, 1) = l + 1, i\cdot\ev_1 = 0, i\cdot \ev_2 = 0\}.
\end{align}
This would allow us to use the remaining $\mid\Uc^{\{2, 3\}}_l\mid - \mid\Uc^{\{1, 2, 3\}}_{0 \rightarrow l +1}\mid = (r - 2)r^{k-4}$ parity symbols indexed by the set $\Uc^{\{2, 3\}}_l$ to match those symbols from the third systematic node which are indexed by the following set.
\begin{align}
\{i~:~i\cdot(1,\ldots, 1) \notin \{0, 1\}, i\cdot\ev_1 = 0, i\cdot\ev_2 = r - 2\} \subset \Uc^{\{2, 3\}}_{2 \rightarrow l} = \{i~:~i\cdot(1,\ldots, 1) \neq 0, i\cdot\ev_1  = 0, i\cdot\ev_2 = 0\}.
\end{align}

\item \textbf{Matching symbols from second systematic node using parity symbols $\big\{\Uc^{\{1, 2\}}_{l}\big\}$:}~Given the parity symbols from the $(l+1)$-th parity node which are indexed by the set $\Uc^{\{1, 2\}}_l$, we use them to match those symbols from the second systematic node which are indexed by the set 
\begin{align}
\Uc^{\{1, 2\}}_{1 \rightarrow l} = \{i~:~i\cdot(1,\ldots, 1) = 0, i\cdot\ev_1 = r - l, i\cdot \ev_2 = 0\}.
\end{align}
\end{itemize}

\begin{itemize}
\item \textbf{Matching symbols from first and third systematic node using parity symbols $\big\{\Uc^{\{1\}}_{l}\big\}$:}~
We utilize the parity symbols from the $(l + 1)$-th parity node which are indexed by the set $\{Uc^{\{1\}}_{l}$ to match the symbols from the third systematic node with the following indices.
\begin{align}
\Uc^{\{1, 3\}}_{2 \rightarrow l+1} = \{i~:~i\cdot(1,\ldots, 1) = 0, i\cdot \ev_1 \neq 0, i \cdot \ev_2 = r - (l + 1)\}.
\end{align}
The remaining $\mid\Uc^{\{1\}}_{l}\mid - \mid \Uc^{\{1, 3\}}_{2 \rightarrow l+1} \mid = r^{k-4}(r - 1)^2 - r^{k-4}(r-1)$ unused parity symbols indexed by the set $\Uc^{\{1\}}_{l}$ are used to match the symbols from the first systematic appearing in those parity symbols.

\item \textbf{Matching symbols from first and third systematic node using parity symbols $\big\{\Uc^{\{2\}}_{l}\big\}$:}
Using the parity symbols from the $(l+1)$-th parity node which are indexed by the set $\Uc^{\{2\}}_{l}$, we match the symbols from the first systematic node with the following indices.
\begin{align}
\Uc^{\{1, 2\}}_{0 \rightarrow l+2} = \{i~:~i\cdot(1,\ldots, 1) = l + 2, i\cdot \ev_1 = 0, i \cdot \ev_2 \neq 0\}. \\
\end{align}
In addition, we also use these parity symbols to match the symbols from the third systematic node with the following indices.
\begin{align}
\widehat{\Uc}^{\{2, 3\}}_{2 \rightarrow l + 2} = \{i~:~i\cdot(1,\ldots,1) = 1, i\cdot\ev_1 = 0, i\cdot\ev_2 = r - (l+2)\} \subset \Uc^{\{2, 3\}}_{2 \rightarrow l + 2}.
\end{align}
This leaves us with $\mid\Uc^{\{2\}}_{l}\mid - \mid\Uc^{\{1, 2\}}_{0 \rightarrow l+2}\mid - \mid\widehat{\Uc}^{\{2, 3\}}_{2 \rightarrow l + 2}\mid =  r^{k-4}(r - 1)^2 - r^{k-4}(r - 1) - r^{k-4}$ unused parity symbols, which we use to match the symbols from the second systematic node appearing in those parity symbols. 

\item \textbf{Matching symbols from first and third systematic node using parity symbols $\big\{\Uc^{\{3\}}_{l}\big\}$:}
Using the parity symbols from the $(l+1)$-th parity node which are indexed by the set $\Uc^{\{3\}}_{l}$, we match the symbols from the second systematic node with the following indices.

\begin{align}
\Uc^{\{2, 3\}}_{1 \rightarrow l+2} = \{i~:~i\cdot(1,\ldots, 1) \neq 0, i\cdot \ev_1 = r - (l + 2), i \cdot \ev_2 = 0\}.
\end{align}
In addition, we also use these parity symbols to match the symbols from the first systematic node with the following indices.
\begin{align}
\widehat{\Uc}^{\{1, 3\}}_{0 \rightarrow l+3}  \{i~:~i\cdot(1,\ldots, 1) = l + 3, i\cdot \ev_1 = r - 3, i \cdot \ev_2 = 0\} \subset \Uc^{\{1, 3\}}_{0 \rightarrow l+3}.
\end{align}
We utilize the remaining $\mid\Uc^{\{3\}}_{l}\mid - \mid\Uc^{\{2, 3\}}_{1 \rightarrow l+2}\mid - \mid\widehat{\Uc}^{\{1, 3\}}_{0 \rightarrow l+3}\mid =  r^{k-4}(r - 1)^2 - r^{k-4}(r - 1) - r^{k-4}$ unused parity symbols, which we use to match the symbols from the second systematic node appearing in those parity symbols. 
\end{itemize}

This concludes the first stage of the downloading process and we have utilized all the parity symbols downloaded in the first stage to match certain systematic symbols corresponding to the three failed nodes. We now move to the second stage of the download process where we download additional symbols in order to match the remaining unmatched symbols associated with the three failed systematic nodes. We illustrate our strategy to match the symbols from the failed systematic nodes using the downloaded symbols during the first stage in Table~\ref{tab:example2b}. Let's count the number of symbols from different failed systematic nodes that are matched according to Table~\ref{tab:example2b}.
\begin{enumerate}
\item {\bf Symbols from $1$st systematic node (blue colored):} 
\begin{align}
&\underbrace{\sum\limits_{l = 0}^{r - 1}\left(\left|\Uc^{\{1,3\}}_{0 \rightarrow l}\right| - \left|\Uc^{\{1,2,3\}}_{1 \rightarrow l+3}\right|\right)}_{\text{from~$\Uc^{\{1, 3\}}_{l}$}} + \underbrace{\sum\limits_{l = 0}^{r - 1}\left|\Uc^{\{1,2,3\}}_{0 \rightarrow l+1}\right|}_{\text{from~$\Uc^{\{2, 3\}}_{l}$}}+  \underbrace{\sum\limits_{l = 0}^{r-1}\left|\widehat{\Uc}^{\{1, 3\}}_{0 \rightarrow l + 3}\right|}_{\text{from~$\Uc^{\{3\}}_{l}$}} + \underbrace{\sum\limits_{l = 0}^{r-1}\left|\Uc^{\{1,2\}}_{0 \rightarrow l+2}\right|}_{\text{from~$\Uc^{\{2\}}_{l}$}} + \underbrace{\sum\limits_{l = 0}^{r-1}\left( \left|\Uc^{\{1\}}_{0 \rightarrow l}\right| - \left|\Uc^{\{1,3\}}_{2 \rightarrow l+1}\right| \right)}_{\text{from~$\Uc^{\{1\}}_{l}$}} \nonumber \\
&= \left(r^{k - 2} - r^{k - 3} - r^{k-3}\right) + r^{k - 3} + r^{k - 3} + \left(r^{k - 2} - r^{k - 3}\right) + \left(r^{k - 1} - 2r^{k - 2} + r^{k- 3}  - (r^{k - 2} - r^{k-3})\right) \nonumber \\
&= r^{k-1} - (r^{k - 2} - r^{k-3}).
\end{align}

\item {\bf Symbols from $2$nd systematic node (red colored):}
\begin{align}
&\underbrace{\sum\limits_{l = 0}^{r - 1}\left|\Uc^{\{1,2,3\}}_{1 \rightarrow l + 3}\right|}_{\text{from~$\Uc^{\{1, 3\}}_{l}$}}+  \underbrace{\sum\limits_{l = 0}^{r-1}\left|\Uc^{\{1, 2\}}_{1 \rightarrow l}\right|}_{\text{from~$\Uc^{\{1,2\}}_{l}$}} + \underbrace{\sum\limits_{l = 0}^{r-1}\left|\Uc^{\{2,3\}}_{1 \rightarrow l + 2}\right|}_{\text{from~$\Uc^{\{3\}}_{l}$}} + \underbrace{\sum\limits_{l = 0}^{r-1}\left( \left|\Uc^{\{2\}}_{1 \rightarrow l}\right| - \left|\Uc^{\{1,2\}}_{0 \rightarrow l+2}\right| -  \left|\widehat{\Uc}^{\{2,3\}}_{1 \rightarrow l+2}\right| \right)}_{\text{from~$\Uc^{\{2\}}_{l}$}} \nonumber \\
&= r^{k - 3} + \left(r^{k - 2} - r^{k - 3}\right) + \left(r^{k - 2} - r^{k - 3})\right) + \left((r^{k - 1} - 2r^{k - 2} + r^{k- 3} )- (r^{k - 2} -  r^{k - 3}) - r^{k - 3}\right) \nonumber \\
&= r^{k-1} - r^{k - 2}.
\end{align}

\item {\bf Symbols from $3$rd systematic node (green colored):}
\begin{align}
&\underbrace{\sum\limits_{l = 0}^{r-1}\left|\Uc^{\{1, 2, 3\}}_{2 \rightarrow l}\right|}_{\text{from~$\Uc^{\{1,2,3\}}_{l}$}}+ \underbrace{\sum\limits_{l = 0}^{r - 1}\left(\left|\Uc^{\{2, 3\}}_{2 \rightarrow l}\right| - \left|\Uc^{\{1,2,3\}}_{0 \rightarrow l+1}\right|\right)}_{\text{from~$\Uc^{\{2, 3\}}_{l}$}}+ \nonumber \\
& \underbrace{\sum\limits_{l = 0}^{r-1}\left(\left|\Uc^{\{3\}}_{2 \rightarrow l}\right| - \left|\widehat{\Uc}^{\{1,3\}}_{0 \rightarrow l+3}\right| - \left|\Uc^{\{2, 3\}}_{1 \rightarrow l+2}\right|\right)}_{\text{from~$\Uc^{\{3\}}_{l}$}} +  \underbrace{\sum\limits_{l = 0}^{r-1}\left|\Uc^{\{1,3\}}_{2 \rightarrow l+1}\right|}_{\text{from~$\Uc^{\{1\}}_{l}$}} + \underbrace{\sum\limits_{l = 0}^{r-1} \left|\widehat{\Uc}^{\{2,3\}}_{1 \rightarrow l+2}\right|}_{\text{from~$\Uc^{\{2\}}_{l}$}} \nonumber \\
&= r^{k - 3} + \left((r^{k-2} - r^{k - 3})  - r^{k-3}\right)  + \nonumber \\ 
&~~~\left((r^{k-1} - 2r^{k - 2} + r^{k - 3}) - r^{k - 3}  - (r^{k - 2} - r^{k-3})  \right)  +\left(r^{k-2} -  r^{k - 3}\right) + r^{k - 3} \nonumber \\
&=  r^{k-1} - r^{k - 2}.
\end{align}
\end{enumerate} 

\subsubsection{Second Stage of download process}

First, let's identify the unmatched systematic symbols at the end of the first stage of the downloading process.

\begin{itemize}

\item \textbf{Unmatched symbols from the first systematic node:}~The symbols from the second systematic node that are matched using the parity symbols from the $(l + 1)$-th parity node are indexed by the set
\begin{align}
\Uc^{\{1, 3\}}_{2 \rightarrow l+1} = \{i~:~i\cdot(1,\ldots, 1) = 0, i\cdot \ev_1 \neq 0, i \cdot \ev_2 = r - (l + 1)\}.
\end{align}
Using \eqref{eq:Zset}, we can identify the indices of the partiy symbols from the $(l + 1)$-parity node where these symbols participate is as follows.
\begin{align}
\Zc^{\{1, 3\}}_{l, (2 \rightarrow l +1)} = \Uc^{\{1, 3\}}_{2 \rightarrow l+1} + l\ev_2 = \{i~:~i\cdot(1,\ldots, 1) = l, i\cdot \ev_1 \neq 0, i \cdot \ev_2 = r -  1\}.
\end{align}
The symbols from the first systematic node which remain unmatched at the end of first stage (and require downloading additional symbols) due to their participation in the parity symbols indexed by the set $\Zc^{\{1, 3\}}_{l, (2 \rightarrow l +1)} $ in the $(l + 1)$-th parity node are as follows. 
\begin{align}
\label{eq:R31}
\Rc_{0 \rightarrow l} = \{i~:~i \in \Zc^{\{1, 3\}}_{l, (2 \rightarrow l +1)}\} =  \{i~:~i\cdot(1,\ldots, 1) = l, i\cdot \ev_1 \neq 0, i \cdot \ev_2 = r -  1\}.
\end{align}

\item \textbf{Unmatched symbols from the second systematic node:}~The symbols from the first systematic node that are (potentially) matched using the parity symbols downloaded from the $(l + 1)$-th parity node during the first stage are indexed by the following two sets. 
\begin{align}
\Uc^{\{1, 2\}}_{0 \rightarrow l+2} = \{i~:~i\cdot(1,\ldots, 1) = l + 2, i\cdot \ev_1 = 0, i \cdot \ev_2 \neq 0\}. \\
\widetilde{\Uc}^{\{2, 3\}}_{2 \rightarrow l + 2} = \{i~:~i\cdot(1,\ldots,1) = 1, i\cdot\ev_1 = 0, i\cdot\ev_2 = r - (l+2)\} \subset {\Uc}^{\{2, 3\}}_{2 \rightarrow l + 2}.
\end{align}
Using \eqref{eq:Zset}, we can identify the indices of the partiy symbols from the $(l + 1)$-parity node where these symbols participate as follows.
\begin{align}
\Zc^{\{1, 2\}}_{l, (0 \rightarrow l +2)} = \Uc^{\{1, 2\}}_{0 \rightarrow l+2} = \{i~:~i\cdot(1,\ldots, 1) = l + 2, i\cdot \ev_1 = 0, i \cdot \ev_2 \neq 0\}. \\
\widetilde{\Zc}^{\{2, 3\}}_{l, (2 \rightarrow l + 2)} = \widetilde{\Uc}^{\{2, 3\}}_{2 \rightarrow l + 2} + l\ev_2 = \{i~:~i\cdot(1,\ldots,1) =  l + 1, i\cdot\ev_1 = 0, i\cdot\ev_2 = r - 2\}.
\end{align}
The symbols from the second systematic node which remain unmatched at the end of first stage (and require downloading additional symbols) due to their participation in the parity symbols indexed by the set $\Zc^{\{1, 2\}}_{l, (0 \rightarrow l +2)} \cup \widetilde{\Zc}^{\{2, 3\}}_{l, (2 \rightarrow l + 2)}$ in the $(l + 1)$-th parity node are as follows. 
\begin{align}
\label{eq:R32}
\Rc_{1 \rightarrow l} = \{i~:~i + l\ev_1 \in \Zc^{\{1, 2\}}_{l, (0 \rightarrow l +2)}\} =  \{i~:~i\cdot(1,\ldots, 1) = 2, i\cdot \ev_1 = r - l, i \cdot \ev_2 \neq 0\}. \\
\widetilde{\Rc}_{1 \rightarrow l} = \{i~:~i + l\ev_1 \in\widetilde{\Zc}^{\{2, 3\}}_{l, (2 \rightarrow l + 2)}\} =  \{i~:~i\cdot(1,\ldots,1) =  1, i\cdot\ev_1 = r - l, i\cdot\ev_2 = r - 2\}. \label{eq:R321}
\end{align}

\item \textbf{Unmatched symbols from the third systematic node:}~The symbols from the second systematic node that are (potentially) matched using the parity symbols downloaded from the $(l + 1)$-th parity node during the first stage are indexed by the following two sets. 
\begin{align}
\Uc^{\{2, 3\}}_{1 \rightarrow l+2} = \{i~:~i\cdot(1,\ldots, 1) \neq 0, i\cdot \ev_1 = r - (l + 2), i \cdot \ev_2 = 0\}. \\
\widetilde{\Uc}^{\{1, 3\}}_{0 \rightarrow l+3} = \{i~:~i\cdot(1,\ldots, 1) = l + 3, i\cdot \ev_1 = r - 3, i \cdot \ev_2 = 0\} \subset \Uc^{\{1, 3\}}_{0 \rightarrow l+3}.\\
\end{align}
Using \eqref{eq:Zset}, we can identify the indices of the partiy symbols from the $(l + 1)$-parity node where these symbols participate as follows.
\begin{align}
\Zc^{\{2, 3\}}_{l, (1 \rightarrow l +2)} = \Uc^{\{2, 3\}}_{1 \rightarrow l+2} + l\ev_1 &= \{i~:~i\cdot(1,\ldots, 1)  \neq l, i\cdot \ev_1 = r - 2, i \cdot \ev_2 = 0\}. \\
\widetilde{\Zc}^{\{1, 3\}}_{l, (0 \rightarrow l +3)} = \widetilde{\Uc}^{\{1, 3\}}_{0 \rightarrow l+3} &= \{i~:~i\cdot(1,\ldots, 1) = l + 3, i\cdot \ev_1 = r - 3, i \cdot \ev_2 = 0\}.
\end{align}
The symbols from the third systematic node which remain unmatched at the end of first stage (and require downloading additional symbols) due to their participation in the parity symbols indexed by the set $\Zc^{\{2, 3\}}_{l, (1 \rightarrow l +2)} \cup \widetilde{\Zc}^{\{1, 3\}}_{l, (0 \rightarrow l +3)} $ in the $(l + 1)$-th parity node are as follows. 
\begin{align}
\label{eq:R33}
\Rc_{2 \rightarrow l} = \{i~:~i + l\ev_2 \in \Zc^{\{2, 3\}}_{l, (1 \rightarrow l +2)}\} =  \{i~:~i\cdot(1,\ldots, 1) \neq 0, i\cdot \ev_1 = r - 2, i \cdot \ev_2 = r - l\}. \\
\widetilde{\Rc}_{2 \rightarrow l} = \{i~:~i + l\ev_2 \in \widetilde{\Zc}^{\{1, 3\}}_{l, (0 \rightarrow l +3)} \} = \{i~:~i\cdot(1,\ldots, 1) = 3, i\cdot\ev_1 = r - 3, i\cdot \ev_2 = r - l\}  \label{eq:R332}
\end{align}
\end{itemize}

We now describe the set of additional symbols downloaded to match the unmatched symbols from the three failed systematic nodes (cf.~\eqref{eq:R31}, \eqref{eq:R32} and \eqref{eq:R33}).
\begin{itemize}
\item \textbf{Additional symbols downloaded to match remaining symbols from the first systematic node:} Consider a set of $r - 1$ integers $\Ic_0 = \{i_{0, 1},\ldots, i_{0, r-1}\} \subset [0, r^{k-1}-1]$ such that the following two conditions hold.
\begin{enumerate}
\item $i_{0, j} \cdot(1,1,\ldots, 1) = 1~\forall~j \in [r-1]$. 
\item $i_{0, j}\cdot\ev_1 = j~\text{for}~j \in [r-1]$.
\item $i_{0, j}\cdot\ev_2 = r -1~\forall~j \in [r-1]$.
\end{enumerate}
Note that we are using vector representation of the integers from the $\Ic_0$ in $\ZZ_r^{k-1}$ in order to define these three requirements. For $j \in  [3, k-1]$, the set of additional symbols downloaded from the $(j+1)$-th systematic node in order to match the remaining symbols from the first systematic node have their row indices belonging to the following set. 
\begin{align}
\label{eq:Sdef}
\Sc^{\{1, 2, 3\}}_0 = \Ic_0 + \{a_1(\ev_4 - \ev_3) + \cdots + a_{k-4}(\ev_{k-1} - \ev_3)~:~(a_1,\ldots, a_{k-4}) \in [r-1]^{k-4} \}.
\end{align}
Note that we have $|\Sc^{\{1, 2, 3\}}_{0}|= (r - 1)r^{k-4}$. Next, we identify the set of the parity symbols in the $(l + 1)$-th parity nodes where these symbols appear. Let $\Pc^{\{1, 2, 3\}}_{0, l}$ denote the indices of these parity symbols in the $(l + 1)$-th parity node. Then, from the definition of the zigzeg sets (cf.~\ref{eq:Zset}), we have that
\begin{align}
\Pc^{\{1, 2, 3\}}_{0, l} &= \Sc^{\{1, 2, 3\}}_{0} + l\ev_3 = \Sc^{\{1, 2, 3\}}_{0} + l\ev_4 = \cdots = \Sc^{\{1, 2, 3\}}_{0} + l\ev_{k-1} \nonumber \\
&= \{i~:~i\cdot(1,\ldots,1) = l+1, i\cdot\ev_1 \neq 0, i\cdot\ev_2 = r - 1\}.
\end{align}
We can again use \eqref{eq:Zset} to indentify the symbols from the second systematic node that appear in the parity symbols from $l + 1$-th parity symbols that are indexed by the set $\Pc^{\{1, 2, 3\}}_{0, l}$.
\begin{align}
\{i~:~i \in \Pc^{\{1, 2, 3\}}_{0, l}\} = \{i~:~i\cdot(1,\ldots,1) = l+1, i\cdot\ev_1 \neq 0, i\cdot\ev_2 = r - 1\}.
\end{align}
Note that this is exactly equal to $\Rc_{0 \rightarrow l + 1}$ which is the indices of the unmatched symbols from the first systematic as they appeared in the parity symbols downloaded from the $(l + 1)$-th parity node during the first stage.

\item \textbf{Additional symbols downloaded to match remaining symbols from the second systematic node:}~ \begin{enumerate}
\item Consider a set of $r - 1$ integers $\Ic_1 = \{i_{1, 1},\ldots, i_{1, r-1}\} \subset [0, r^{k-1}-1]$ such that the following two conditions hold.
\begin{enumerate}
\item $i_{1, j} \cdot(1,1,\ldots, 1) = 2~\forall~j \in [r-1]$. 
\item $i_{1, j}\cdot\ev_1 = r-1~\forall~j \in [r-1]$.
\item $i_{1, j}\cdot\ev_2 = j~\text{for}~j \in [r-1]$.
\end{enumerate}
Note that we are using vector representation of the integers from the $\Ic_1$ in $\ZZ_r^{k-1}$ in order to define these three requirements. For $j \in  [3, k-1]$, the set of additional symbols downloaded from the $(j+1)$-th systematic node in order to match the remaining symbols from the second systematic node have their row indices belonging to the following set. 
\begin{align}
\label{eq:Sdef}
\Sc^{\{1, 2, 3\}}_1 = \Ic_1 + \{a_1(\ev_4 - \ev_3) + \cdots + a_{k-4}(\ev_{k-1} - \ev_3)~:~(a_1,\ldots, a_{k-4}) \in [r-1]^{k-4} \}.
\end{align}
Note that we have $|\Sc^{\{1, 2, 3\}}_{1}|= (r - 1)r^{k-4}$. Next, we identify the set of the parity symbols in the $(l + 1)$-th parity nodes where these symbols appear. Let $\Pc^{\{1, 2, 3\}}_{1, l}$ denote the indices of these parity symbols in the $(l + 1)$-th parity node. Then, from the definition of the zigzeg sets (cf.~\ref{eq:Zset}), we have that
\begin{align}
\Pc^{\{1, 2, 3\}}_{1, l} &= \Sc^{\{1, 2, 3\}}_{1} + l\ev_3 = \Sc^{\{1, 2, 3\}}_{1} + l\ev_4 = \cdots = \Sc^{\{1, 2, 3\}}_{1} + l\ev_{k-1} \nonumber \\
&= \{i~:~i\cdot(1,\ldots,1) = l+2, i\cdot\ev_1 = r-1, i\cdot\ev_2 \neq 0\}.
\end{align}
We can again use \eqref{eq:Zset} to indentify the symbols from the second systematic node that appear in the parity symbols from $l + 1$-th parity symbols that are indexed by the set $\Pc^{\{1, 2, 3\}}_{1, l}$.
\begin{align}
\{i~:~i + l\ev_1 \in \Pc^{\{1, 2, 3\}}_{1, l}\} = \{i~:~i\cdot(1,\ldots,1) = 2, i\cdot\ev_1 = r - (l +1), i\cdot\ev_2 \neq 0\}.
\end{align}
Note that this is exactly equal to $\Rc_{1 \rightarrow l + 1}$ which is the indices of the unmatched symbols from the second systematic as they appeared in the parity symbols downloaded from the $(l + 1)$-th parity node during the first stage.

\item Consider an integers $\tilde{i}_1 \in [0, r^{k-1}-1]$ such that the following two conditions hold.
\begin{enumerate}
\item $\tilde{i}_{1} \cdot(1,1,\ldots, 1) = 1$. 
\item $\tilde{i}_{1}\cdot\ev_1 = r-1$.
\item $\tilde{i}_{1}\cdot\ev_2 = r - 2$.
\end{enumerate}
For $j \in  [3, k-1]$, we download additional symbols from the $(j+1)$-th systematic node with their row indices belonging to the following set. 
\begin{align}
\label{eq:Sdef}
\widetilde{\Sc}^{\{1, 2, 3\}}_1 = \tilde{i}_1 + \{a_1(\ev_4 - \ev_3) + \cdots + a_{k-4}(\ev_{k-1} - \ev_3)~:~(a_1,\ldots, a_{k-4}) \in [r-1]^{k-4} \}.
\end{align}
Note that we have $|\widetilde{\Sc}^{\{1, 2, 3\}}_{1}|= r^{k-4}$. Next, we identify the set of the parity symbols in the $(l + 1)$-th parity nodes where these symbols appear. Let $\widetilde{\Pc}^{\{1, 2, 3\}}_{1, l}$ denote the indices of these parity symbols in the $(l + 1)$-th parity node. Then, from the definition of the zigzeg sets (cf.~\ref{eq:Zset}), we have that
\begin{align}
\widetilde{\Pc}^{\{1, 2, 3\}}_{1, l} &= \widetilde{\Sc}^{\{1, 2, 3\}}_{1} + l\ev_3 = \widetilde{\Sc}^{\{1, 2, 3\}}_{1} + l\ev_4 = \cdots = \widetilde{\Sc}^{\{1, 2, 3\}}_{1} + l\ev_{k-1} \nonumber \\
&= \{i~:~i\cdot(1,\ldots,1) = l+1, i\cdot\ev_1 = r-1, i\cdot\ev_2 =  r - 2\}.
\end{align}
We can again use \eqref{eq:Zset} to indentify the symbols from the second systematic node that appear in the parity symbols from $l + 1$-th parity symbols that are indexed by the set $\widetilde{\Pc}^{\{1, 2, 3\}}_{1, l}$.
\begin{align}
\{i~:~i + l\ev_1 \in \widetilde{\Pc}^{\{1, 2, 3\}}_{1, l}\} = \{i~:~i\cdot(1,\ldots,1) = 1, i\cdot\ev_1 = r - (l +1), i\cdot \ev_2 = r - 2\}.
\end{align}
Note that this is exactly equal to the unmatched symbols from the second systematic node denoted by $\widetilde{\Rc}_{1 \rightarrow l + 1}$ (cf.~\eqref{eq:R321}).

\end{enumerate}

\item \textbf{Additional symbols downloaded to match remaining symbols from the third systematic node:}~
\begin{enumerate}
\item Consider a set of $r - 1$ integers $\Ic_2 = \{i_{2, 1},\ldots, i_{2, r-1}\} \subset [0, r^{k-1}-1]$ such that the following two conditions hold.
\begin{enumerate}
\item $i_{2, j} \cdot(1,1,\ldots, 1) = j~\text{for}~j \in [r-1]$. 
\item $i_{2, j}\cdot\ev_1 = r-2~\forall~j \in [r-1]$.
\item $i_{2, j}\cdot\ev_2 = r-2~\forall~j \in [r-1]$.
\end{enumerate}
Note that we are using vector representation of the integers from the $\Ic_2$ in $\ZZ_r^{k-1}$ in order to define these three requirements. For $j \in  [3, k-1]$, the set of additional symbols downloaded from the $(j+1)$-th systematic node in order to match the remaining symbols from the third systematic node have their row indices belonging to the following set. 
\begin{align}
\label{eq:Sdef}
\Sc^{\{1, 2, 3\}}_2 = \Ic_2 + \{a_1(\ev_4 - \ev_3) + \cdots + a_{k-4}(\ev_{k-1} - \ev_3)~:~(a_1,\ldots, a_{k-4}) \in [r-1]^{k-4} \}.
\end{align}
Note that we have $|\Sc^{\{1, 2, 3\}}_{2}|= (r - 1)r^{k-4}$. Next, we identify the set of the parity symbols in the $(l + 1)$-th parity nodes where these symbols appear. Let $\Pc^{\{1, 2, 3\}}_{2, l}$ denote the indices of these parity symbols in the $(l + 1)$-th parity node. Then, from the definition of the zigzeg sets (cf.~\ref{eq:Zset}), we have that
\begin{align}
\Pc^{\{1, 2, 3\}}_{2, l} &= \Sc^{\{1, 2, 3\}}_{2} + l\ev_3 = \Sc^{\{1, 2, 3\}}_{2} + l\ev_4 = \cdots = \Sc^{\{1, 2, 3\}}_{2} + l\ev_{k-1} \nonumber \\
&= \{i~:~i\cdot(1,\ldots,1) \neq l, i\cdot\ev_1 = r-2, i\cdot\ev_2 = r - 2\}.
\end{align}
We can again use \eqref{eq:Zset} to indentify the symbols from the third systematic node that appear in the parity symbols from $(l + 1)$-th parity symbols that are indexed by the set $\Pc^{\{1, 2, 3\}}_{2, l}$.
\begin{align}
\{i~:~i + l\ev_2 \in \Pc^{\{1, 2, 3\}}_{2, l}\} = \{i~:~i\cdot(1,\ldots,1) \neq 0, i\cdot\ev_1 = r - 2, i\cdot\ev_2 = r - (l+2)\}.
\end{align}
Note that this is exactly equal to $\Rc_{2 \rightarrow l + 2}$ which is the indices of the unmatched symbols from the third systematic node as they appeared in the parity symbols downloaded from the $(l + 1)$-th parity node during the first stage.

\item Consider an integers $\tilde{i}_2 \in [0, r^{k-1}-1]$ such that the following two conditions hold.
\begin{enumerate}
\item $\tilde{i}_{2} \cdot(1,1,\ldots, 1) = 3$. 
\item $\tilde{i}_{2}\cdot\ev_1 = r-3$.
\item $\tilde{i}_{2}\cdot\ev_2 = r - 1$.
\end{enumerate}
For $j \in  [3, k-1]$, we download additional symbols from the $(j+1)$-th systematic node with their row indices belonging to the following set. 
\begin{align}
\label{eq:Sdef}
\widetilde{\Sc}^{\{1, 2, 3\}}_2 = \tilde{i}_2 + \{a_1(\ev_4 - \ev_3) + \cdots + a_{k-4}(\ev_{k-1} - \ev_3)~({\rm mod}~r)~:~(a_1,\ldots, a_{k-4}) \in [r-1]^{k-4} \}.
\end{align}
Note that we have $|\widetilde{\Sc}^{\{1, 2, 3\}}_{2}|= r^{k-4}$. Next, we identify the set of the parity symbols in the $(l + 1)$-th parity nodes where these symbols appear. Let $\widetilde{\Pc}^{\{1, 2, 3\}}_{2, l}$ denote the indices of these parity symbols in the $(l + 1)$-th parity node. Then, from the definition of the zigzag sets (cf.~\ref{eq:Zset}), we have that
\begin{align}
\widetilde{\Pc}^{\{1, 2, 3\}}_{2, l} &= \widetilde{\Sc}^{\{1, 2, 3\}}_{2} + l\ev_3 = \widetilde{\Sc}^{\{1, 2, 3\}}_{2} + l\ev_4 = \cdots = \widetilde{\Sc}^{\{1, 2, 3\}}_{2} + l\ev_{k-1} \nonumber \\
&= \{i~:~i\cdot(1,\ldots,1) = l+3, i\cdot\ev_1 = r-3, i\cdot\ev_2 =  r - 1\}.
\end{align}
We can again use \eqref{eq:Zset} to indentify the symbols from the third systematic node that appear in the parity symbols from $l + 1$-th parity symbols that are indexed by the set $\widetilde{\Pc}^{\{1, 2, 3\}}_{2, l}$.
\begin{align}
\{i~:~i + l\ev_2 \in \widetilde{\Pc}^{\{1, 2, 3\}}_{2, l}\} = \{i~:~i\cdot(1,\ldots,1) = 3, i\cdot\ev_1 = r - 3, i\cdot \ev_2 = r - (l + 1)\}.
\end{align}
Note that this is exactly equal to the unmatched symbols from the third systematic node denoted by $\widetilde{\Rc}_{2 \rightarrow l + 1}$ (cf.~\eqref{eq:R332}).

\end{enumerate}

\end{itemize}

\end{document}